\documentclass[11pt]{article}

\usepackage{graphicx}
\usepackage{amsmath}
\usepackage{amssymb}
\usepackage{cite}
\usepackage{booktabs} % table
\usepackage{makecell} % table cell wrapping
\usepackage{enumitem} % nosep
\usepackage{listings}
\usepackage{xcolor}
\usepackage{accsupp}
\usepackage[utf8]{inputenc}
\usepackage{tikz}
\usetikzlibrary{shapes.geometric, arrows.meta, positioning}
\usepackage{pgfplots}
\pgfplotsset{compat=1.18}

\usepackage[margin=1in]{geometry}
\usepackage{authblk}
\usepackage{url}
\usepackage{hyperref}
\usepackage{algorithm}
\usepackage{algpseudocode}
\algblockdefx[EVENT]{Event}{EndEvent}%
  [1]{\textbf{on event}\ \texttt{#1}\ \algorithmicdo}%
  {\algorithmicend\ \textbf{on}}
\algblockdefx[MESSAGE]{Message}{EndMessage}%
  [2][msg]{\textbf{upon message}\ <\texttt{#1}\ |\ #2>\ \algorithmicdo}%
  {\algorithmicend\ \textbf{upon}}
\algnewcommand{\Assert}{\textbf{assert}\xspace}
\MakeRobust{\Call} % for nested Calls

\usepackage{amsthm}
\newtheorem{definition}{Definition}
\newtheorem{theorem}{Theorem}
\newtheorem{proposition}{Proposition}
\newtheorem{lemma}{Lemma}
\newtheorem{corollary}{Corollary}
\newtheorem{remark}{Remark}

\tikzset{
    diagram/.style={font=\small},
    diagram box/.style={rounded corners, draw, align=center,
        minimum width=2.15cm, minimum height=0.75cm},
    diagram arrow/.style={-{Latex}, semithick},
    diagram double arrow/.style={<->, >={Latex}, semithick},
    diagram connector/.style={semithick},
    diagram label/.style={font=\scriptsize, fill=white, inner sep=1.5pt},
}

\title{The Trust-Free Aggregation Layer of the Unicity Infrastructure}

\author[1]{Risto Laanoja}
\author[1]{Mike Gault}
\author[2]{Dirk Draheim}
\author[2]{Ahto Buldas}

\affil[1]{Unicity Labs}
\affil[2]{Tallinn University of Technology}

\date{\today}

\begin{document}
\maketitle

\begin{abstract}
Unicity is a novel blockchain protocol with the ambitious goal of enabling peer-to-peer token transactions to occur off-chain, without shared ordering and execution overhead. This premise requires supporting infrastructure to guarantee that there are no parallel states of assets, or more specifically, that there is no double-spending; a property we term the \textit{unicity}. It turns out that the lack of globally shared state and ordering reduces the blockchain overhead considerably. In designing this infrastructure, no compromises were made regarding its trust assumptions. This paper details the design of the Aggregation Layer, the component responsible for producing Proofs of Inclusion and Non-inclusion to the users. We analyze its design for efficiency and evaluate the robustness of its trust and security model, and design optimal data structures and algorithms for this setup. We then identify the critical property that the Consensus Layer must verify on each round---\emph{append-only consistency}, combining prior-state preservation with coherent placement of insertions---give it a formal definition, and prove that the RSMT consistency proof enforces it over the entire certified history, assuming only collision resistance of the hash function. The security argument is layered: a self-contained single-insertion theory, in which each update is driven by a non-inclusion certificate and the post-state is computed rather than supplied by the untrusted operator, is extended to the batch-optimized transcript, which imports the shape of the touched tree from the operator and pays for its compression with explicitly checked coherent placement. The complete consistency verifier, including coherent key placement, is implemented~\cite{rsmtair} as an Algebraic Intermediate Representation (AIR) circuit on top of the Plonky3~\cite{plonky3} STARK toolkit. The implementation sustains a proving throughput in excess of $10\,000$ insertions per second on a single consumer-class CPU, with a succinct proof and tens of milliseconds verification time, and no trusted setup. Finally, we describe how the per-round proofs of all shards, together with the Consensus Layer's state transitions, are folded into a single recursively aggregated STARK: a fixed-size certificate of the correctness of the system's entire operating history, verifiable against the genesis configuration alone, without trusting the validator set.
\end{abstract}

\section{Motivation}

The foundational principle of the Unicity Network~\cite{wp} is to minimize the volume of on-chain data. This is based on the observation that shared (``on-chain'') state is unavoidable only to prevent double-spending.\footnote{Assuming no centrally controlled, non-transparent technologies such as trusted hardware wallets or Trusted Execution Environments (TEEs); and that anyone can be a recipient} The core tenets of Unicity also include minimizing trust requirements, enhancing user privacy, and providing linear scale.

In a hierarchical trustless system, the principle is that the base layer (e.g., L1 blockchain) provides decentralization, while the layers below it (e.g., rollups) present cryptographic proofs of the correctness of their operation. In scaling Unicity, we have designed efficient data structures to prove the correctness of operation of Aggregation Layer to the Consensus Layer. Based on cryptographic hashes alone, the consistency proof grows linearly with respect to the number of user transactions. This imposes a hard limit of approx. $10\,000$ transactions per second (tx/s), beyond which the networking bandwidth of the Consensus Layer becomes the bottleneck.

To scale further, we use succinct cryptographic proofs to compress the consistency proofs. This use-case is fundamentally more efficient than proving the transaction data itself, as is done in many privacy coins and ZK-rollups: the statement being proved is a single tree update, with the batch as non-public witness data, not the whole execution trace of a virtual machine. The implemented STARK does not claim witness confidentiality.

A useful scoping observation is that the consensus-relevant statement is narrower than ``the SMT was updated correctly''. The Consensus Layer needs to be convinced that \emph{no previously-recorded leaf was deleted or modified} and that \emph{every new leaf is placed coherently with its key}; Remark~\ref{rem:placement} shows, by a concrete attack, that the second half cannot be dropped. Request liveness, submitted-request accountability, and exclusion of unauthorized inserts are self-policed by the protocol layer around the public root commitment: their violation only damages the dishonest aggregator's own ability to serve users, a denial of service by an operationally replaceable component. Section~\ref{sec:scope} details this scoping argument, and Sections~\ref{sec:single-key} and~\ref{sec:consistency-theorem} prove that the resulting statement is sufficient. The narrowness of the in-circuit statement is what makes the AIR small and the proving cheap.

In this paper, we show how to scale the Aggregation Layer to $10\,000$ tx/s \emph{per shard} and beyond. This figure was the original design target; the measured proving throughput of our reference implementation exceeds it by roughly $3\times$ on a single consumer-class CPU (Section~\ref{sec:custom-air-circuit}). Shards process disjoint key ranges and prove their state transitions independently, so aggregate transaction capacity is the sum of their capacities. The BFT Core receives one succinct transition proof per advancing shard rather than one item per transaction; Section~\ref{sec:sharding-architecture} makes this scaling boundary precise.

\section{State of the Art and Comparison}
\label{sec:sota}

The Aggregation Layer occupies a fairly narrow point in the design space of authenticated data structures and succinct arguments. We briefly position it against the most directly comparable lines of work and contrast its proving objectives with those of related systems.

\subsection{Authenticated Append-Only Dictionaries}

Sparse Merkle Trees originate as an extension of binary Merkle hash trees to large key spaces: every potential key is assigned a deterministic position, and the proof that a key is absent is just the root computation taken over canonical-empty subtrees. The systematic study of their cost and engineering trade-offs is more recent. Dahlberg, Pulls, and Peeters~\cite{dahlberg2016smt} present efficient SMT constructions and caching strategies for membership and non-membership proofs with realistic key-space sizes; their analysis is part of the foundation that makes deep-keyed SMTs practical. The path-compressed (radix) variant we use in the Aggregation Layer (Section~\ref{sec:custom-air-circuit}) is a constant-factor refinement of the same line of work.

A related and operationally closer system is Certificate Transparency (CT)~\cite{rfc6962}, which maintains append-only Merkle logs of TLS certificates and provides both inclusion and \emph{consistency} proofs between successive signed tree heads. The cryptographic role of the CT consistency proof is conceptually the same as our consistency proof: it convinces an auditor that a new log state extends the previous one without rewriting it. The differences are operational: CT logs are chronologically append-only with no key-determined position, log entries are not deduplicated against double-spending semantics, and there is no per-round succinct compression of the consistency witness. Our use of SMTs adds a key-determined position (so that double-spending becomes a key collision), while the STARK compresses the per-round witness to a nearly batch-independent proof.

\subsection{Comparison with ZK-Rollups on Ethereum L1}

Ethereum's ZK-rollup ecosystem (Polygon zkEVM~\cite{polygonzkevm}, zkSync, Scroll, StarkNet, and similar systems) uses zero-knowledge proofs to compress the execution trace of an entire L2 virtual machine into a succinct proof that the L1 chain can verify. The statement being proved is, schematically, ``starting from L2 state root $r_{i-1}$ and a block of L2 transactions $B_i$, the L2 EVM produced state root $r_i$''. This requires the circuit to arithmetize the EVM instruction set, account-storage Merkle Patricia tries, signature verification, gas accounting, and so on. The resulting proving cost is large enough that production rollups run dedicated GPU farms or outsource to specialized proving markets.

The Unicity Aggregation Layer proves a much smaller statement: not ``the VM executed correctly'' but only ``the SMT changed in an allowed way''. There is no in-circuit execution, no signature verification, no contract interpretation. Validation of transactions is done off-chain by the parties who have an economic interest in the outcome, namely the recipients in the Execution Layer~\cite{exemodel}. The STARK ensures that the global no-double-spend invariant is preserved. As a result the same cryptographic toolkit (small-field STARKs, FRI, Poseidon-family hashes) delivers $\sim$10$^4$ tx/s per shard on a single CPU rather than $\sim$10$^2$ tx/s on a GPU rack.

\subsection{Comparison with Privacy-Oriented ZK Chains}

Privacy-oriented chains such as Zcash~\cite{zerocash} use ZK-SNARKs to hide transaction contents---sender, receiver, and amount---behind a commitment scheme. A spent note is identified by a public \emph{nullifier} derived from a secret, and the chain maintains a global nullifier set to prevent double-spending. The ZK proof at spend time asserts that the spender knows a commitment in the note tree and the corresponding nullifier without revealing which one. Each user-level transaction therefore carries its own succinct ZK proof.

The Unicity protocol does not use ZK for transaction privacy. Privacy of payloads and recipient information is achieved structurally, by keeping transaction execution off-chain and exposing only an unlinkable per-spend identifier to the global system; the user wallet and service-side privacy aspects of this model are detailed in the companion paper~\cite{exemodel}. A succinct STARK is used for computational integrity, compressing the Aggregation Layer's per-round consistency proof on its way to the Consensus Layer. The prover does not run a circuit per user transaction: one batch proof per round amortizes over thousands of transactions, and there is no anonymity-set scaling cost.

\subsection{Comparison with Nullifier-Tree Mixers}

Nullifier-tree mixers such as Tornado Cash~\cite{tornado} occupy a third design point. They maintain two related authenticated structures: a commitment tree of deposits (a Merkle tree of fixed-denomination notes) and a nullifier set of spent notes. Each withdrawal is accompanied by a ZK proof that the withdrawer knows a deposit commitment in the tree and the associated nullifier, with the nullifier being inserted into the nullifier set to block double-withdrawals. The anonymity set is the population of unspent notes; the privacy guarantee is fundamentally tied to that set's size.

By function, the Unicity Aggregation Layer's SMT is also a structure for preventing double-use of a one-time identifier. The mechanism differs. First, Unicity has no anonymity-set construction: spent state IDs are recorded in clear in the SMT, and unlinkability is provided by the off-chain execution model rather than by hiding the spend inside public blockchain~\cite{exemodel}. Second, the proof boundary and statement are different: in a nullifier-tree mixer a ZK proof is produced \emph{by the user} for each withdrawal and the chain only verifies it, whereas in Unicity a non-ZK STARK is produced \emph{by the aggregator} once per batched round and is not directly shown to users. Third, the underlying tree semantics differ: Tornado's commitment tree is an append-only log used for set-membership proofs, while Unicity's SMT is a keyed dictionary used both for inclusion (Proof of Unicity for the current spend) and non-inclusion (proof that a given state has not yet been spent) proofs.

\subsection{Why STARK Proving in Unicity Is Comparatively Efficient}

The combined effect of the design choices above is a proving workload that is one to three orders of magnitude smaller than in comparable arithmetization-based systems. The principal reasons:

\begin{enumerate}[nosep]
    \item \emph{Narrow in-circuit statement.} The circuit enforces only append-only consistency: prior-state preservation and coherent placement. Unique tree shape follows from those local constraints, while request liveness, accountability, and authorization remain in the surrounding protocol (Section~\ref{sec:scope}).
    \item \emph{No in-circuit transaction execution.} Validation of signatures, predicates, and business logic happens off-chain at the Execution Layer~\cite{exemodel}; the AIR does not see any of it.
    \item \emph{Batch-amortized proving.} One proof per Aggregation Layer round covers thousands of insertions; there is no per-transaction proof and no anonymity-set scaling.
    \item \emph{Local update, sublinear cost in tree capacity.} The proof attests only to the modification of a small subset of the SMT---the paths and siblings touched by the batch---rather than the whole tree. Proving effort scales with the batch size and only logarithmically with the current tree capacity, so a single shard can grow without inflating the per-round prover work.
    \item \emph{Non-public batch witness.} The AIR proves the existence of canonical inserted leaves whose digests reach the public roots, while their contents do not appear in the public statement. This keeps verifier work independent of batch size but, without trace masking, does not by itself provide confidentiality.
    \item \emph{Custom AIR over a small field.} Hand-built arithmetization with an arithmetization-friendly hash (Poseidon2) over BabyBear avoids the 32-bit-to-field translation overhead of general-purpose zkVMs.
    \item \emph{No privacy obligation in-circuit.} Transaction privacy is provided structurally; the STARK is used only for integrity and succinctness, not for zero-knowledge, removing a constraint family that ZK-rollup and mixer designs typically must satisfy.
\end{enumerate}

\section{System Architecture}

To prevent double-spending of tokens, the Unicity Infrastructure permanently\footnote{Permanent from the perspective of a token, meaning for a duration exceeding the token's lifetime.} records a unique identifier for every spent token state. This identifier is the cryptographic hash of the token state data. If a user attempts to double-spend a token, the resulting identifier will be identical to the one already recorded, making it impossible to obtain a new Proof of Unicity. A transaction is considered invalid unless it is accompanied by a valid Proof of Unicity.

The rest of the processing---executing transactions, running smart contracts, etc.---can happen at the client layer, executed by users or ``agents''. Agents are themselves the interested parties in data availability and transaction validation, and they choose the ordering of incoming messages for processing. Thus, the Unicity Infrastructure is relieved of these duties, removing a major scaling bottleneck of traditional L1 blockchains.

The Unicity Infrastructure operates in a trust-minimized way by utilizing distributed authenticated data structures and succinct cryptographic proofs. The Proof of Unicity is a fresh \emph{proof of inclusion} of the token state being spent. This can be efficiently generated based on a Merkle Tree data structure. The proof size is logarithmic with respect to the tree's capacity, making it highly efficient. If the root of the tree is securely fixed, the integrity of the rest of the tree can be verified trustlessly: it is computationally infeasible to generate a valid inclusion proof for an element not present in the tree, without changing the root, or breaking underlying cryptographic assumptions. The infrastructure also supports \textit{non-inclusion proofs}, making it possible to prove to other parties that a particular token state has not yet been spent. The Unicity Infrastructure can thus be conceptualized as a large-scale, distributed Sparse Merkle Tree (SMT). Specifically, the tree is implemented as a path-compressed radix variant, the RSMT (Section~\ref{sec:stack-verifier}), which eliminates single-child internal chains while preserving the standard property that the key uniquely determines the leaf position. Its key-determined layout also gives a canonical horizontal partition: the prefix of an identifier selects exactly one shard, while the remaining bits locate the identifier within that shard (Section~\ref{sec:sharding-architecture}).

Aggregation Layer connects to the Consensus Layer. For fully trustless operation, each request is accompanied by a cryptographic proof of SMT consistency.

\begin{figure}[!htbp]
    \centering
        \begin{tikzpicture}[diagram]
            \node[diagram box, minimum width=4cm, minimum height=1cm]
                (consensus) at (0,0) {Consensus Layer};
            \node[diagram box, minimum width=4cm, minimum height=1cm]
                (aggregation) at (0,-1.5) {Aggregation Layer};
            \node[diagram box, minimum width=4cm, minimum height=1cm]
                (execution) at (0,-3.5) {Execution Layer};
            \draw[diagram connector] (consensus.south) -- (aggregation.north);
            \draw[diagram connector] (aggregation.south) -- (execution.north);
            \draw[diagram connector, dashed] (-1,-2.5) -- (1,-2.5);
        \end{tikzpicture}
    \caption{Layered architecture of the Unicity Network.}\label{fig:layers}
\end{figure}

\subsection{Consensus Layer}
\label{sec:consensus-layer}

The Consensus Layer consists of a single logical component, the \emph{BFT Core}: a bounded-size committee of validator nodes running a Byzantine fault tolerant (BFT) consensus protocol. During a round of execution, it receives certification requests from Aggregation Layer shards, checks that each request extends the previously certified state of its shard, verifies the accompanying consistency proof, and issues a \emph{Unicity Certificate} over the updated global state. BFT consensus provides deterministic finality; its usual quorum assumption is needed for prompt progress and for uniqueness of the most recent certified tip.

The BFT Core maintains no blockchain. Its persistent protocol state is cumulative: the vector of most recently certified shard roots (combined into a single global root, Section~\ref{sec:data-flow}) and the \emph{Unicity Trust Base}, an authenticated record of the validator set and quorum rule applicable to each configuration period. There are no transaction blocks: ordering and availability of user requests are handled below, at the Aggregation and Execution Layers, and issued certificates are persisted by the parties who need them and by the public round archive (Section~\ref{sec:data-availability}).

Committee formation and its operational policy are orthogonal to the Aggregation Layer and outside the scope of this paper. For the immediate validation path we assume an authenticated Trust Base and a non-equivocating BFT quorum. Section~\ref{sec:aggregation-audit} then shows how recursive proof aggregation removes the quorum from the correctness argument for the recorded history; equivocation between otherwise valid tips remains detectable from conflicting signed artifacts.

\subsection{Aggregation Layer}
\label{sec:sharding-architecture}

The Aggregation Layer implements a global, append-only key-value store that immutably records every spent token state. More specifically, it provides the following services: 1) recording of key-value tuples where the key identifies a token state and value is recording some meta-data, 2) returning inclusion proofs of keys, 3) returning non-inclusion proofs of keys not present in the store.

The Aggregation Layer periodically has its state authenticator certified by the Consensus Layer.

\begin{figure*}[!t]
    \centering
    \begin{tikzpicture}[
        diagram,
        corebox/.style={diagram box, minimum width=14.8cm, minimum height=1.25cm},
        shardbox/.style={diagram box, minimum width=3.1cm, minimum height=1.65cm},
        certflow/.style={diagram double arrow}
    ]
        \node[corebox] (core) at (0,2.0) {\textbf{BFT Core}\\
            certifies shard-root transitions and commits $c_i=H(\mathcal{SH}_i)$ and
            $R_i=\mathsf{MerkleRoot}\bigl(\{r_{i,\sigma}\}_{\sigma\in\mathcal{SH}_i}\bigr)$};

        \node[shardbox] (s00) at (-5.4,-0.4) {\textbf{Aggregation shard}\\
            $\sigma=00$, keys $00\ldots$\\RSMT root $r_{i,00}$};
        \node[shardbox] (s01) at (-1.8,-0.4) {\textbf{Aggregation shard}\\
            $\sigma=01$, keys $01\ldots$\\RSMT root $r_{i,01}$};
        \node[shardbox] (s10) at (1.8,-0.4) {\textbf{Aggregation shard}\\
            $\sigma=10$, keys $10\ldots$\\RSMT root $r_{i,10}$};
        \node[shardbox] (s11) at (5.4,-0.4) {\textbf{Aggregation shard}\\
            $\sigma=11$, keys $11\ldots$\\RSMT root $r_{i,11}$};

        \draw[certflow] (s00.north) -- (s00.north |- core.south);
        \draw[certflow] (s01.north) -- (s01.north |- core.south);
        \draw[certflow] (s10.north) -- (s10.north |- core.south);
        \draw[certflow] (s11.north) -- (s11.north |- core.south);

        \node[diagram label, align=center]
            at (0,0.82) {root transition + consistency proof $\uparrow$\qquad
                         Unicity Certificate $\downarrow$};
    \end{tikzpicture}
    \caption{Aggregation shards beneath one logical BFT Core. The four equal prefixes are illustrative.}\label{fig:sharding}
\end{figure*}

\paragraph{Deterministic keyspace partition.}
Let the keys be $\kappa$-bit strings. A sharding scheme $\mathcal{SH}\subseteq\{0,1\}^{*}$ is a code such that for every $k\in\{0,1\}^\kappa$ there is a unique $w\in\mathcal{SH}$ (denoted by $f_{\mathcal{SH}}(k)$) such that $w\preceq k$, i.e.
every key has a unique prefix in $\mathcal{SH}$. The codewords of $\mathcal{SH}$ are used as shard identifiers and the function
$f_{\mathcal{SH}}\colon \{0,1\}^\kappa\rightarrow \mathcal{SH}$ is called the routing function.
Every shard $\sigma\in\mathcal{SH}$ maintains an independent RSMT $T_\sigma$ containing exactly the bindings whose keys satisfy $f_{\mathcal{SH}}(k)=\sigma$. Routing therefore depends only on public information: the bits of the key. No directory lookups are needed. Requests related to keys of different shards can be batched, inserted, and proved concurrently.

\paragraph{Authentication across shards.}
Let $\mathcal{SH}_i$ denote the sharding scheme used in round $i$. The latest root hashes $r_{i,\sigma}$ of all shards $\sigma\in\mathcal{SH}_i$ are the leaves of a Merkle \emph{shard-root tree}; a shard $\sigma$ that does not advance retains its previous root hash $r_{i-1,\sigma}$. The root of the shard root tree is a pair $(c_i,R_i)$, where $R_i$ is the root hash of the entire Aggregation Layer state, while $c_i=H(\mathcal{SH}_i)$ binds to the sharding scheme. A Unicity Certificate for shard $\sigma$ authenticates $c_i$, authenticates $r_{i,\sigma}$ to $R_i$ with the sibling hashes on the prefix path, and authenticates $R_i$ with the BFT Core's quorum certificate. Consequently, an inclusion or non-inclusion proof has two independent parts: a local RSMT path within $T_\sigma$, and a short shard-root path to $R_i$. Neither part grows with transaction throughput; the latter contains $|\sigma|$ sibling hashes.

\paragraph{Dynamic shard splitting.}
A busy shard can be split by replacing one prefix $\sigma$ in $\mathcal{SH}$ with the two children $\sigma\|0$ and $\sigma\|1$. The children retain the parent leaves selected by the next key bit. Because RSMT leaves commit to full keys and internal nodes commit to absolute bifurcation depths and key regions (Section~\ref{sec:stack-verifier}), each child root is either the hash of an existing parent subtree or the canonical empty root; retained nodes need not be rehashed and the certified history need not be replayed. The split is activated as a configuration change, after which both children evolve and prove independently. A local RSMT certificate may lose the junction at the split depth while its shard-root certificate gains one sibling hash, moving authentication work from the shard-local tree to the common root without introducing another aggregation tier.

\paragraph{Horizontal capacity and proof aggregation.}
If shard $\sigma$ sustains insertion rate $q_\sigma$, the Aggregation Layer sustains
\[
    Q=\sum_{\sigma\in\mathcal{SH}} q_\sigma
\]
subject to the BFT Core's capacity for shard summaries. A shard round containing thousands of insertions exports only $(r_{i-1,\sigma},r_{i,\sigma},\pi_{i,\sigma})$, where the succinct consistency proof $\pi_{i,\sigma}$ has verifier cost independent of the number of stored leaves and is verified independently of other shards. Thus, adding a shard adds storage, batching, and proving capacity without increasing any existing shard's workload; the Core's work grows with the number of advancing shards, not with $Q$. These verifications are mutually independent and can be parallelized. The off-critical-path construction of Section~\ref{sec:aggregation-audit} subsequently folds all changed-shard proofs and the corresponding $R_i$ transitions into one recursively updated proof whose public statement remains fixed-size regardless of the number of shards or elapsed rounds.

\paragraph{Per-shard consistency.}
Once a key is set, it must remain there permanently. Every shard transition is therefore accompanied by a cryptographic proof that pre-existing keys were neither removed nor modified and that new keys were placed at the positions determined by the keys themselves. The direct hash-based witness grows with the insertion batch, whereas the STARK construction of Section~\ref{sec:custom-air-circuit} makes the public statement just about the old and new shard roots and gives succinct verification independent of the batch size. Correct verification and BFT chaining of those roots make each shard an untrusted, cryptographically checked service.

\subsection{Execution Layer}

The Execution Layer is responsible for executing transactions and other business logic, using the services of the Aggregation Layer and Unicity in general. Its formal security model---including double-spending resistance, non-blocking, and service- and user-side privacy---is developed in~\cite{exemodel}. Programmable ownership predicates extend this model with off-chain smart-contract functionality~\cite{predicates}.

\subsection{Data Flow}
\label{sec:data-flow}

Figure~\ref{fig:dataflow} summarizes the flow of authenticated data through the system, from the shards to the auditing verifier. In BFT Core round $i$, each shard $\sigma$ that advances submits a certification request carrying its previous root $r_{i-1,\sigma}$, its new root $r_{i,\sigma}$, and the consistency proof $\pi_{i,\sigma}$; roots of non-advancing shards are carried forward. The BFT Core verifies the submitted proofs, combines the current shard roots into the global root $R_i$, reaches consensus, and returns Unicity Certificates to the advancing shards. Since the BFT Core keeps no blockchain, the per-round artifacts---shard roots, consistency proofs, certificates, and Trust Base entries---are published to the public round archive (Section~\ref{sec:data-availability}). From the archive, an aggregation prover folds the history into a single constant-size proof $\Pi_n$ (Section~\ref{sec:aggregation-audit}), which any party can verify against the genesis configuration.

\begin{figure}[!htbp]
    \centering
\begin{tikzpicture}[diagram,
  dfbox/.style={diagram box, minimum width=5.6cm}]
  \node[dfbox] (verifier) {Auditing user};
  \node[dfbox, below=0.7cm of verifier] (prover) {Aggregation prover (permissionless)};
  \node[dfbox, below=0.7cm of prover] (arch) {Public round archive\\ {\scriptsize shard transitions, $c_i$, $R_i$, certificates}};
  \node[dfbox, below=0.7cm of arch] (core) {BFT Core\\ {\scriptsize cumulative shard roots and Trust Base}};
  \node[dfbox, below=1.1cm of core] (shards) {Aggregation shards $\sigma = 1, \ldots, m$};
  \draw[diagram arrow] ([xshift=-1.2cm]shards.north) -- node[diagram label, left] {$(r_{i-1,\sigma}, r_{i,\sigma}, \pi_{i,\sigma})$} ([xshift=-1.2cm]core.south);
  \draw[diagram arrow] ([xshift=1.6cm]core.south) -- node[diagram label, right] {certificates} ([xshift=1.6cm]shards.north);
  \draw[diagram arrow] (core) -- node[diagram label, right] {per-round artifacts} (arch);
  \draw[diagram arrow] (arch) -- node[diagram label, right] {fetch, fold recursively} (prover);
  \draw[diagram arrow] (prover) -- node[diagram label, right] {$(\Pi_n, \mathsf{pv}_n)$} (verifier);
\end{tikzpicture}
    \caption{Data flow from the shards to the auditing verifier. The pragmatic validation path (Section~\ref{sec:practical}) uses the certificates directly; the audit path uses the aggregate proof $\Pi_n$.}\label{fig:dataflow}
\end{figure}

\section{Security Model of the Aggregation Layer}

The Aggregation Layer implements a distributed, authenticated, append-only dictionary data structure. It authenticates incoming state transfer certification requests (store new key-value pairs $(k,v)$ to the accumulator) by verifying that the sender possesses the private key corresponding to the public key that identifies the current token owner. The specific authentication protocol is beyond the scope of this paper.

After every round $i$, the state of the aggregator represents a partial function $M_i$ such that for every key $k\in\{0,1\}^\kappa$,
either $M_i(k)=v$ (if a pair $(k,v)$ was accumulated) or $M_i(k)=\bot$ (if no pairs $(k,v)$ were accumulated). As $M_i$ is required to be a partial function, from $M_i(k)=v\neq\bot$ it follows that no other pairs $(k,v')$ with $v'\neq v$ can be accumulated. Initially, the partial function is $M_0=\emptyset$, i.e. $M_0(k)=\bot$ for every key $k$.

It is also required that for every $i$, the function $M_{i+1}$ extends the function $M_i$ (denoted by $M_i\subseteq M_{i+1}$), i.e. if $M_i(k)=v\neq\bot$, then $M_{i+1}(k)=v$. This property formalizes the append-only property of the aggregator.

The partial function $M_i$ is certified by computing a digest $r_i=\mathsf{Dig}(M_i)$, which is implemented as the root hash $\mathsf{dig}(T)$ of the RSMT $T=\mathsf{Tree}(M_i)$ of the set of all pairs $(k,v)\in M_i$. Two kinds of cryptographic proofs are used:
\begin{itemize}
\item \emph{Inclusion/Non-inclusion proofs} that certify statements $M_i(k)=v$ and are verified against the digest $r_i$, given the pair $(k,v)$. If $v\neq\bot$, the proof is called an inclusion proof, otherwise it is called a non-inclusion proof. These proofs are used by users to check the validity of token transfers.
\item \emph{Consistency proofs} that certify the statements $M_i\subseteq M_{i+1}$ and are verified against the digests $r_i$ and $r_{i+1}$. These proofs are used by the Consensus Layer to check that the previous contents of the aggregator were not modified during the round $i+1$.
\end{itemize}

If $M_i(k)=v$, then the inclusion/non-inclusion proof is easy to construct and if $M_{i}\subseteq M_{i+1}$ then the consistency proof is easy to construct.

We require the following security properties:

\begin{definition}[Accumulator interface]
\begin{enumerate}[nosep,label=(\roman*)]
  \item From any consistency proof that verifies against $(r_i,r_{i+1})$, partial functions $M_i\subseteq M_{i+1}$ with $\mathsf{Dig}(M_i)=r_i$ and $\mathsf{Dig}(M_{i+1})=r_{i+1}$ are efficiently computable, or a collision of the underlying hash function is.
  \item From $(k,r,v,v')$ with $v\neq v'$ and two inclusion/non-inclusion proofs, one for $M(k)=v$ and one for $M(k)=v'$, that both verify against the same digest $r$, a collision of the underlying hash function is efficiently computable.
\end{enumerate}
\label{def:append-only-accumulator}
\end{definition}

After each batch of additions, the new root of the Aggregation Layer's SMT is certified by the BFT Core, ensuring its uniqueness and immutability. This provides a secure trust anchor for all consistency, inclusion, and non-inclusion proofs. The idealized Consensus Layer is modeled as Algorithm~\ref{alg:consensuslayer}.

\begin{figure}[!htbp]
    \centering
\begin{tikzpicture}[diagram, node distance=2cm and 2cm]

    % Nodes
    \node[diagram box, minimum width=5cm] (consensus) {Consensus Layer};
    \node[diagram box, minimum width=3cm, below=of consensus] (smt) {Aggregator (RSMT)};
    \node[diagram box, minimum width=5cm, below=of smt] (users) {Token Users};

    % Arrows
    \draw[diagram arrow] ([xshift=-0.1cm]smt.north) -- ++(0,1) node[diagram label, left] {$(r_i, r_{i-1}, \pi)$} -- ([xshift=-0.1cm]consensus.south);
    \draw[diagram arrow] ([xshift=0.1cm]consensus.south) -- node[diagram label, right] {$c = (i, r_i, r_{i-1}; s_{\textsf{cl}})$} ([xshift=0.1cm]smt.north);
    \draw[diagram arrow] ([xshift=-0.4cm]users.north) -- node[diagram label, left] {$B_i = ((k_1,v_1), \ldots, (k_j,v_j))$} ([xshift=-0.4cm]smt.south);
    \draw[diagram arrow] ([xshift=0.4cm]smt.south) -- ++(0,-1) node[diagram label, right] {
            $\substack{
                \text{inclusion / non-inclusion certificates}\\
                \text{against certified } r_i
            }$
            } -- ([xshift=0.4cm]users.north);

\end{tikzpicture}
    \caption{Security model of the Aggregation Layer.}\label{fig:model}
\end{figure}

For efficiency reasons client requests are processed in batches; the tree is re-calculated and the tree root is certified when a batch is closed. A batch of client requests is denoted as $B_i$. At the end of each batch, the Aggregation Layer produces its summary root hash $r_i$ and sends it to the Consensus Layer for certification. A certification request $(r_i, r_{i-1}, \pi)$ includes: 1) the new state root hash, 2) the previous state root hash, and 3) a consistency proof of the changes made during the batch.

The Consensus Layer certifies the request only if it uniquely \textit{extends} a previously certified state root and the consistency proof is valid. It returns a certificate $c = (i, r_i, r_{i-1}; s_{\textsf{cl}})$, where $s_{\textsf{cl}}$ is a signature from the Consensus Layer (e.g., a threshold signature from the consensus nodes or a proof of inclusion in a finalized block).

Each state can be extended only once, which prevents forks within the Aggregation Layer. Each subsequent round extends the most recently certified state.
We model the Consensus Layer as an oracle, as shown in Algorithm~\ref{alg:consensuslayer}.

\begin{algorithm}[tbh]
  \caption{Consensus Layer modeled as an oracle}\label{alg:consensuslayer}
  \begin{algorithmic}[0]
    \Function{Initialize}{\null}
        \State $r_- \gets \bot$
        \State $i \gets 0$
    \EndFunction
    \Function{CertificationRequest}{$r_i, r_{i-1}, \pi$}
        \If {$(r_{i-1} \ne  r_-) \lor \lnot\Call{valid}{\pi, r_i, r_{i-1}}$}
            \State \Return $\bot$
        \EndIf
        \State $r_- \gets r_i$
        \State $i \gets i+1$
        \State $s_{\textsf{cl}} \gets \textsf{sig}_\textsf{cl}(i, r_i, r_{i-1})$
        \State \Return $c = (i, r_i, r_{i-1}; s_{\textsf{cl}})$
    \EndFunction
  \end{algorithmic}
\end{algorithm}

The SMT provides users with inclusion and non-inclusion proofs. Each proof is anchored to a state root certified by the Consensus Layer.

The Consensus Layer must guarantee data availability. If recent state roots were lost, it would become impossible to reject duplicate state transition requests, potentially allowing malicious actors to double-spend against an old, un-extendable state. The Aggregation Layer itself does not require an internal consensus mechanism; protocols like Raft could be used for replication and coordination among its redundant nodes. The decentralized consensus is provided by the external Consensus Layer.

If each state transition is accompanied by a cryptographic consistency proof (see Section~\ref{sec:consistency-proof}), the Aggregation Layer can be considered trustless.

\subsection{``Maximalist'' Security Assumptions}
\label{sec:maximalist}

In this model, we assume that users wish to validate all aspects of system operation that are relevant to their own assets, accepting no assumptions beyond standard cryptographic ones. This level of trustlessness is close to the strong guarantees introduced by Bitcoin~\cite{bitcoin}, where each ``client'' functions as a full validator, starting from downloading and verifying the blockchain from the genesis block. In Unicity there is no blockchain to replay; the equivalent guarantee is obtained from a single succinct proof.

Upon receiving a token, the user must be able to efficiently verify the following:

\begin{enumerate}[nosep]
    \item The token is valid (as elaborated elsewhere),
    \item The Aggregation Layer has not forked,
    \item The Aggregation Layer has not certified conflicting states of the same token.
\end{enumerate}

The second and third points are covered by the aggregated history proof of Section~\ref{sec:aggregation-audit}: a fixed-size STARK, updated periodically, attesting that the entire sequence of certified states---from the genesis configuration up to a recent round $n$---forms a single non-forking chain in which every round of every shard satisfies append-only consistency (Definition~\ref{def:aoc}, established by Theorem~\ref{thm:history}). In particular, no inclusion proof can exist for a token state that is absent from the recorded history. Verifying the aggregate proof requires only the genesis parameters of the network instance and takes milliseconds on commodity hardware; neither replay of history nor trust in the validator set is involved.

The aggregate proof is produced with a latency of minutes to hours behind the certified tip, so maximalist verification is not instantaneous. It is best understood as an audit mechanism: it retrospectively confirms---or refutes, with the failure round pinpointed---the correctness of operation of the Consensus and Aggregation Layers. The validation procedure is given in Section~\ref{sec:maxi-validation}.

\subsection{Practical Security Assumptions}
\label{sec:practical}

For immediate finality, we assume that a BFT quorum follows the protocol and does not collude maliciously with the Aggregation Layer. Under this standard consensus assumption, users obtain substantially better latency than the delayed full-history audit. BFT layer forking (case 2 above) or certification of conflicting states (case 3 above) produces strong cryptographic evidence that can be processed out of the critical path of serving users.

In this scenario, a transaction is finalized, and an inclusion proof is returned within a few seconds, allowing the transaction to be independently verified—without consulting external data\footnote{Previously obtained Root of Trust is used to validate future transactions}—within the same timeframe.

The Root of Trust is the Unicity Trust Base: the chain of epoch records of the BFT Core (Section~\ref{sec:consensus-layer}). These records grow slowly---one aggregated-signature record per validator-set change---and validating a certificate requires checking its signatures against the applicable epoch record only. This validation path is available immediately, within the round time; the delayed audit path of the maximalist model complements rather than replaces it.

\subsection{Scope of the In-Circuit Statement}
\label{sec:scope}

A core engineering choice in the design of the Aggregation Layer's consistency proof is to push as much of the per-round security argument out of the cryptographic proof as possible, leaving only a small kernel inside the circuit. The value of the scoping is engineering economy---it keeps the circuit small. Drawing the boundary correctly, however, requires care: it is tempting to place \emph{all} placement-related properties outside the kernel, on the argument that a misplaced insertion only damages the aggregator's own ability to serve proofs. Remark~\ref{rem:placement} refutes that argument by a concrete attack; the boundary is drawn below and proved sufficient in Sections~\ref{sec:single-key} and~\ref{sec:consistency-theorem}.

Let $M_{i-1}, M_i \colon \{0,1\}^* \to \{0,1\}^* \cup \{\bot\}$ be the partial maps of recorded key-value bindings committed by, respectively, the previous and the new state roots $r_{i-1}, r_i$ (Section~\ref{sec:consistency-formal} makes ``committed by'' precise). We say the round update $r_{i-1} \to r_i$ satisfies \emph{prior-state preservation} iff
\begin{equation}
  \forall k \in \mathrm{dom}(M_{i-1})\colon\quad M_i(k) = M_{i-1}(k).
  \label{eq:psp}
\end{equation}
That is, every key already recorded under $r_{i-1}$ is bound to the same value under $r_i$. Equivalently, the round adds a (possibly empty) set of fresh keys and modifies nothing.

\begin{definition}[In-circuit statement, informal]
\label{def:minimal}
For the Aggregation Layer, the cryptographic per-round consistency proof must enforce, given authentic roots $(r_{i-1}, r_i)$: (i) prior-state preservation \eqref{eq:psp}, and (ii) \emph{coherent placement}: every leaf inserted in the round sits at the position determined by its own key. Definition~\ref{def:aoc} states this formally.
\end{definition}

Clause (ii) is important. Because the maps $M_i$ are realized by \emph{provability} against the root---a binding is ``recorded'' exactly when an inclusion proof for it verifies---an incoherently placed insertion changes which bindings are provable, for old keys as well as new ones. Remark~\ref{rem:placement} shows that dropping (ii) admits verifying round transitions under which an already-recorded key becomes re-recordable with a different value: the equivocation that \eqref{eq:psp} is meant to exclude.

The remaining desirable properties of the Aggregation Layer stay outside the kernel, and can be enumerated and accounted for as follows:

\begin{description}[nosep]
    \item[Request liveness] (\emph{every well-formed user request is eventually recorded}). A censoring aggregator does not violate the in-circuit statement; it merely fails to serve some users. The protocol mitigates this by replication (highly-available cluster) and by allowing users to resubmit through alternative aggregators in the same shard.
    \item[Submitted-request accountability] (\emph{omission of an accepted request is detectable}). The consistency proof binds every element of its witness batch to the new root, but it does not prove that this non-public witness equals the external queue of authenticated requests. A recipient detects omission by requesting an inclusion certificate against the certified root.
    \item[No phantom inserts] (\emph{nothing is recorded that was not a user request}). At worst, an aggregator does free recording work for itself or third parties; phantom entries carry fresh keys (coherent placement forbids re-recording), so they cannot affect any honest user's tokens.
    \item[Unique Patricia shape] (\emph{the post-state tree is the unique tree prescribed by the data structure's rules for its key set}). This does not need to be postulated as a separate in-circuit obligation: the local validity conditions already force that shape (Lemma~\ref{lem:shape} and Proposition~\ref{prop:unique}), and Theorem~\ref{thm:history} preserves those conditions from the empty genesis tree.
    \item[Append-only consistency] (\emph{\eqref{eq:psp} together with coherent placement; Definition~\ref{def:aoc}}). \emph{Critical.} If this fails, the operator can rewrite history and double-spending becomes possible. This is the property that must be cryptographically enforced \emph{before} round certification.
\end{description}

Restricting the in-circuit statement to append-only consistency has several useful consequences. First, the statement need only assert the existence of canonical inserted leaves; the batch contents are non-public witness data rather than public inputs (Section~\ref{sec:custom-air-circuit}). Semantically the public transition is the old and new root pair, augmented in the proof envelope by an old-root absence flag, a protocol identifier, and scalar trace-shape metadata. Second, the circuit does not have to \emph{reconstruct} the unique global tree shape from scratch, which is by far the most constraint-heavy aspect of any naive arithmetization (cf.~\cite{rsmtair}, ``cost-of-canonical detour''); with the region-committing hash of Section~\ref{sec:stack-verifier}, coherent placement is checked by local constraints on the touched nodes only. Third, the proof scales with the batch size alone, not with the total tree capacity.

\section{Design Rationale for the Authenticated Dictionary}
\label{sec:tree-design}

The authenticated dictionary lies on two critical paths. Users obtain inclusion and non-inclusion certificates and store them in the tokens, in many copies, so their size should be optimal --- the limit is one sibling digest per actual branch. The Consensus Layer verifies a consistency proof for every batch and for every shard; the proof should be short, the algorithm efficient and inexpensive to arithmetize. At the same time, the operator is untrusted, so neither compactness nor speed may weaken append-only consistency. These requirements lead to the design below.

\subsection{Indexed and Radix Trees}

A conventional indexed SMT (\emph{Sparse Merkle Tree}) assigns every key to a leaf of a fixed-depth binary tree. Empty subtrees are implicit, and storage and proofs can omit single-child chains. Its main security advantage is that the placement is part of the tree semantics, so a prover cannot choose an alternative path for a key. A batch-consistency witness supplies the untouched sibling subtrees around the batch, and the verifier reconstructs the old root with the batch positions empty and the new root with them populated. A post-order stack encoding makes this construction reasonably regular, although stack entries and branch instructions must still carry absolute positions in the indexed tree.\footnote{Experiment: \href{https://github.com/ristik/ndsmt-experiments/blob/main/ndsmt_op.py}{https://github.com/.../ndsmt\_op.py}}

An alternative is breadth-first traversal, where the witness is a per-layer array of ordered missing siblings necessary to reconstruct the old and new roots. This results in a one-pass (per root) linear verification algorithm, without lookups.\footnote{Experiment: \href{https://github.com/ristik/ndsmt-experiments/blob/main/ndsmt_lvl.py}{https://github.com/.../ndsmt\_lvl.py}}

A Patricia or radix tree makes path compression part of the committed structure instead. It stores only leaves and bifurcations, rather than making an intractably large tree manageable through optimizations. Hashing the full key with the value anchors each leaf, and committing an internal node to its absolute bifurcation depth makes every existing node immutable when a later insertion splits an edge above it. The touched post-state can then be serialized with three basic instructions: a preserved subtree \(S\), a new leaf \(L\), and a junction \(N\). A stack machine evaluates the same transcript twice, passing preserved hashes through on the old side and constructing the new root on the other. This is a natural shape for tabular AIR arithmetization: the verifier is a single forward scan (passes to compute pre-state and post-state are natural to combine) with bounded stack state and one local rule per instruction.

Path compression also gives compact user proofs. A certificate needs one sibling hash per junction on the key-directed path, a bitmap (or list) of their depths, and the leaf at which the path ends; on random keys the number of such junctions grows logarithmically with the capacity. Because key-directed descent is deterministic, the same format serves non-inclusion: the path simply ends at a different leaf. But, the extra complexity is that the canonical data-structure construction must be validated by the verifier, specifically the proof must establish that each prover-supplied junction is in the correct key-space region.

\subsection{Authenticating Placement}

Committing a radix junction only to its two children and bifurcation depth gives the smallest structural verifier,\footnote{Experiment: \href{https://github.com/ristik/ndsmt-experiments/blob/main/ndrsmt3o.py}{https://github.com/.../ndrsmt3o.py}} but it is insufficient for trustless operation. As Remark~\ref{rem:placement} demonstrates, an operator can attach a preserved tree to the wrong side of a new junction and insert a second value for an already recorded key on the key-directed side. Both root computations still agree with the transcript. Thus, preservation of hashes is not yet preservation of the authenticated dictionary: the new edges must also be coherent with their keys.

There are two natural ways to authenticate this missing information. A \emph{range commitment} includes the smallest and largest key below every junction. It lets the verifier authenticate a split using the adjacent child boundaries, but the cost reaches user proofs: each sibling must carry two additional 256-bit keys as well as its digest.\footnote{Experiment: \href{https://github.com/ristik/ndsmt-experiments/blob/main/rsmt4.py}{https://github.com/.../rsmt4.py}}
A \emph{region commitment} instead includes the junction's absolute key prefix \(p\) alongside its depth \(d\). Whenever an insertion creates an edge, the consistency verifier checks locally that the child is deeper and that its region extends \(p\|0\) or \(p\|1\), according to its side. Old--old edges need not be reopened because the hash freezes checks made in earlier rounds.\footnote{Experiment: \href{https://github.com/ristik/ndsmt-experiments/blob/main/rsmt6.py}{https://github.com/.../rsmt6.py}}

The region is derivable from the terminal leaf's key during certificate verification, so it adds no bytes to either inclusion or non-inclusion. It need not appear in every \(N\) instruction either: the consistency verifier derives the parent region from authenticated child advice and requires both children to be opened where a new junction meets a preserved subtree. The resulting proof retains the small post-order machine while closing the placement attack.\footnote{Experiment: \href{https://github.com/ristik/ndsmt-experiments/blob/main/rsmt6a.py}{https://github.com/.../rsmt6a.py}} Formally, the region commitment is load-bearing only for the transcript-based consistency proof: certificate verification and the single-insertion update protocol never check it (Remark~\ref{rem:derived-vs-checked}).

\begin{figure*}[t]
    \centering
    \includegraphics[width=0.92\textwidth]{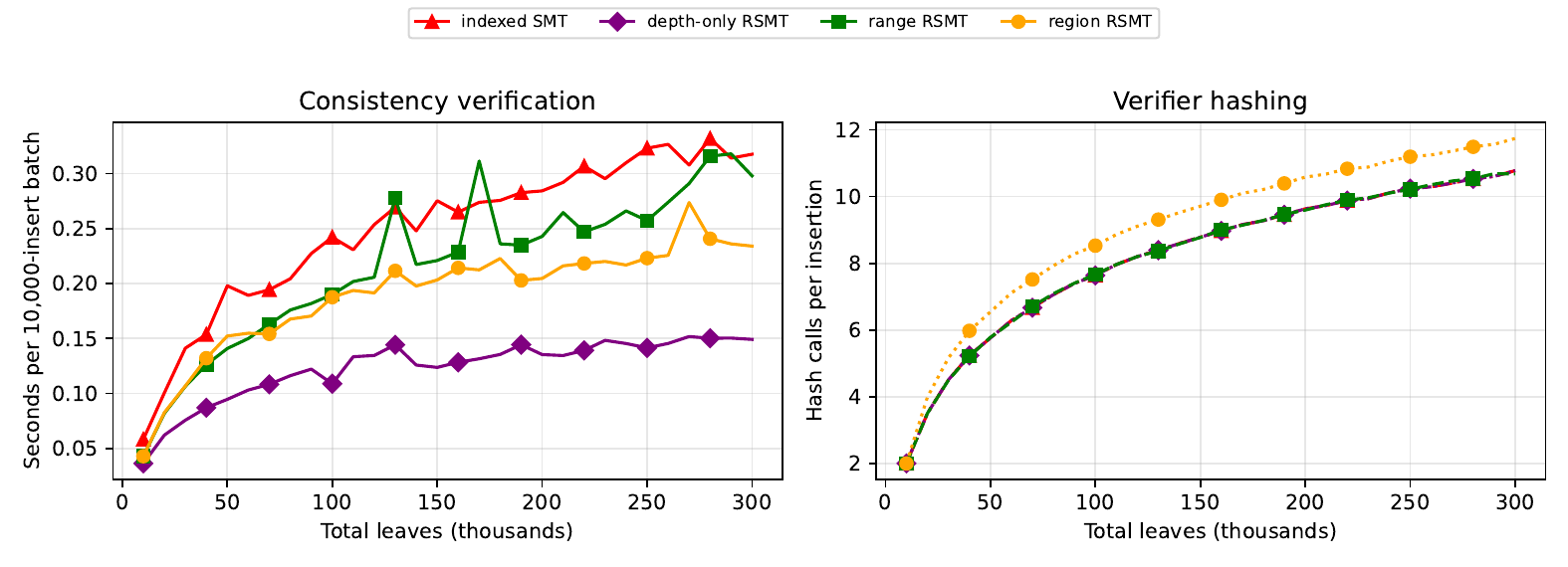}
    \caption{Native consistency-proof costs for the principal design points. Each round inserts \(10\,000\) uniformly distributed keys; 30 rounds grow the tree to \(300\,000\) leaves. The left panel reports the median of three Python/SHA-256 verification runs and the right panel counts native hash invocations. The depth-only RSMT is included as a performance lower bound but does not enforce coherent placement.}
    \label{fig:tree-design-consistency}
\end{figure*}

\begin{figure}[t]
    \centering
    \includegraphics[width=.5 \textwidth]{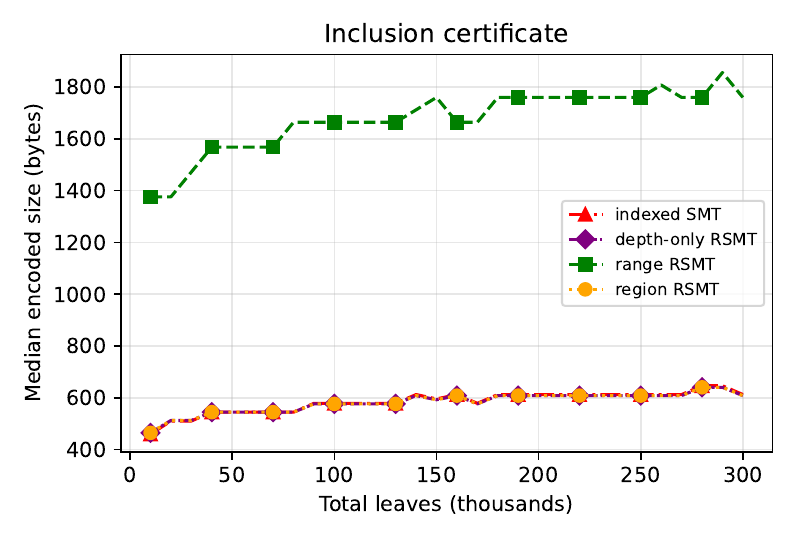}
    \caption{Median inclusion-certificate size, sampled over 32 inserted keys per round in the workload of Figure~\ref{fig:tree-design-consistency}. The region commitment preserves the compact depth-only format; authenticated ranges enlarge every sibling operand.}
    \label{fig:tree-design-inclusion}
\end{figure}

\subsection{Resulting Choice}

Figures~\ref{fig:tree-design-consistency} and~\ref{fig:tree-design-inclusion} show the relevant compromise. The indexed SMT remains a sound and useful baseline, while the depth-only radix construction marks the compactness attainable if placement is ignored. Authenticating subtree ranges restores soundness but makes the inclusion certificate substantially larger. Region commitments restore the same security with local, regular checks and leave inclusion certificates at the depth-only size. We therefore select the region-committing RSMT: it makes the untrusted update check arithmetization-friendly, preserves the shortest of the secure inclusion formats considered, and touches only the frontier affected by the current batch. The next section specifies this construction and its verifier.

\section{Consistency Proof}
\label{sec:consistency-proof}

A \emph{consistency proof} is a cryptographic construction that validates one round of operation of the append-only accumulator. Round $i$ inserts the batch $B_i = ((k_1, v_1), \ldots, (k_j, v_j))$ of key--value pairs into the tree; the root digest before the round is $r_{i-1}$, and after the round it is $r_i$. The consistency proof $\pi_i$ is a transcript of the part of the tree touched by the insertions. The Consensus Layer verifies $(\pi_i, r_{i-1}, r_i)$ before certifying $r_i$ (Algorithm~\ref{alg:consensuslayer}).

The property that verification enforces is \emph{append-only consistency} (Definition~\ref{def:minimal}; formally Definition~\ref{def:aoc}): every binding recorded under $r_{i-1}$ is recorded unchanged under $r_i$, and every new leaf sits at the position determined by its own key. Section~\ref{sec:stack-verifier} specifies the tree, the proof encoding, and the verifier; Section~\ref{sec:consistency-formal} develops the formal model of trees and certificates; Section~\ref{sec:single-key} first settles the single-insertion case, in which a non-inclusion certificate alone drives the update; Section~\ref{sec:consistency-theorem} then proves soundness and completeness of the batched form, assuming only collision resistance of the hash function.

\subsection{The Region-Committing Tree and Its Verifier}
\label{sec:stack-verifier}

The accumulator is implemented as the \emph{RSMT} (radix sparse Merkle tree): a binary Patricia tree over fixed-length keys, path-compressed, with a region-committing node hash. The consistency proof is a transcript of the touched part of the tree, executed by a stack machine.

Let $H \colon \{0,1\}^{*} \to \{0,1\}^{\lambda}$ be the hash function. A leaf hashes the full key together with the value, and a junction hashes its two children $x$ and $y$ together with its bifurcation depth $d$ and its \emph{region} $p$:
\begin{align*}
  \mathsf{LeafHash}(k,v) &:= H(\texttt{0x00} \parallel k \parallel v),\\
  \mathsf{NodeHash}(d,p,x,y) &:= H(\texttt{0x01} \parallel \langle d\rangle \parallel \langle p\rangle \parallel x \parallel y).
\end{align*}
The region $p \in \{0,1\}^d$ is the key prefix that addresses the node: every key below the junction extends $p$, with $p\|0$ leading left and $p\|1$ leading right; $\langle\cdot\rangle$ are fixed-length injective encodings. A leaf's region is its full key, and its depth is $\kappa = 256$. (We write $\varrho[j]$ for bit $j$ of a region and $\varrho[0..d)$ for its first $d$ bits.) The domain-separation prefixes make leaf and junction hashes disjoint; the depth commitment prevents re-attaching a subtree at a different level; the region commitment pins the node to its key-space position; and fixed child positions prevent swapping. Crucially, depth and region are \emph{absolute} properties of a node---splitting an edge above it changes neither---so inserting new keys never re-hashes any pre-existing node: an insertion creates only new leaf and junction hashes. This immutability of pre-state hashes is what the proof encoding exploits. (The hash function is an instantiation detail: the portable reference implementation uses SHA-256, the AIR a Poseidon2 sponge.) User-facing certificates transmit only junction depths and sibling digests, plus the leaf at which the path terminates; the verifier derives the junction regions from that leaf's key and binds the path to the target key by one bit comparison per junction. Inclusion and non-inclusion certificates are the same object, encoded as \texttt{InclusionCertificate} and differentiated by whether the terminal key equals the target key (Definition~\ref{def:inclusion-cert}). Read a second time, a non-inclusion certificate also determines the root that results from inserting $k$, which makes it a single-insertion consistency proof in its own right (Section~\ref{sec:single-key}).

The consistency proof $\pi$ for round $i$ is the post-order serialization of the \emph{touched} part of the post-state tree, over a five-opcode alphabet:
\begin{description}[nosep]
  \item[$S(c)$:] an untouched pre-state subtree, as an opaque digest;
  \item[$O(d', p', c_l, c_r)$:] an untouched pre-state junction, opened one level; the verifier hashes the opening, so the annotations are collision-bound to the digest;
  \item[$O_L(k', v')$:] an untouched pre-state leaf, opened;
  \item[$L$:] a leaf newly inserted in this round; its key and value are not part of the proof, but are consumed from the batch $B_i$;
  \item[$N(d)$:] a junction at bifurcation depth $d$, over the two preceding stack entries. Junction regions are not present in the proof; the verifier derives them.
\end{description}

The verifier (Algorithm~\ref{alg:stackverify}) executes the stream against a stack of triples: the subtree's pre-state digest, its post-state digest, and an \emph{advice tuple} $(\delta, \varrho)$---the depth and region of the subtree's top node, $(\kappa, k)$ for leaves, absent ($\bot$) for opaque $S$ entries. The batch is sorted by the verifier itself into tree-traversal order and must be strictly increasing, so the prover has no freedom in associating $L$ opcodes with batch elements. Processing $N(d)$ combines three rule families:
\begin{enumerate}[nosep]
  \item \emph{Edge coherence.} Every child that carries advice must satisfy $\delta > d$ and $\varrho[d] = \beta$, where $\beta$ is the child's side. The first $d$ bits of $\varrho$ yield the junction's region $p$; all advised children must agree on it, and at least one child must be advised, so $p$ is always defined.
  \item \emph{Confinement of opaque subtrees.} If the junction is absent from the pre-state---its old side arises by pass-through or $\varnothing$---then \emph{both} children must carry advice. An opaque $S$ may therefore appear only under pre-existing junctions, whose edges were checked in the round that created them and are frozen by the hashes; wherever a preserved subtree meets a new junction, the prover must present its opened form.
  \item \emph{Digest algebra.} The pre-state digest of a junction follows a four-way rule: if both children existed in the pre-state, the junction existed too and its old digest is recomputed; if exactly one child existed, the junction is new and the old digest of the existing child passes through unchanged; if neither existed, the old digest is the empty marker $\varnothing$. The post-state digest is always recomputed. The pass-through cases are what keep consistency proofs short: no hashing is performed on the parts of the tree that did not change.
\end{enumerate}

Collect the four-way pre-state rule in the helper
\begin{equation}
\mathsf{HashPreState}(d,p,x,y) =
  \begin{cases}
    \varnothing, & (x,y)=(\varnothing,\varnothing),\\
    y,           & x=\varnothing,\ y\neq\varnothing,\\
    x,           & x\neq\varnothing,\ y=\varnothing,\\
    \mathsf{NodeHash}(d,p,x,y), & x,y\neq\varnothing.
  \end{cases}
\label{eq:hash-pre-state}
\end{equation}
Thus a newly introduced junction disappears when reconstructing the pre-state, whereas a junction already present in that state is re-hashed.

\begin{algorithm}[t!]
  \caption{Stack-machine verification of the RSMT consistency proof}\label{alg:stackverify}
  \footnotesize
  \begin{algorithmic}[0]
    \Function{VerifyConsistency}{$\pi, r_{i-1}, r_i, B$}
      \If{$B = [\,]$}
        \State \Return $r_{i-1} = r_i \;\land\; \pi = [\,]$
      \EndIf
      \State $B \gets \Call{SortTraversalOrder}{B}$
      \State \textbf{assert} keys of $B$ in $\{0,1\}^{\kappa}$, strictly increasing
      \State $\mathit{st} \gets [\,]$; \quad $b \gets 0$ \Comment{stack; batch index}
      \For{opcode $o$ \textbf{in} $\pi$}
        \If{$o = S(c)$} \Comment{opaque subtree}
          \State \textbf{assert} $c \in \{0,1\}^{\lambda}$
          \State \Call{Push}{$\mathit{st}, (c, c, \bot)$}
        \ElsIf{$o = O(d', p', c_l, c_r)$} \Comment{opening}
          \State \textbf{assert} $0 \le d' < \kappa \land p' \in \{0,1\}^{d'} \land c_l, c_r \in \{0,1\}^{\lambda}$
          \State $c \gets \Call{NodeHash}{d', p', c_l, c_r}$
          \State \Call{Push}{$\mathit{st}, (c, c, (d', p'))$}
        \ElsIf{$o = O_L(k', v')$} \Comment{opened leaf}
          \State \textbf{assert} $k' \in \{0,1\}^{\kappa}$
          \State $c \gets \Call{LeafHash}{k', v'}$
          \State \Call{Push}{$\mathit{st}, (c, c, (\kappa, k'))$}
        \ElsIf{$o = L$} \Comment{new leaf}
          \State $(k, v) \gets B[b]$; \quad $b \gets b + 1$
          \State \Call{Push}{$\mathit{st}, (\varnothing,\; \Call{LeafHash}{k, v},\; (\kappa, k))$}
        \ElsIf{$o = N(d)$} \Comment{junction}
          \State \textbf{assert} $0 \le d < \kappa$
          \State $(c^{\mathsf o}_r, c^{\mathsf n}_r, a_r) \gets \Call{Pop}{\mathit{st}}$
          \State $(c^{\mathsf o}_l, c^{\mathsf n}_l, a_l) \gets \Call{Pop}{\mathit{st}}$
          \State $p \gets \bot$
          \For{$x \in \{l, r\}$ with side bit $\beta \in \{0, 1\}$}
            \If{$a_x = (\delta_x, \varrho_x) \neq \bot$} \Comment{coherence}
              \State \textbf{assert} $\delta_x > d \;\land\; \varrho_x[d] = \beta$
              \State \textbf{assert} $p = \bot \;\lor\; p = \varrho_x[0..d)$
              \State $p \gets \varrho_x[0..d)$
            \EndIf
          \EndFor
          \State \textbf{assert} $p \neq \bot$
          \If{$c^{\mathsf o}_l = \varnothing \lor c^{\mathsf o}_r = \varnothing$} \Comment{new junction}
            \State \textbf{assert} $a_l \neq \bot \;\land\; a_r \neq \bot$
          \EndIf
          \State $c^{\mathsf o} \gets \Call{HashPreState}{d,p,c^{\mathsf o}_l,c^{\mathsf o}_r}$
          \State $c^{\mathsf n} \gets \Call{NodeHash}{d,p,c^{\mathsf n}_l,c^{\mathsf n}_r}$
          \State \Call{Push}{$\mathit{st}, (c^{\mathsf o}, c^{\mathsf n}, (d, p))$}
        \Else
          \State \Return $0$ \Comment{unknown opcode}
        \EndIf
      \EndFor
      \State \textbf{assert} $b = |B| \;\land\; |\mathit{st}| = 1$
      \State \Return $\mathit{st}[0].(c^{\mathsf o}, c^{\mathsf n}) = (r_{i-1}, r_i)$
    \EndFunction
  \end{algorithmic}
\end{algorithm}

Verification accepts iff the opcode stream and the batch are both fully consumed, the stack holds exactly one triple, and its digest pair equals $(r_{i-1}, r_i)$; any failed assertion, malformed opcode, stack underflow, or batch overrun rejects. The verifier is a short loop with no recursion and no control flow beyond opcode dispatch---one uniform rule per opcode; its memory use is bounded by the tree depth, and the post-order format makes verification a natural streaming computation. This regularity is deliberate: it is what the AIR arithmetization of Section~\ref{sec:custom-air-circuit} exploits, one trace row per opcode with a fixed constraint family. In-circuit, the equations that derive a junction region from its children are also the edge-coherence constraints, and the same canonical region limbs feed the junction hash. The arithmetization implements this complete rule set, including operand-domain checks, canonical encodings, confinement at new junctions, and coherent key placement.

For the honest prover, the transcript differs from a bare depth-only encoding only in the openings: one opened junction or leaf per split edge---the node the insertion descends past last. Measured with the reference implementation on a batch of $1\,000$ insertions into a tree of $10^4$ keys, the transcript is within $15\%$ of the size of the depth-only encoding; the derived regions cost nothing on the wire.

\subsection{Insertion Example}
\label{sec:worked-insertion}

Consider the two-leaf pre-state in Figure~\ref{fig:worked-insertion}. The root junction separates $a$ from $b$ at depth $d_2$. A new leaf $c$ is added with key $k_c$; it follows the route towards $b$, shares the longer prefix $p_1$ with $k_b$, and diverges from it at depth $d_1$, where $d_2<d_1<\kappa$. Thus $k_a[d_2]=0$, $k_b[d_2]=k_c[d_2]=1$, while $k_b[d_1]=0$ and $k_c[d_1]=1$. Inserting $c=(k_c,v_c)$ splits the edge above $b$: it creates a new leaf and a new junction $N(d_1,p_1)$, but changes no pre-existing leaf hash.

\begin{figure}[htb]
  \centering
  \begin{tikzpicture}[
    diagram,
    junc/.style={diagram box, minimum width=2.25cm, minimum height=0.75cm},
    leaf/.style={diagram box, minimum width=2.15cm, minimum height=0.75cm},
    preserved/.style={fill=blue!7},
    opened/.style={fill=yellow!18},
    fresh/.style={fill=green!12},
    recomputed/.style={fill=orange!12},
    treeedge/.style={diagram arrow}
  ]
    % ---------- Pre-state (left) ----------
    \node[font=\bfseries] at (0,5.6) {Pre-state: $r_{i-1}$};
    \node[junc, preserved] (oldroot) at (0,4.5) {$N(d_2,p_2)$};
    \node[leaf, preserved] (olda) at (-1.55,3.0) {Leaf $a$\\$h_a$};
    \node[leaf, preserved] (oldb) at (1.55,3.0) {Leaf $b$\\$h_b$};
    \draw[treeedge] (oldroot) -- node[above left] {$0$} (olda);
    \draw[treeedge] (oldroot) -- node[above right] {$1$} (oldb);

    % ---------- Transition arrow ----------
    \node at (4.4,4.0) {$\Rightarrow$};
    \node[align=center] at (4.4,3.4) {insert\\$c=(k_c,v_c)$};

    % ---------- Post-state (right) ----------
    \node[font=\bfseries] at (8.8,5.6) {Post-state: $r_i$};
    \node[junc, recomputed] (newroot) at (8.8,4.5) {Recomputed node\\$N(d_2,p_2)$};
    \node[leaf, preserved] (newa) at (7.05,3.0) {Leaf $a$\\$S(h_a)$};
    \node[junc, fresh] (newj) at (10.55,3.0) {New node\\$N(d_1,p_1)$};
    \node[leaf, preserved] (newb) at (9.45,1.35) {Leaf $b$\\$O_L(k_b,v_b)$};
    \node[leaf, fresh] (newc) at (11.65,1.35) {New leaf $c$\\$L$};
    \draw[treeedge] (newroot) -- node[above left] {$0$} (newa);
    \draw[treeedge] (newroot) -- node[above right] {$1$} (newj);
    \draw[treeedge] (newj) -- node[above left] {$0$} (newb);
    \draw[treeedge] (newj) -- node[above right] {$1$} (newc);
  \end{tikzpicture}
  \caption{Inserting $c$ splits the edge above $b$. Leaf $a$ is opaque because it remains below the pre-existing junction $N(d_2,p_2)$. Leaf $b$ must be opened because it becomes a child of the new junction $N(d_1,p_1)$; this supplies the advice needed to check the new edge.}
  \label{fig:worked-insertion}
\end{figure}

Let $h_x=\mathsf{LeafHash}(k_x,v_x)$ for $x\in\{a,b,c\}$ and
\[
  h_{N_1}=\mathsf{NodeHash}(d_1,p_1,h_b,h_c).
\]
The old and new roots are, respectively,
\begin{align*}
  r_{i-1}&=\mathsf{NodeHash}(d_2,p_2,h_a,h_b),\\
  r_i&=\mathsf{NodeHash}(d_2,p_2,h_a,h_{N_1}).
\end{align*}
Post-order traversal gives the five-opcode transcript
\[
  \pi=\bigl[S(h_a),\ O_L(k_b,v_b),\ L,\ N(d_1),\ N(d_2)\bigr].
\]
The leaf $b$ is an opening rather than an opaque $S$: it meets a junction absent from the pre-state, so confinement requires both children of that junction to carry advice. Leaf $a$ remains opaque because the outer junction existed previously.

Table~\ref{tab:worked-stack} traces Algorithm~\ref{alg:stackverify} over the transcript: for every opcode it lists the checks performed, the pre-state and post-state digest computations, and the stack contents afterwards. The asymmetry of the two readings is visible in the junction rows: the post-state side hashes every junction, whereas on the pre-state side the new junction $N(d_1)$ contributes no hash---$\mathsf{HashPreState}$ passes $h_b$ through, so the pre-state root is assembled exactly as if leaf $b$ still hung on the outer edge. The advice fields make the placement checks concrete: the opened leaf $b$ supplies the advice that the new junction $N(d_1)$ requires, the opaque $S(h_a)$ supplies none, and $N(d_2)$ accepts it only because both of its old child digests are present.

\begin{table}[H]
\centering
\caption{Stack-machine verification of the insertion in Figure~\ref{fig:worked-insertion}. Each stack entry is a triple $(c^{\mathsf o}, c^{\mathsf n}, a)$ of pre-state digest, post-state digest, and advice $(\delta,\varrho)$; entries are listed bottom to top, and a junction opcode pops the two topmost entries as its left and right child.}
\label{tab:worked-stack}
\footnotesize
\renewcommand{\arraystretch}{1.3}
\setlength{\tabcolsep}{4pt}
\begin{tabular}{@{}
  >{\raggedright\arraybackslash}p{1.6cm}
  >{\raggedright\arraybackslash}p{3.5cm}
  >{\raggedright\arraybackslash}p{3.5cm}
  >{\raggedright\arraybackslash}p{3.5cm}
  >{\raggedright\arraybackslash}p{3.1cm}@{}}
\toprule
\textbf{Opcode} & \textbf{Checks} & \textbf{Pre-state side $c^{\mathsf o}$} & \textbf{Post-state side $c^{\mathsf n}$} & \textbf{Stack after} \\
\midrule
$S(h_a)$ &
  $h_a \in \{0,1\}^{\lambda}$ &
  $h_a$, opaque; not recomputed &
  $h_a$; the subtree is untouched &
  $(h_a,\,h_a,\,\bot)$ \\
\addlinespace
$O_L(k_b,v_b)$ &
  $k_b \in \{0,1\}^{\kappa}$ &
  $h_b = \mathsf{LeafHash}(k_b,v_b)$, recomputed from the opened preimage &
  $h_b$; the leaf persists &
  \makecell[l]{$(h_a,\,h_a,\,\bot)$\\ $(h_b,\,h_b,\,(\kappa,k_b))$} \\
\addlinespace
$L$ &
  consumes $(k_c,v_c)$, the next element of the sorted batch &
  $\varnothing$; no leaf at $k_c$ in the pre-state &
  $h_c = \mathsf{LeafHash}(k_c,v_c)$ &
  \makecell[l]{$(h_a,\,h_a,\,\bot)$\\ $(h_b,\,h_b,\,(\kappa,k_b))$\\ $(\varnothing,\,h_c,\,(\kappa,k_c))$} \\
\addlinespace
$N(d_1)$ \newline pops $b$, $c$ &
  coherence: $\kappa > d_1$, $k_b[d_1]=0$, $k_c[d_1]=1$; agreement: $p_1 = k_b[0..d_1) = k_c[0..d_1)$; confinement: $c^{\mathsf o}$ of $c$ is $\varnothing$, so both children must be advised---hence $b$ was sent opened &
  $\mathsf{HashPreState}$\newline${}(d_1,p_1,h_b,\varnothing) = h_b$; the junction is absent from the pre-state and the surviving child passes through &
  $\mathsf{NodeHash}(d_1,p_1,h_b,h_c)$\newline${} = h_{N_1}$; the new junction is hashed &
  \makecell[l]{$(h_a,\,h_a,\,\bot)$\\ $(h_b,\,h_{N_1},\,(d_1,p_1))$} \\
\addlinespace
$N(d_2)$ \newline pops $a$, $N_1$ &
  coherence: $d_1 > d_2$, $p_1[d_2]=1$; region: $p_2 = p_1[0..d_2)$; the unadvised left child is permitted, as both old digests are present and the junction pre-exists &
  both children present, so the pre-state junction is re-hashed: \newline $\mathsf{NodeHash}(d_2,p_2,h_a,h_b)$\newline${} = r_{i-1}$ &
  $\mathsf{NodeHash}(d_2,p_2,h_a,h_{N_1})$\newline${} = r_i$ &
  $(r_{i-1},\,r_i,\,(d_2,p_2))$ \\
\bottomrule
\end{tabular}
\end{table}

The proof accepts because the transcript and batch are exhausted and the final stack consists of the single entry $(r_{i-1},r_i,(d_2,p_2))$. Its first two fields are compared directly with the certified previous and current roots.

\begin{remark}[Why the region commitment is necessary]
\label{rem:placement}
Everything in Algorithm~\ref{alg:stackverify} beyond the digest algebra exists to enforce coherent placement. Consider the minimal alternative: the junction hash commits the depth alone, $h_N = H(\texttt{0x01} \| \langle d\rangle \| h_l \| h_r)$, the alphabet shrinks to $S$, $L$, $N$, and only the digest algebra is checked. Let the pre-state with certified root $r_{i-1}$ contain the recorded binding $(k, v)$, and pick any depth $d^{*}$ with $k[d^{*}] = 1$. The three-opcode stream $(S(r_{i-1}), L, N(d^{*}))$ then verifies for the batch $\{(k, v')\}$, $v' \ne v$: the old side of $N(d^{*})$ passes $r_{i-1}$ through, and the new side hangs the entire pre-state tree on the $0$-side of the new junction, while the key-directed descent for $k$ leads to the $1$-side. Under the certified $r_i$, a one-step inclusion proof for $(k, v')$ verifies, and the preserved binding $(k, v)$ is no longer provable: cross-round equivocation on $k$, invisible to every structural check, including the full-history audit of Section~\ref{sec:aggregation-audit}. Algorithm~\ref{alg:stackverify} rejects the stream: the preserved child of a new junction must be presented opened, and edge coherence requires its region to extend $p\|0$ while the new leaf's key extends $p\|1$---impossible, since the preserved subtree's region is a prefix of $k$. Theorem~\ref{thm:round} shows that every attack of this kind is excluded.
\end{remark}

\subsection{Formal Model}
\label{sec:consistency-formal}

We fix the key length $\kappa = 256$, key space $\mathcal{K} = \{0,1\}^{\kappa}$, a value space $\mathcal{V} \subseteq \{0,1\}^{*}$, and a hash function $H \colon \{0,1\}^{*} \to \{0,1\}^{\lambda}$. For bit strings, $p \preceq q$ denotes that $p$ is a prefix of $q$, $q[j]$ is the $j$-th bit, $q[0..d)$ is the first $d$ bits, and $\mathsf{lcp}(x,y)$ is the longest common prefix of $x$ and $y$; $\langle\cdot\rangle$ are fixed-length injective encodings; $\varnothing$ is a distinguished constant outside $\{0,1\}^{\lambda}$. All statements below are unconditional reductions: each concludes either the stated property, or that two distinct strings with equal $H$-images (a \emph{collision}) are computable in time linear in the size of the objects at hand. We do not repeat this disjunction in every statement.

\subsubsection{Tree commitments}

\begin{definition}[Valid trees]
\label{def:tree}
Trees are generated by
\[ T ::= \mathsf{Leaf}(k, v) \mid \mathsf{Node}(d, p, T_l, T_r) \]
with $k \in \mathcal{K}$, $v \in \mathcal{V}$, $0 \le d < \kappa$, $p \in \{0,1\}^{d}$; $\varepsilon$ denotes the empty tree. Digests:
\begin{align*}
\mathsf{dig}(\mathsf{Leaf}(k,v)) &= \mathsf{LeafHash}(k,v),\\
\mathsf{dig}(\mathsf{Node}(d,p,l,r)) &= \mathsf{NodeHash}(d,p,\mathsf{dig}(l),\mathsf{dig}(r)),
\end{align*}
and $\mathsf{dig}(\varepsilon) = \varnothing$. Region and depth: $\mathsf{reg}(\mathsf{Leaf}(k,v)) = k$ and $\mathsf{reg}(\mathsf{Node}(d,p,\cdot,\cdot)) = p$; $\mathsf{dep}(\mathsf{Leaf}) = \kappa$ and $\mathsf{dep}(\mathsf{Node}(d,\ldots)) = d$. A tree is \emph{valid} if every junction $\mathsf{Node}(d,p,l,r)$ in it satisfies: $l$ and $r$ are nonempty, $\mathsf{dep}(l) > d$, $\mathsf{dep}(r) > d$, $p\|0 \preceq \mathsf{reg}(l)$, and $p\|1 \preceq \mathsf{reg}(r)$. The trees $\varepsilon$ and $\mathsf{Leaf}(k,v)$ are valid.
\end{definition}

The validity conditions are local: one condition per edge. The next three results separate their global consequences from the stronger persistence facts needed only for completeness. For a finite partial map $M$ and a bit string $q$, write
\[
  M_q = \{\,k \mapsto v \in M \mid q \preceq k\,\}
\]
for the restriction of $M$ to the key-space cone below $q$.

\begin{lemma}[Local-to-global shape]
\label{lem:shape}
Let $T$ be a valid tree. Then:
\begin{enumerate}[nosep,label=(\roman*)]
  \item every leaf key extends the region of each of its ancestors, and distinct leaves carry distinct keys; hence $T$ represents the partial map $\mathsf{map}(T) = \{k \mapsto v \mid \mathsf{Leaf}(k,v) \in T\}$;
  \item at every junction, the region $p$ is the longest common prefix of the keys below it;
  \item the left-to-right leaf order of $T$ is the strictly increasing key order;
  \item every subtree $U$ of $T$ with region $\varrho = \mathsf{reg}(U)$ satisfies $\mathsf{map}(U) = \mathsf{map}(T)_\varrho$ (the \emph{cone identity}).
\end{enumerate}
\end{lemma}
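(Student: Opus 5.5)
The plan is a single structural induction on the valid tree $T$, proving (i)--(iv) together. The base cases $\varepsilon$ and $\mathsf{Leaf}(k,v)$ are immediate. A leaf has no ancestors and its region is its own key. Its map is $\{k\mapsto v\}$, which is also $\mathsf{map}(T)_k$, since $k$ is the only key in the map and $k\preceq k$. For the inductive step take $T=\mathsf{Node}(d,p,T_l,T_r)$. Validity of $T$ gives validity of $T_l$ and $T_r$, so the induction hypothesis applies to both subtrees.

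The first step is a key auxiliary claim. In a valid tree $U$, every leaf key below $U$ extends $\mathsf{reg}(U)$. For leaves this is trivial. For a junction, the induction hypothesis gives that every key below $l$ extends $\mathsf{reg}(l)$. Validity gives $p\|0\preceq\mathsf{reg}(l)$, and transitivity of $\preceq$ then gives $p\|0\preceq k$ for every key in $l$; symmetrically $p\|1\preceq k$ for every key in $r$. Applying this at each ancestor along the path yields the first half of (i). For distinctness, keys from different children of a junction differ at bit $d$. Keys within one child are distinct by the induction hypothesis. So all leaf keys are distinct, $\mathsf{map}(T)$ is a well-defined partial map, and it is the disjoint union $\mathsf{map}(T_l)\uplus\mathsf{map}(T_r)$.

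The remaining parts follow from this claim.

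\begin{enumerate}[nosep,label=(\roman*)]
\setcounter{enumi}{1}
\item Both children are nonempty, so there is a key $k_0$ extending $p\|0$ and a key $k_1$ extending $p\|1$. Their longest common prefix is exactly $p$. Every key below the junction extends $p$, so the longest common prefix of all of them is $p$. This holds at the root, and at inner junctions by the induction hypothesis.
\item Every left key has bit $0$ at position $d$ and every right key has bit $1$, and all share the prefix $p$. Hence every left key is lexicographically smaller than every right key. Concatenating the two strictly increasing sequences from the induction hypothesis gives a strictly increasing sequence.
\item It suffices to prove $\mathsf{map}(T)=\mathsf{map}(T)_{p}$ at the root, and to show that if $U$ is a child of $T$ then $\mathsf{map}(U)=\mathsf{map}(T)_{\mathsf{reg}(U)}$. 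Iterating down the path and using $M_{q}{}_{q'}=M_{q'}$ for $q\preceq q'$ then handles arbitrary subtrees. The root case is the claim above. For $U=T_l$ with $\varrho=\mathsf{reg}(T_l)$, we have $p\|0\preceq\varrho$, so no key of $T_r$ extends $\varrho$, because those keys have bit $1$ at position $d<|\varrho|$. Therefore $\mathsf{map}(T)_\varrho=\mathsf{map}(T_l)_\varrho=\mathsf{map}(T_l)$, the last equality being the root case applied to $T_l$. The right child is symmetric.
\end{enumerate}

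The main obstacle I expect is bookkeeping rather than depth: keeping the bit-position arguments correct. In particular, for (iv) one must check that $|\varrho|>d$ so that bit $d$ of $\varrho$ is defined. This follows from $p\|0\preceq\varrho$. Note that depth validity is not needed for the length bound; it is relevant only because $\mathsf{reg}$ of a leaf has length $\kappa>d$. No collision-resistance assumption enters: the lemma is purely combinatorial about trees, not digests.
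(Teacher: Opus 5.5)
Your proof is correct and follows the paper's argument: the same edge-by-edge induction for (i), the same two-witness argument for (ii), and the same bit-$d$ split at each junction for (iii). The only difference is presentational. For distinctness in (i) and for the converse inclusion in (iv), the paper takes the lowest common ancestor of two leaves. You use the same fact (keys on opposite sides of a junction differ at bit $d$) one level at a time, through structural induction and by composing cone restrictions $(M_q)_{q'}=M_{q'}$ along the path.
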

\begin{proof}
(i) Validity gives $p\|\beta \preceq \mathsf{reg}(\text{child})$ at every edge, so by induction every node's region, and every leaf key, extends the region of each ancestor. Two leaves with the same key $k$ would sit on opposite sides of their lowest common ancestor $\mathsf{Node}(d,p,\cdot,\cdot)$, forcing both $p\|0 \preceq k$ and $p\|1 \preceq k$, which is impossible.

(ii) Every key below $\mathsf{Node}(d,p,l,r)$ extends $p$. Both children are nonempty, so some key below extends $p\|0$ and some extends $p\|1$. The common prefix therefore ends after exactly the $d$ bits of $p$.

(iii) At every junction, keys on the left have bit $0$ and keys on the right have bit $1$ at position $d$, and both sides agree on the first $d$ bits. So every left key precedes every right key, and the claim follows by induction.

(iv) By (i), every key below $U$ extends $\varrho$, so $\mathsf{map}(U) \subseteq \mathsf{map}(T)_\varrho$. Conversely, suppose a leaf with key $k$, $\varrho \preceq k$, lies outside $U$. The lowest common ancestor of that leaf and $U$ is a junction at some depth $d < \mathsf{dep}(U) = |\varrho|$, and by (i) applied to its two sides, $\varrho$ and $k$ differ at bit $d$, contradicting $\varrho \preceq k$.
\end{proof}

\begin{proposition}[Unique representation]
\label{prop:unique}
Every finite partial map $M$ has exactly one valid RSMT, written $\mathsf{Tree}(M)$.
\end{proposition}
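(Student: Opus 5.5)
We prove existence and uniqueness separately, both by strong induction on $|M|$, with uniqueness driven by Lemma~\ref{lem:shape}. For $|M|=0$ the only valid tree is $\varepsilon$, since every $\mathsf{Leaf}$ and $\mathsf{Node}$ contains at least one leaf. For $|M|=1$, say $M=\{k\mapsto v\}$, the tree $\mathsf{Leaf}(k,v)$ is valid. Any junction would have two nonempty children, hence at least two leaves. By Lemma~\ref{lem:shape}(i) those leaves carry distinct keys, so $\mathsf{map}$ would have at least two elements. Hence $\mathsf{Leaf}(k,v)$ is the only valid tree representing $M$.

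\emph{Existence} for $|M|\ge 2$. Let $p=\mathsf{lcp}(\mathrm{dom}(M))$ and $d=|p|$. The keys are distinct and of length $\kappa$, so $d<\kappa$. By maximality of the common prefix, some key has bit $0$ at position $d$ and some has bit $1$. Hence $M_{p\|0}$ and $M_{p\|1}$ are nonempty, strictly smaller than $M$, and partition it. By induction we get valid trees $l=\mathsf{Tree}(M_{p\|0})$ and $r=\mathsf{Tree}(M_{p\|1})$, and we take $\mathsf{Node}(d,p,l,r)$. The only new validity conditions are at the root edge: $\mathsf{dep}(l)>d$ and $p\|0\preceq\mathsf{reg}(l)$, and symmetrically for $r$.
\begin{itemize}[nosep]
\item If $l$ is a leaf, both conditions are immediate.
\item Otherwise, by Lemma~\ref{lem:shape}(ii) applied to the valid tree $l$, $\mathsf{reg}(l)$ is the longest common prefix of the keys of $M_{p\|0}$. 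All of these keys extend $p\|0$, so $p\|0\preceq\mathsf{reg}(l)$ and hence $\mathsf{dep}(l)=|\mathsf{reg}(l)|\ge d+1>d$.
\end{itemize}
The tree $\mathsf{Node}(d,p,l,r)$ represents $M$ because its leaves are exactly the leaves of $l$ and $r$.

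\emph{Uniqueness} for $|M|\ge 2$. Let $T$ be any valid tree with $\mathsf{map}(T)=M$. Then $T$ is a junction $\mathsf{Node}(d',p',l',r')$, since $\varepsilon$ and a single leaf represent maps of size at most one. By Lemma~\ref{lem:shape}(ii), $p'$ is the longest common prefix of all keys of $M$, so $p'=p$ and $d'=d$. The subtrees $l'$ and $r'$ are valid because validity is inherited by subtrees. Their regions extend $p\|0$ and $p\|1$, so their keys lie in the respective cones. Since together they carry all of $M$, we get $\mathsf{map}(l')=M_{p\|0}$ and $\mathsf{map}(r')=M_{p\|1}$; alternatively one can read this off the cone identity, Lemma~\ref{lem:shape}(iv). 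The induction hypothesis then gives $l'=l$ and $r'=r$.

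I expect the main obstacle to be the bookkeeping between regions and depths: validity requires $p\in\{0,1\}^d$ and leaf depth $\kappa$, and one must check that the child's region really extends $p\|\beta$ while its depth strictly exceeds $d$. Both follow from Lemma~\ref{lem:shape}(ii) and the fact that $|\mathsf{reg}|=\mathsf{dep}$ for every node. No hashing is involved, so the statement is purely combinatorial and needs no collision disjunction.
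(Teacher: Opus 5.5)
Your proof is correct and follows the paper's own argument. Existence comes from induction on the key set with a split at the longest common prefix, and uniqueness comes from Lemma~\ref{lem:shape}(ii) forcing the root region, after which the cone restriction and the induction hypothesis force the children; you also justify the child depth and region bounds at the new root edges, which the paper only asserts.
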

\begin{proof}
Induction on the key set. The empty map gives $\varepsilon$, and a singleton forces its leaf. Otherwise let $p$ be the longest common prefix of all keys. Its next bit partitions $M$ into the two nonempty maps $M_{p\|0}$ and $M_{p\|1}$. By induction these maps have unique valid trees $T_l$ and $T_r$. Their root regions extend $p\|0$ and $p\|1$, respectively, and their root depths exceed $|p|$, so $\mathsf{Node}(|p|,p,T_l,T_r)$ is valid.

Conversely, Lemma~\ref{lem:shape}(ii) forces any valid root representing $M$ to have region $p$; validity forces its two children to represent exactly $M_{p\|0}$ and $M_{p\|1}$; and the induction hypothesis forces those children. Values label the determined leaves and do not affect the shape.
\end{proof}

\begin{lemma}[Extension persistence]
\label{lem:persist}
Let $M$ and $B$ be finite partial maps with disjoint domains. Then:
\begin{enumerate}[nosep,label=(\roman*)]
  \item $\mathsf{Tree}(M)$ contains a junction with region $p$ iff $M_{p\|0} \neq \emptyset \neq M_{p\|1}$;
  \item every junction of $\mathsf{Tree}(M)$ persists, with the same depth and region, in $\mathsf{Tree}(M \uplus B)$;
  \item if a node with region $\varrho$ occurs in either $\mathsf{Tree}(M)$ or $\mathsf{Tree}(M \uplus B)$ and $B_\varrho = \emptyset$, then it occurs in both trees and the two rooted subtrees are identical.
\end{enumerate}
\end{lemma}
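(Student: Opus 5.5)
The plan is to derive all three parts from the recursive construction in the proof of Proposition~\ref{prop:unique}, together with the cone identity of Lemma~\ref{lem:shape}(iv). Throughout, write $N = M \uplus B$. Since the domains are disjoint, $N$ is a finite partial map and $N_q = M_q \uplus B_q$ for every bit string $q$.

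For (i), we argue in two directions.
\begin{enumerate}[nosep]
  \item \emph{Forward.} Suppose $\mathsf{Tree}(M)$ has a junction $\mathsf{Node}(d,p,l,r)$. Both children are nonempty by validity. By Lemma~\ref{lem:shape}(iv), applied to $l$ and $r$, we get $\mathsf{map}(l) = M_{\mathsf{reg}(l)}$ with $p\|0 \preceq \mathsf{reg}(l)$. Hence $\emptyset \ne \mathsf{map}(l) \subseteq M_{p\|0}$, and symmetrically $M_{p\|1} \ne \emptyset$.
  \item \emph{Converse.} Pick keys $k_0, k_1 \in \mathrm{dom}(M)$ extending $p\|0$ and $p\|1$. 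Their leaves have a lowest common ancestor $\mathsf{Node}(d,q,\cdot,\cdot)$. By Lemma~\ref{lem:shape}(i), $q\|0 \preceq k_0$ and $q\|1 \preceq k_1$, so $q = \mathsf{lcp}(k_0,k_1) = p$.
\end{enumerate}

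Part (ii) follows from (i) and monotonicity. From $M \subseteq N$ we get $M_{p\|\beta} \subseteq N_{p\|\beta}$, so the nonemptiness conditions transfer from $M$ to $N$. The depth is the same because in a valid tree a junction's depth equals $|p|$, since $p \in \{0,1\}^d$.

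Part (iii) is the main step. Take a node $U$ with region $\varrho$ in one of the two trees, and assume $B_\varrho = \emptyset$.
\begin{enumerate}[nosep]
  \item By the cone identity, $\mathsf{map}(U)$ equals $M_\varrho$ or $N_\varrho$, according to which tree contains $U$. These two maps coincide because $B_\varrho = \emptyset$. Call the common map $C$; it is nonempty because $U$ is nonempty.
  \item By Proposition~\ref{prop:unique} and the cone identity, the subtree is $\mathsf{Tree}(C)$, a valid tree in its own right. So it suffices to show that the other tree also contains a node with region $\varrho$. Its rooted subtree then represents the same map $C$ and must be the identical tree by uniqueness.
  \item If $U$ is a junction, it suffices to show $C_{\varrho\|0} \neq \emptyset \neq C_{\varrho\|1}$, which holds by (i) applied to the tree containing $U$. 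Then (i) applied to the other tree gives a junction with region $\varrho$ there. This step uses (i) for both $M$ and $N$; the proof of (i) above works for any finite map.
  \item If $U$ is a leaf, $\varrho = k$ is a full key and $C = \{k \mapsto v\}$, so $k \mapsto v$ belongs to both $M$ and $N$. The leaf is present in both trees by Lemma~\ref{lem:shape}(i).
\end{enumerate}

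I expect the only delicate point to be identifying subtrees with $\mathsf{Tree}(\cdot)$ of their cone map. The argument relies on validity being hereditary to subtrees, so that a subtree is a valid tree for $C$ and uniqueness applies. It also relies on two subtrees with the same region and map being literally equal, which holds because uniqueness fixes depths, regions and values.
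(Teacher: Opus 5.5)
Your proof is correct and has the same skeleton as the paper's. The forward half of (i) uses the cone identity, (ii) follows from monotonicity of the condition in (i) together with depth $=|p|$, and (iii) shows the node exists in the other tree and then applies the cone identity and Proposition~\ref{prop:unique}.

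The one genuinely different step is the converse of (i). The paper descends from the root. At each visited junction with region $q$ it uses Lemma~\ref{lem:shape}(ii) to get $q\preceq p$, then follows bit $p[|q|]$ until it reaches the junction with region $p$. You instead take the lowest common ancestor of two witnessing leaves and read off $q=\mathsf{lcp}(k_0,k_1)=p$. Your route is shorter. The paper's descent additionally shows where the junction sits, on the path from the root that follows the bits of $p$, but the lemma does not need this.

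One small imprecision: a priori you only know the two leaves lie on opposite sides of the ancestor. The validity condition on its two edges then gives $q\|\beta\preceq k_0$ and $q\|(1-\beta)\preceq k_1$; Lemma~\ref{lem:shape}(i) alone does not give the side bits. This already forces $q=\mathsf{lcp}(k_0,k_1)$, so nothing changes.

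In (iii) you handle both directions at once through $C=M_\varrho=(M\uplus B)_\varrho$, whereas the paper splits by direction. The split makes visible that a node of $\mathsf{Tree}(M)$ exists in $\mathsf{Tree}(M\uplus B)$ with no hypothesis on $B$: by (ii) for a junction, or because the binding persists for a leaf. The hypothesis $B_\varrho=\emptyset$ is then needed for existence only in the reverse direction, and for identity of the subtrees in both. Otherwise the two arguments are equivalent.
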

\begin{proof}
(i) For the forward direction, the cone identity (Lemma~\ref{lem:shape}(iv)) and the two nonempty children supply keys extending $p\|0$ and $p\|1$. Conversely, suppose both restricted maps are nonempty and descend from the root of $\mathsf{Tree}(M)$. At a current junction with region $q$, the two selected key sets lie below it, so Lemma~\ref{lem:shape}(ii) gives $q \preceq p$. If $q \neq p$, all keys extending $p$ select the same child at depth $|q|$; recurse into that child. Depths increase, the two selected sets prevent termination at a leaf, and the descent therefore reaches the junction with region $p$.

(ii) The condition in (i) is monotone under extension from $M$ to $M \uplus B$, and a junction's depth is the length of its region.

(iii) Suppose first that the node occurs in $\mathsf{Tree}(M)$. A junction persists by (ii), while a leaf persists because its binding remains in the extended map. Conversely, suppose the node occurs in $\mathsf{Tree}(M \uplus B)$. If it is a junction, the two sides of $(M \uplus B)_\varrho$ are nonempty; because $B_\varrho=\emptyset$, clause (i) places the same junction in $\mathsf{Tree}(M)$. If it is a leaf, its binding belongs to $M$ and the same leaf occurs in $\mathsf{Tree}(M)$. In either direction, by the cone identity the two rooted subtrees represent the same map
\[
  (M \uplus B)_\varrho=M_\varrho.
\]
Proposition~\ref{prop:unique} makes them identical.
\end{proof}

For a valid tree $T$ and a batch $B$ whose keys are pairwise distinct and absent from $\mathsf{map}(T)$, we identify $B$ with its induced partial map and write
\[
  T \oplus B = \mathsf{Tree}(\mathsf{map}(T) \uplus B).
\]

\begin{definition}[Opening trees]
\label{def:opening}
Opening trees are generated by
\[ F ::= \mathsf{Hole}(c) \mid \mathsf{OLeaf}(k, v) \mid \mathsf{ONode}(d, p, F_l, F_r) \]
with $c \in \{0,1\}^{\lambda}$ and the other operands as in Definition~\ref{def:tree}. Evaluation mirrors $\mathsf{dig}$: writing $e_x = \mathsf{eval}(F_x)$,
\begin{align*}
\mathsf{eval}(\mathsf{Hole}(c)) &= c,\\
\mathsf{eval}(\mathsf{OLeaf}(k,v)) &= \mathsf{LeafHash}(k,v),\\
\mathsf{eval}(\mathsf{ONode}(d,p,F_l,F_r)) &= \mathsf{NodeHash}(d,p,e_l,e_r).
\end{align*}
\end{definition}

An opening tree is a partially opened commitment: $\mathsf{OLeaf}$ and $\mathsf{ONode}$ present hash preimages, while a $\mathsf{Hole}$ stands for an unopened subtree. The following lemma is the workhorse of every argument below: an opening tree that evaluates to the digest of a known tree is an exact partial copy of that tree.

\begin{lemma}[Matching]
\label{lem:match}
Let $T \neq \varepsilon$ be a tree and $F$ an opening tree with $\mathsf{eval}(F) = \mathsf{dig}(T)$. Then:
\begin{enumerate}[nosep,label=(\alph*)]
  \item every $\mathsf{ONode}(d,p,\cdot,\cdot)$ of $F$ coincides with a junction $\mathsf{Node}(d,p,\cdot,\cdot)$ of $T$ at the same position, with equal child digests;
  \item every $\mathsf{OLeaf}(k,v)$ of $F$ coincides with $\mathsf{Leaf}(k,v)$ of $T$ at the same position;
  \item every $\mathsf{Hole}(c)$ is \emph{assigned} the subtree of $T$ at its position, and that subtree has digest $c$;
  \item replacing every hole by its assigned subtree turns $F$ into $T$.
\end{enumerate}
\end{lemma}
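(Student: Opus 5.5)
The plan is a structural induction on the opening tree $F$, carrying the tree $T$ alongside and pairing positions top-down. The paper's convention applies: each conclusion holds unless a collision of $H$ is computable in time linear in $|F|+|T|$. So at every step I either continue the induction or output two distinct preimages with equal $H$-images. Positions are words over $\{0,1\}$ read from the root, so ``same position'' means the same path of left/right choices in $F$ and in $T$.

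\textbf{Base cases.}
\begin{itemize}
\item If $F=\mathsf{Hole}(c)$, then $c=\mathsf{eval}(F)=\mathsf{dig}(T)$. The hole is assigned all of $T$, which gives (c), and replacing the hole by $T$ yields $T$, which gives (d). Clauses (a) and (b) are vacuous.
\item If $F=\mathsf{OLeaf}(k,v)$, then $\mathsf{LeafHash}(k,v)=\mathsf{dig}(T)$, and $T$ is nonempty.
  \begin{itemize}
  \item If $T=\mathsf{Leaf}(k',v')$ with $(k',v')\neq(k,v)$, the strings $\texttt{0x00}\|k\|v$ and $\texttt{0x00}\|k'\|v'$ are distinct, because the key has fixed length $\kappa$. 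They have equal $H$-images, so this is a collision.
  \item If $T$ is a $\mathsf{Node}$, its preimage starts with $\texttt{0x01}$ while the leaf preimage starts with $\texttt{0x00}$. Again this is a collision.
  \item Otherwise $T=\mathsf{Leaf}(k,v)$, which gives (b) and (d).
  \end{itemize}
\end{itemize}

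\textbf{Inductive case: $F=\mathsf{ONode}(d,p,F_l,F_r)$.} We have $\mathsf{NodeHash}(d,p,e_l,e_r)=\mathsf{dig}(T)$.
\begin{itemize}
\item If $T$ is a leaf, domain separation gives a collision.
\item So let $T=\mathsf{Node}(d',p',T_l,T_r)$. If $(d,p,e_l,e_r)\neq(d',p',\mathsf{dig}(T_l),\mathsf{dig}(T_r))$, the two encoded preimages differ. Here $\langle d\rangle,\langle p\rangle$ are fixed-length injective encodings and digests lie in $\{0,1\}^\lambda$, so the concatenation is injective. This yields a collision.
\item Otherwise $d=d'$, $p=p'$, $e_l=\mathsf{dig}(T_l)$ and $e_r=\mathsf{dig}(T_r)$, which is exactly (a) at the root.
\end{itemize}
Both $T_l$ and $T_r$ must be nonempty to apply the induction hypothesis. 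If, say, $T_l=\varepsilon$, then $\mathsf{dig}(T_l)=\varnothing$, and this cannot equal $e_l\in\{0,1\}^\lambda$ because $\varnothing$ lies outside $\{0,1\}^\lambda$. Note also that $\mathsf{eval}$ always lands in $\{0,1\}^\lambda$. (Valid trees have nonempty children anyway, but the lemma does not assume validity, so the digest argument is the one to use.)

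Now apply the induction hypothesis to the pairs $(F_l,T_l)$ and $(F_r,T_r)$. Positions shift by the prefix $0$ or $1$, which gives (a)–(c) throughout. For (d), substitute the holes in each child: this yields $T_l$ and $T_r$, and hence $\mathsf{ONode}(d,p,\cdot,\cdot)$ becomes $\mathsf{Node}(d,p,T_l,T_r)=T$.

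\textbf{Extracting the collision.} The reduction runs in linear time: it walks $F$ and $T$ in lockstep and stops at the first mismatch, which furnishes the collision pair. Since only one collision is needed, the disjunction ``property or collision'' propagates through the recursion without blowup.

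\textbf{Main obstacle.} The only delicate point is justifying that distinct tuples give distinct hash inputs, i.e.\ that the input encoding is injective. This relies on fixed-length encodings, fixed $\kappa$ and $\lambda$, and the separating prefix bytes. Values $v$ have variable length, but they appear only last in the leaf preimage, after the fixed-length key, so the encoding stays injective. I would state this injectivity once, as a preliminary observation, and then cite it in every case above.
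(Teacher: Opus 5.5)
Your proof is correct and follows the paper's own argument. Both run a simultaneous structural induction on $F$ and $T$, assign a hole the current subtree, and compare the two $H$-preimages at every opened constructor: a mismatch is a collision, and otherwise the domain-separation tag and the injective fixed-length encodings force the same constructor with equal components, so the walk recurses into both children in linear time. Your explicit treatment of empty children (ruled out because $\varnothing\notin\{0,1\}^\lambda$) and of the variable-length value suffix adds detail that the paper leaves implicit but does not change the route.
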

\begin{proof}
Induction on $F$, with the invariant $\mathsf{eval}(F) = \mathsf{dig}(T)$ for the current pair. A hole is assigned the current subtree of $T$. Otherwise, compare the two $H$-preimages. If they differ, they are a collision. If they are equal, the domain-separation tag and the injective encodings force the same constructor with equal components: an $\mathsf{OLeaf}$ meets $\mathsf{Leaf}(k,v)$ with the same $k$ and $v$; an $\mathsf{ONode}$ meets a junction with the same $(d,p)$ and equal child digests, and the induction continues in both children. Clause (d) restates that the recursion covers $F$ and $T$ simultaneously and completely. Each step is one comparison, so the reduction is linear.
\end{proof}

\subsubsection{Inclusion and non-inclusion certificates}

Certificates concern one committed map and do not depend on how that map was reached. We therefore define and prove them before introducing certified update histories.

\begin{definition}[Inclusion certificate encoding]
\label{def:inclusion-cert}
Inclusion and non-inclusion are certified by one object. Its encoded data-structure name is \texttt{InclusionCertificate}; the name denotes the format, not the relation. A non-empty certificate is a possibly empty sequence of junction openings followed by a terminal leaf,
\[
  C^\mathsf{inc}=\big((d_1,s_1),\ldots,(d_m,s_m);\ (k',v')\big), \qquad m\geq 0,
\]
with $k'\in\mathcal{K}$, $v'\in\mathcal{V}$ and every $s_j\in\{0,1\}^{\lambda}$. When $m>0$, the depths satisfy
\[
  \kappa>d_1>\cdots>d_m\geq 0,
\]
and the order is from the leaf toward the root. There is in addition one distinguished empty certificate $C^\mathsf{inc}_\emptyset$.

To verify $C^\mathsf{inc}$ for a target key $k$ against $r$, set $c_0=\mathsf{LeafHash}(k',v')$ and, for $j=1,\ldots,m$, require $k[d_j]=k'[d_j]$ and compute
\[
c_j = \begin{cases}
\mathsf{NodeHash}(d_j,p_j,c_{j-1},s_j) & \text{if } k'[d_j] = 0,\\
\mathsf{NodeHash}(d_j,p_j,s_j,c_{j-1}) & \text{if } k'[d_j] = 1,
\end{cases}
\]
where $p_j=k'[0..d_j)$. Accept iff every requirement holds and $c_m=r$; when $m=0$, this means accepting iff $c_0=r$. The empty certificate $C^\mathsf{inc}_\emptyset$ accepts iff $r=\varnothing$. Any unmet requirement rejects.

An accepted certificate is an \emph{inclusion certificate} for $(k,v)$ if $(k',v')=(k,v)$, and a \emph{non-inclusion certificate} for $k$ if $k'\neq k$; the empty certificate is a non-inclusion certificate for every $k$. The openings are read identically in both cases, and only the terminal comparison distinguishes them.

The junction regions are reconstructed from the \emph{terminal} key $k'$, instead of the target key $k$: past the recorded key space, the regions on the path are no longer prefixes of $k$, and deriving them from $k$ would fail. The path is tied to $k$ by the per-level requirement $k[d_j]=k'[d_j]$, which shows that the key-directed descent for $k$---at a junction $\mathsf{Node}(d,p,l,r)$, enter the child on side $k[d]$---traverses these openings. Since that descent is deterministic and lands on $\mathsf{Leaf}(k,\cdot)$ whenever $k$ is recorded, reaching any other leaf is a proof of absence.

For later use, $C^\mathsf{inc}$ determines an opening tree $F_{C^\mathsf{inc}}$. Start with $F_0=\mathsf{OLeaf}(k',v')$ and wrap it once per pair $(d_j,s_j)$:
\[
F_j=\begin{cases}
\mathsf{ONode}(d_j,p_j,F_{j-1},\mathsf{Hole}(s_j)) & k'[d_j]=0,\\
\mathsf{ONode}(d_j,p_j,\mathsf{Hole}(s_j),F_{j-1}) & k'[d_j]=1.
\end{cases}
\]
Thus $\mathsf{eval}(F_{C^\mathsf{inc}})=c_m$, so the digest part of verification is exactly the check $\mathsf{eval}(F_{C^\mathsf{inc}})=r$.
\end{definition}

\begin{theorem}[Certificate correctness]
\label{thm:inclusion-cert}
Let $M$ be a finite partial map, $T=\mathsf{Tree}(M)$, and $r=\mathsf{dig}(T)$.
\begin{enumerate}[nosep,label=(\roman*)]
  \item Let a certificate verify for $k$ against $r$. If it is empty, then $M$ is empty. Otherwise, if its terminal is $(k',v')$, then $M(k')=v'$; consequently it certifies $M(k)=v'$ when $k'=k$, and $k\notin\mathrm{dom}(M)$ when $k'\neq k$.
  \item If $M(k)=v$, the leaf-to-root path of $\mathsf{Leaf}(k,v)$ in $T$ gives a verifying inclusion certificate for $(k,v)$. If $k\notin\mathrm{dom}(M)$, key-directed descent in $T$ gives a verifying non-inclusion certificate for $k$; for $T=\varepsilon$ this is the empty certificate.
\end{enumerate}
Both certificates contain at most $\kappa$ junction openings and are constructed in time linear in the path length.
\end{theorem}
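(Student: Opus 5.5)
For soundness (i), I will reduce everything to the Matching Lemma~\ref{lem:match}. If the certificate is empty, it accepts only when $r=\varnothing$. Every nonempty tree has digest in $\{0,1\}^\lambda$, and $\varnothing$ lies outside that set, so $T=\varepsilon$ and $M$ is empty.

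Otherwise the digest check is exactly $\mathsf{eval}(F_{C^\mathsf{inc}})=r=\mathsf{dig}(T)$. Since $r\neq\varnothing$, we have $T\neq\varepsilon$, and Lemma~\ref{lem:match} applies. It shows, barring a collision, that the spine $\mathsf{ONode}(d_m,p_m,\cdot,\cdot),\ldots,\mathsf{ONode}(d_1,p_1,\cdot,\cdot),\mathsf{OLeaf}(k',v')$ coincides position by position with a root-to-leaf path of $T$. In particular $\mathsf{Leaf}(k',v')$ is a leaf of $T$, and Lemma~\ref{lem:shape}(i) gives $M(k')=v'$. If $k'=k$, this is the inclusion claim.

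If $k'\neq k$, I must show $k\notin\mathrm{dom}(M)$. I will argue by the determinism of key-directed descent. The requirement $k[d_j]=k'[d_j]$ at every spine junction says that descent for $k$ in $T$ follows the same side as the spine at each matched junction. The spine is a full root-to-leaf path in $T$: it starts at the root, since $F$ is matched at the root, and each $\mathsf{ONode}$ child on the path is the matched child. So descent for $k$ ends at $\mathsf{Leaf}(k',v')$. Now suppose $k\in\mathrm{dom}(M)$. Its leaf extends every ancestor region by Lemma~\ref{lem:shape}(i), and the validity condition $p\|\beta\preceq\mathsf{reg}(\text{child})$ means that at every ancestor junction at depth $d$ the child containing $\mathsf{Leaf}(k,\cdot)$ is on side $k[d]$. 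Descent for $k$ would therefore reach $\mathsf{Leaf}(k,\cdot)$. Because descent is deterministic, that leaf would equal $\mathsf{Leaf}(k',v')$, giving $k=k'$, a contradiction.

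The main subtlety I expect is the region bookkeeping. The verifier hashes with $p_j=k'[0..d_j)$, and the matching lemma identifies these with the actual regions of $T$'s junctions only through equality of hash preimages. This is automatic, since the preimage includes $\langle p\rangle$, but it should be stated. I also need the depths along the spine to be those of $T$, which holds for the same reason.

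For completeness (ii), suppose $M(k)=v$. I will walk the root-to-leaf path to $\mathsf{Leaf}(k,v)$ and record, at each junction $\mathsf{Node}(d,p,l,r)$, the pair $(d,\mathsf{dig}(\text{sibling}))$, listed from leaf to root. Validity gives strictly increasing depths downward, so the listed depths are strictly decreasing and bounded by $\kappa$. Lemma~\ref{lem:shape}(i),(ii) give $p=k[0..d)$ and that the path's side is $k[d]$. Recomputation then reproduces each junction digest, and the final value equals $r$.

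For $k\notin\mathrm{dom}(M)$ with $T\neq\varepsilon$, I will follow key-directed descent from the root. It always terminates at some leaf $(k',v')$, because children of junctions are nonempty and depths strictly increase. I take the certificate of that path, which is the inclusion certificate for $(k',v')$ built above. The extra requirement $k[d_j]=k'[d_j]$ holds because the descent chose side $k[d_j]$ at each junction and the leaf lies on side $k'[d_j]$ by validity. Since $k'\in\mathrm{dom}(M)$ and $k\notin\mathrm{dom}(M)$, we have $k'\neq k$. For $T=\varepsilon$, the empty certificate accepts because $r=\varnothing$.

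Each path has at most $\kappa$ junctions, since depths are distinct integers in $[0,\kappa)$. Construction is a single walk along the path, so it takes time linear in the path length, and the collision reduction inherits linearity from Lemma~\ref{lem:match}.
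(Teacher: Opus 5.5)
Your proposal is correct and follows essentially the same route as the paper: soundness via the Matching Lemma, then the determinism of key-directed descent, with validity forcing the descent for a recorded $k$ onto its own leaf. Completeness likewise records depths and sibling digests along the leaf's path or along the key-directed descent, and your description of the non-inclusion certificate as the inclusion certificate of the terminal $(k',v')$ plus the per-level bit requirement is exactly what the paper's construction produces.
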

\begin{proof}
(i) The empty certificate verifies only when $r=\varnothing$, which means $T=\varepsilon$ and $\mathrm{dom}(M)=\emptyset$, so the absence claim holds. Let $C^\mathsf{inc}$ be nonempty and verify. Its opening tree satisfies $\mathsf{eval}(F_{C^\mathsf{inc}})=r$; since an opening tree evaluates to an $H$-image, $r\neq\varnothing$ and $T\neq\varepsilon$. Lemma~\ref{lem:match} places the terminal $\mathsf{OLeaf}(k',v')$ at the corresponding leaf of $T$, giving $M(k')=v'$, and places every $\mathsf{ONode}(d_j,p_j,\cdot,\cdot)$ at the junction of $T$ at the same position. The openings are therefore exactly the ancestors of $\mathsf{Leaf}(k',v')$, at depths $d_j$, and the path descends on side $k'[d_j]$ at each.

Suppose $k\in\mathrm{dom}(M)$. Every ancestor $\mathsf{Node}(d,p,l,r)$ of $\mathsf{Leaf}(k,M(k))$ has $p\preceq k$ by Lemma~\ref{lem:shape}(i), and the child holding that leaf has region a prefix of $k$; validity then forces that child to be the one on side $k[d]$. Hence key-directed descent for $k$ ends at $\mathsf{Leaf}(k,M(k))$. The requirement $k[d_j]=k'[d_j]$ makes that same descent traverse the openings, so it ends at $\mathsf{Leaf}(k',v')$. Descent is deterministic, so $k'=k$. Contrapositively, $k'\neq k$ gives $k\notin\mathrm{dom}(M)$.

(ii) Suppose $M(k)=v$. Starting at $\mathsf{Leaf}(k,v)$, list its ancestors toward the root, recording for each its depth and the digest of the sibling subtree, and take $(k,v)$ as the terminal. Depths strictly decrease in this leaf-to-root order, Lemma~\ref{lem:shape}(ii) gives the region $k[0..d)$, and the side is $k[d]$; with $k'=k$ the requirement $k[d_j]=k'[d_j]$ is trivial. The verifier therefore recomputes the digest of each ancestor and ends at $r$. A singleton tree gives $m=0$.

Suppose $k$ is absent. If $T=\varepsilon$, use the empty certificate. Otherwise follow the key-directed descent for $k$ to the leaf $\mathsf{Leaf}(k',v')$ at which it arrives, and record the same data. The descent is finite because junction depths strictly increase, and every node it passes is an ancestor of that leaf, so the recorded openings satisfy every digest check; the sides it takes are $k[d_j]$, and they are the path's own sides $k'[d_j]$, so the per-level requirement holds. Finally $k'\neq k$, since otherwise $k\in\mathrm{dom}(M)$.

A valid root-to-leaf path has strictly increasing depths in $\{0,\ldots,\kappa\}$, which gives the size bound; both constructions visit each path node once.
\end{proof}

\subsection{Single-Insertion Updates}
\label{sec:single-key}

Reduced to its simplest form, a round of the accumulator records one binding: the operator submits $(k,v)$ together with a non-inclusion certificate for $k$, the verifier checks the certificate against the current root, and the next root is read off the same certificate. This subsection develops that protocol into a complete security argument. It needs nothing beyond the certificates just defined: the post-state is \emph{computed} from the pre-state rather than supplied by the prover, so the argument never has to check a structural property of prover-supplied data, and the region commitment of the node hash carries no weight in it (Remark~\ref{rem:derived-vs-checked}). The batch transcript of Section~\ref{sec:stack-verifier} is then a compression of this protocol, paid for with the placement checks whose sufficiency Section~\ref{sec:consistency-theorem} proves (Remark~\ref{rem:single-key-border}).

\begin{definition}[Insertion]
\label{def:insert}
For $k \in \mathcal{K}$ and $v \in \mathcal{V}$, define $\mathsf{ins}(T;k,v)$ on valid trees:
\begin{enumerate}[nosep,label=(\arabic*)]
  \item $\mathsf{ins}(\varepsilon;k,v) = \mathsf{Leaf}(k,v)$;
  \item $\mathsf{ins}(\mathsf{Leaf}(k,v');k,v) = \mathsf{Leaf}(k,v)$;
  \item if $T \neq \varepsilon$ and $\varrho = \mathsf{reg}(T) \not\preceq k$: with $\hat p = \mathsf{lcp}(\varrho,k)$ and $\hat d = |\hat p|$,
        \[
          \mathsf{ins}(T;k,v) =
          \begin{cases}
            \mathsf{Node}\bigl(\hat d,\hat p,\mathsf{Leaf}(k,v),T\bigr) & k[\hat d]=0,\\
            \mathsf{Node}\bigl(\hat d,\hat p,T,\mathsf{Leaf}(k,v)\bigr) & k[\hat d]=1;
          \end{cases}
        \]
  \item if $T = \mathsf{Node}(d,p,l,r)$ and $p \preceq k$: $\mathsf{ins}(T;k,v) = \mathsf{Node}(d,p,\mathsf{ins}(l;k,v),r)$ if $k[d]=0$, and $\mathsf{Node}(d,p,l,\mathsf{ins}(r;k,v))$ if $k[d]=1$.
\end{enumerate}
\end{definition}

The case split is exhaustive and exclusive: for a leaf, $\mathsf{reg} \preceq k$ forces the key $k$ itself (clause~2), and otherwise clause~3 applies; for a junction, $|p| = d < \kappa$ makes exactly one of clauses~3 and~4 fire. In clause~3, $\varrho \not\preceq k$ makes $\hat p$ a proper prefix of $\varrho$, so $\hat d < \mathsf{dep}(T) \le \kappa$ and the bit $k[\hat d]$ exists; by maximality of $\hat p$, $\varrho[\hat d] = 1 - k[\hat d]$, so both new edges satisfy the validity conditions of Definition~\ref{def:tree}. Clause~2 is the only one that overwrites a value; Corollary~\ref{cor:run-append} shows that it is never reached in a certified run, which is what makes the accumulator append-only rather than a dictionary.

\begin{lemma}[Canonicity of insertion]
\label{lem:insert-canon}
For every finite partial map $M$, key $k \notin \mathrm{dom}(M)$ and value $v$,
\[
  \mathsf{ins}\bigl(\mathsf{Tree}(M);k,v\bigr) = \mathsf{Tree}\bigl(M \uplus \{k \mapsto v\}\bigr).
\]
\end{lemma}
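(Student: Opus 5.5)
The plan is to reduce the claim to Proposition~\ref{prop:unique}: by uniqueness of the valid tree, it suffices to show that $T' = \mathsf{ins}(\mathsf{Tree}(M);k,v)$ is \emph{valid} and that $\mathsf{map}(T') = M \uplus \{k\mapsto v\}$. Both facts will be proved by structural induction on $T = \mathsf{Tree}(M)$, following the clauses of Definition~\ref{def:insert}. The induction hypothesis is applied to subtrees of $T$, which are themselves $\mathsf{Tree}$ of their maps: they are valid subtrees, and by the cone identity (Lemma~\ref{lem:shape}(iv)) together with Proposition~\ref{prop:unique}, a subtree with region $\varrho$ equals $\mathsf{Tree}(M_\varrho)$. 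Since $k \notin \mathrm{dom}(M)$, clause~2 never fires: a leaf with region $\preceq k$ would carry the key $k$ itself.

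I will dispose of clauses 1 and 3 first.

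\emph{Clause 1.} Here $M=\emptyset$, and the single leaf $\mathsf{Leaf}(k,v)$ is valid and represents $\{k\mapsto v\}$.

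\emph{Clause 3.} Here $T\ne\varepsilon$ and $\varrho=\mathsf{reg}(T)\not\preceq k$. The remarks after Definition~\ref{def:insert} already give the two new edge conditions:
\begin{itemize}
  \item $\hat d < \mathsf{dep}(T)$ and $\hat d < \kappa = \mathsf{dep}(\mathsf{Leaf})$;
  \item $\hat p\|k[\hat d] \preceq k$ and $\hat p\|(1-k[\hat d]) \preceq \varrho$.
\end{itemize}
Every old junction is untouched and hence remains valid. The leaf set is $M$ plus the new leaf, so the new tree represents $M\uplus\{k\mapsto v\}$.

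\emph{Clause 4.} Here $T = \mathsf{Node}(d,p,l,r)$ with $p \preceq k$; take $k[d]=0$, the other side being symmetric. By induction, $l' = \mathsf{ins}(l;k,v)$ is valid and represents $\mathsf{map}(l)\uplus\{k\mapsto v\}$. The only new obligation is the edge from the root to $l'$: I need $\mathsf{dep}(l')>d$ and $p\|0 \preceq \mathsf{reg}(l')$.

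This is the step I expect to be the main obstacle, because the root of $l'$ may be a freshly created junction from clause~3 applied at a lower level. I will prove an auxiliary invariant alongside the main induction: if $q \preceq \mathsf{reg}(U)$ and $q \preceq k$, then $q \preceq \mathsf{reg}(\mathsf{ins}(U;k,v))$, and $\mathsf{dep}(\mathsf{ins}(U;k,v)) \ge \min(\mathsf{dep}(U), |\mathsf{lcp}(\mathsf{reg}(U),k)|)$. In clause 3 the new region is $\hat p = \mathsf{lcp}(\varrho,k)$. Both $\varrho$ and $k$ extend $p\|0$, so $\hat p$ extends $p\|0$, and therefore $\hat d \ge d+1 > d$. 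In clause 4 the region is unchanged. Applying this with $q = p\|0$ discharges the edge condition.

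Once validity and the represented map are established for the root, Proposition~\ref{prop:unique} yields $T' = \mathsf{Tree}(M\uplus\{k\mapsto v\})$.
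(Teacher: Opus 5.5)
Your proof is correct. In clauses~1--3 it matches the paper's: check the two new edges, count the leaves, invoke Proposition~\ref{prop:unique}.

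The routes differ in clause~4. The paper never examines the edge from the root to the modified child. It uses the \emph{existence} half of Proposition~\ref{prop:unique}, i.e.\ the recursive description of $\mathsf{Tree}$: the root region $p$ is the longest common prefix of $\mathrm{dom}(M)$, and the children are $\mathsf{Tree}(M_{p\|0})$ and $\mathsf{Tree}(M_{p\|1})$. Adding a key with $p \preceq k$ leaves that prefix unchanged and only enlarges one cone. So the paper reads off $\mathsf{Tree}(M \uplus \{k\mapsto v\}) = \mathsf{Node}(d,p,\mathsf{Tree}(M_{p\|0}\uplus\{k\mapsto v\}),r)$ directly. The induction hypothesis on $M_{p\|0}$ then identifies the left child with $\mathsf{ins}(l;k,v)$, and validity of that edge comes for free from the canonical construction.

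You instead prove the canonicity-free invariant that $\mathsf{ins}$ sends a valid tree to a valid tree representing the extended map. You use only the uniqueness half of Proposition~\ref{prop:unique}, once, at the end. The price is your top-node lemma. That lemma is just a one-step case split on which clause fires at the root of $l$, so it needs no induction of its own:
\begin{itemize}
  \item clause~3 gives region $\mathsf{lcp}(\mathsf{reg}(l),k) \succeq p\|0$ and hence depth $\geq d+1$;
  \item clause~4 leaves both depth and region unchanged.
\end{itemize}
Your version isolates a reusable preservation property of $\mathsf{ins}$ and makes every validity obligation explicit. The paper's is shorter because Proposition~\ref{prop:unique} has already done that work.

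One simplification: your inductive claim concerns only validity and the represented map, so it holds for arbitrary valid trees. You need only that subtrees of valid trees are valid and that $k \notin \mathrm{dom}(\mathsf{map}(l))$. The appeal to the cone identity to recognize subtrees as $\mathsf{Tree}(M_\varrho)$ is unnecessary.
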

\begin{proof}
Induction on $|\mathrm{dom}(M)|$, writing $T = \mathsf{Tree}(M)$. For $M$ empty, both sides are $\mathsf{Leaf}(k,v)$. Clause~2 never fires, because $T = \mathsf{Leaf}(k,\cdot)$ would put $k$ in $\mathrm{dom}(M)$.

Clause~3. The result is valid: the edges inside $T$ are unchanged, and the two new edges were checked above. Its leaves are those of $T$ together with $\mathsf{Leaf}(k,v)$, so it represents $M \uplus \{k \mapsto v\}$, and Proposition~\ref{prop:unique} identifies it as $\mathsf{Tree}(M \uplus \{k \mapsto v\})$.

Clause~4. By the construction in Proposition~\ref{prop:unique}, $p$ is the longest common prefix of $\mathrm{dom}(M)$, and $l = \mathsf{Tree}(M_{p\|0})$, $r = \mathsf{Tree}(M_{p\|1})$. Take $k[d]=0$; the other side is symmetric. Since $p \preceq k$, adding $k$ leaves the longest common prefix at $p$ and adds the binding to the left cone, so
\[
  \mathsf{Tree}\bigl(M \uplus \{k \mapsto v\}\bigr)
  = \mathsf{Node}\bigl(d,p,\mathsf{Tree}(M_{p\|0} \uplus \{k \mapsto v\}),r\bigr),
\]
again by the construction in Proposition~\ref{prop:unique}. The induction hypothesis applied to $M_{p\|0}$ makes the left child $\mathsf{ins}(l;k,v)$.
\end{proof}

Lemma~\ref{lem:insert-canon} is what makes insertion order irrelevant: iterated from the empty tree, $\mathsf{ins}$ yields $\mathsf{Tree}$ of the accumulated map regardless of the sequence, so the root commits to the recorded map and not to the history that produced it. The next definition computes the digest of the extended tree from a certificate alone, without access to $T$.

\begin{definition}[Update reading]
\label{def:update-read}
Let $C = \bigl((d_1,s_1),\ldots,(d_m,s_m);(k',v')\bigr)$ be a certificate (Definition~\ref{def:inclusion-cert}), let $k \in \mathcal{K}$ with $k \neq k'$, and let $v \in \mathcal{V}$. Put $\hat p = \mathsf{lcp}(k,k')$ and $\hat d = |\hat p|$, let $c_0,\ldots,c_m$ be the digests of Definition~\ref{def:inclusion-cert}, which depend only on the certificate, and let $t$ be the number of indices $j$ with $d_j > \hat d$; since the depths decrease, these are $j = 1,\ldots,t$. The \emph{derived certificate} replaces the openings deeper than $\hat d$ by a single opening at $\hat d$ whose sibling digest is $c_t$, and the terminal by the new binding:
\[
  C^{+} = \bigl((\hat d,\,c_t),\ (d_{t+1},s_{t+1}),\ldots,(d_m,s_m);\ (k,v)\bigr);
\]
for the empty certificate, $C^{+} = \bigl(\,;(k,v)\bigr)$. The \emph{update reading} $\mathsf{next}(C;k,v)$ is the unique root against which $C^{+}$ verifies for the target key $k$. If some $d_j$ equals $\hat d$, then $C^{+}$ is malformed and the update reading is undefined.
\end{definition}

The bit requirements of Definition~\ref{def:inclusion-cert} are vacuous for $C^{+}$---its terminal key is the target key---so $C^{+}$ verifies against exactly one root, the final digest of its fold; $\mathsf{next}$ itself compares nothing against any root. The undefined cases are exactly the certificates that cannot participate in an insertion of $k$: a terminal key $k' = k$ certifies that $k$ is already recorded, and an opening at depth $\hat d$ violates the requirement $k[\hat d] = k'[\hat d]$, so such a $C$ verifies for $k$ against no root. The two folds share their recursion, so a verifier computes the digests of $C$ and $C^{+}$ in one pass. Theorem~\ref{thm:update} is what licenses reading the new root off $C^{+}$ once verification of $C$ against the previous root has succeeded.

\begin{theorem}[Single-insertion update]
\label{thm:update}
Let $M$ be a finite partial map and $T = \mathsf{Tree}(M)$, let $k \in \mathcal{K}$, and let a certificate $C$ with terminal key $k' \neq k$, or the empty certificate, verify for $k$ against $\mathsf{dig}(T)$. Then $k \notin \mathrm{dom}(M)$ and, for every $v \in \mathcal{V}$,
\[
  \mathsf{next}(C;k,v) = \mathsf{dig}\bigl(\mathsf{Tree}(M \uplus \{k \mapsto v\})\bigr).
\]
\end{theorem}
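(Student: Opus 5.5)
The first claim, $k\notin\mathrm{dom}(M)$, comes straight from Theorem~\ref{thm:inclusion-cert}(i). The rest of the plan is to show that the derived certificate $C^{+}$ is exactly the leaf-to-root inclusion path of $\mathsf{Leaf}(k,v)$ in $T^{+}=\mathsf{ins}(T;k,v)$. By Lemma~\ref{lem:insert-canon}, $T^{+}=\mathsf{Tree}(M\uplus\{k\mapsto v\})$, so its fold is $\mathsf{dig}(T^{+})$. The empty certificate is handled separately: $M$ is empty, $T^{+}=\mathsf{Leaf}(k,v)$, and $C^{+}=(\,;(k,v))$ folds to $\mathsf{LeafHash}(k,v)$.

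For a nonempty $C$, apply Lemma~\ref{lem:match} to $F_C$ with $\mathsf{eval}(F_C)=\mathsf{dig}(T)$. This identifies the openings $(d_j,s_j)$ with the actual ancestors $A_j$ of $\mathsf{Leaf}(k',v')$ in $T$. Each $A_j$ has region $p_j=k'[0..d_j)$ and sibling subtree of digest $s_j$, and each intermediate digest $c_j$ is $\mathsf{dig}$ of the subtree rooted at $A_j$, with $A_0$ the leaf itself. The verification requirement $k[d_j]=k'[d_j]$ says that key-directed descent for $k$ follows this path.

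Next, locate where $\mathsf{ins}$ acts. Put $\hat p=\mathsf{lcp}(k,k')$ and $\hat d=|\hat p|$. For $j>t$ we have $d_j<\hat d$ (the case $d_j=\hat d$ is excluded, since then $k[\hat d]\neq k'[\hat d]$ would contradict verification). Hence $p_j\preceq\hat p\preceq k$, so clause~4 of Definition~\ref{def:insert} fires at $A_m,\dots,A_{t+1}$ in turn, descending on side $k[d_j]=k'[d_j]$ and leaving the siblings $s_j$ untouched. Upon reaching $A_t$ (the leaf if $t=0$), its region $k'[0..d_t)$, or $k'$ itself, has length greater than $\hat d$ and so does not extend $k$. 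Clause~3 therefore fires with split prefix $\mathsf{lcp}(\mathsf{reg}(A_t),k)=\hat p$. It places $\mathsf{Leaf}(k,v)$ on side $k[\hat d]$ and the old subtree, of digest $c_t$, on the opposite side.

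Finally, read off the path of $\mathsf{Leaf}(k,v)$ in $T^{+}$. It consists of the new junction $(\hat d,\hat p)$ with sibling digest $c_t$, followed by $A_{t+1},\dots,A_m$ with unchanged depths, regions $k[0..d_j)=k'[0..d_j)$, and siblings $s_j$. This is precisely $C^{+}$. The region used by the fold at depth $\hat d$ is $k[0..\hat d)=\hat p$, and the side is $k[\hat d]$, matching clause~3. The digest recomputation then equals $\mathsf{dig}(T^{+})$ by induction up the path.

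The main obstacle is the middle step: showing that the clause~4/clause~3 case split of $\mathsf{ins}$ aligns exactly with the threshold $t$. This needs careful prefix bookkeeping, namely that $\mathsf{lcp}$ with the region of $A_t$ is $\hat p$ (and not something shorter), and that the side conventions of Definition~\ref{def:insert} and of the fold coincide.
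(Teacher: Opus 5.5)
Your proposal is correct and follows essentially the same route as the paper. You identify the openings with the ancestors of $\mathsf{Leaf}(k',v')$ via Matching, run $\mathsf{ins}$ with clause~4 at depths below $\hat d$ and clause~3 at the ancestor of depth $d_t$, and recognize $C^{+}$ as the inclusion path of the new leaf in $\mathsf{Tree}(M\uplus\{k\mapsto v\})$. The only difference is cosmetic: the paper closes by citing Theorem~\ref{thm:inclusion-cert}(ii) together with uniqueness of the root that $C^{+}$ verifies against, whereas you recompute the fold directly by induction up the path.
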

\begin{proof}
The empty certificate verifies only against $r = \varnothing$, so $T = \varepsilon$ and $M = \emptyset$; here $C^{+}$ is the terminal-only certificate $(\,;(k,v))$, which verifies against $\mathsf{LeafHash}(k,v) = \mathsf{dig}(\mathsf{Tree}(\{k \mapsto v\}))$.

Let $C$ be nonempty. Theorem~\ref{thm:inclusion-cert}(i) gives $k \notin \mathrm{dom}(M)$, and its proof, through Lemma~\ref{lem:match}, identifies the openings of $C$ with the ancestors of $\mathsf{Leaf}(k',v')$ in $T$: the opening at depth $d_j$ coincides with the ancestor junction at that depth, $c_j$ is that junction's digest, $p_j = k'[0..d_j)$ is its true region (Lemma~\ref{lem:shape}(ii)), and $s_j$ is the digest of its subtree on the side away from $k'$. The openings list \emph{every} ancestor, because each opening's on-path child digest is the recomputed $c_{j-1}$, making consecutive openings parent and child in $T$.

No opening has depth $\hat d$: verification requires $k[d_j] = k'[d_j]$, and $k[\hat d] \neq k'[\hat d]$. Hence $C^{+}$ is well formed, the ancestors at depths $d_1,\ldots,d_t$ have regions containing the position $\hat d$, so none of them is a prefix of $k$, and the ancestors at depths $d_{t+1},\ldots,d_m$ have regions $p_j \preceq \hat p \preceq k$.

Now run $\mathsf{ins}(T;k,v)$. At the ancestors of depths $d_m,\ldots,d_{t+1}$, taken root downward, clause~4 of Definition~\ref{def:insert} applies and descends on side $k[d_j] = k'[d_j]$, toward $\mathsf{Leaf}(k',v')$. The next node $U$ on that path is the ancestor at depth $d_t$, or the terminal leaf itself if $t = 0$; its region agrees with $k'$ on at least the first $\hat d + 1$ bits, so $\mathsf{lcp}(\mathsf{reg}(U),k) = \mathsf{lcp}(k',k) = \hat p$, and clause~3 grafts $\mathsf{Node}(\hat d,\hat p,\cdot,\cdot)$ over $\mathsf{Leaf}(k,v)$ and $U$, with the new leaf on side $k[\hat d]$. By Lemma~\ref{lem:insert-canon}, $\mathsf{ins}(T;k,v) = \mathsf{Tree}(M \uplus \{k \mapsto v\})$.

In $\mathsf{ins}(T;k,v)$, the ancestors of $\mathsf{Leaf}(k,v)$ are the graft junction at depth $\hat d$, whose sibling subtree is $U$ with $\mathsf{dig}(U) = c_t$, and above it the junctions at depths $d_{t+1},\ldots,d_m$, whose sibling subtrees are unchanged with digests $s_j$. The certificate that Theorem~\ref{thm:inclusion-cert}(ii) constructs from this path is therefore exactly $C^{+}$, and it verifies against $\mathsf{dig}(\mathsf{ins}(T;k,v))$. Since $C^{+}$ verifies against a unique root, $\mathsf{next}(C;k,v) = \mathsf{dig}(\mathsf{ins}(T;k,v))$.
\end{proof}

\begin{corollary}[Derived inclusion certificate]
\label{cor:derived-cert}
In the setting of Theorem~\ref{thm:update}, $C^{+}$ is a verifying inclusion certificate for $(k,v)$ against the new root $\mathsf{next}(C;k,v)$---the certificate that Theorem~\ref{thm:inclusion-cert}(ii) constructs from $\mathsf{Tree}(M \uplus \{k \mapsto v\})$. The operator can therefore serve the inclusion certificate for a freshly recorded binding by this local transformation of the consumed non-inclusion certificate, without access to the tree.
\end{corollary}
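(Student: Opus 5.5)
The corollary essentially restates the last paragraph of the proof of Theorem~\ref{thm:update}, so the plan is to extract that argument and make the two claims explicit: that $C^{+}$ verifies as an \emph{inclusion} certificate for $(k,v)$ against $\mathsf{next}(C;k,v)$, and that it coincides with the certificate built by Theorem~\ref{thm:inclusion-cert}(ii) from $\mathsf{Tree}(M \uplus \{k \mapsto v\})$.

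\emph{Step 1: construct the path certificate.}
Write $T^{+}=\mathsf{ins}(T;k,v)$. By Lemma~\ref{lem:insert-canon}, $T^{+}=\mathsf{Tree}(M \uplus \{k \mapsto v\})$. Theorem~\ref{thm:update} gives $\mathsf{next}(C;k,v)=\mathsf{dig}(T^{+})$. Theorem~\ref{thm:inclusion-cert}(ii) applied to $T^{+}$ and the recorded binding $(k,v)$ yields a verifying inclusion certificate $C^{\star}$, read off the leaf-to-root path of $\mathsf{Leaf}(k,v)$ in $T^{+}$.

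\emph{Step 2: show $C^{\star}=C^{+}$.}
This is the only real content. It follows the identification already made in the proof of Theorem~\ref{thm:update}. By Lemma~\ref{lem:match}, the openings of $C$ are exactly the ancestors of $\mathsf{Leaf}(k',v')$ in $T$. The insertion run descends by clause~4 through the ancestors at depths $d_m,\ldots,d_{t+1}$ and grafts a junction at depth $\hat d$ by clause~3. Hence the ancestors of $\mathsf{Leaf}(k,v)$ in $T^{+}$ are, in leaf-to-root order:
\begin{itemize}[nosep]
\item the graft junction at depth $\hat d$, whose sibling is $U$ with digest $c_t$;
\item the junctions at depths $d_{t+1},\ldots,d_m$, whose off-path siblings are untouched subtrees with digests $s_j$.
\end{itemize}
The terminal is $(k,v)$. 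This is entry-by-entry the definition of $C^{+}$. The empty-certificate case is immediate, since both objects equal $(\,;(k,v))$.

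\emph{Step 3: read off the two claims.}
Since the terminal key equals the target key, Definition~\ref{def:inclusion-cert} classifies $C^{+}$ as an inclusion certificate. By uniqueness of the root it verifies against (noted after Definition~\ref{def:update-read}), that root is $\mathsf{next}(C;k,v)$.

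\emph{Step 4: the operational remark.}
Definition~\ref{def:update-read} computes $C^{+}$ from $C$, $k$ and $v$ alone. It uses only the lcp, the count $t$, and the already-computed fold digest $c_t$, all in one pass over $C$, with no tree access.

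The main obstacle is Step 2, specifically the claim that the sibling digests $s_j$ above the graft are unchanged in $T^{+}$. I would justify this from the fact that clause~4 modifies only the on-path child. Alternatively, Lemma~\ref{lem:persist}(iii) applies, since those sibling cones contain no batch key.
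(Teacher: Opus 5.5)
Your proposal is correct and takes the same route as the paper. The paper's proof is the one-line pointer to the closing paragraph of the proof of Theorem~\ref{thm:update}; you unpack that identification, adding the empty-certificate case and the justification that the sibling digests above the graft junction are unchanged (clause~4 rewrites only the on-path child).
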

\begin{proof}
This is the identification made in the closing paragraph of the proof of Theorem~\ref{thm:update}.
\end{proof}

\begin{definition}[Certified single-insertion run]
\label{def:run}
A \emph{certified run} is a sequence $(k_1,v_1,C_1,r_1),\allowbreak\ldots,\allowbreak(k_n,v_n,C_n,r_n)$ with $r_0 = \varnothing$ such that, for every $i$, the certificate $C_i$ verifies for $k_i$ against $r_{i-1}$, its terminal key differs from $k_i$ (or $C_i$ is empty), and $r_i = \mathsf{next}(C_i;k_i,v_i)$.
\end{definition}

\begin{theorem}[Run soundness]
\label{thm:run}
In a certified run the keys $k_1,\ldots,k_n$ are pairwise distinct and, with $M_i = \{k_1 \mapsto v_1,\ldots,k_i \mapsto v_i\}$,
\[
  r_i = \mathsf{dig}\bigl(\mathsf{Tree}(M_i)\bigr) \qquad \text{for } i = 0,\ldots,n.
\]
\end{theorem}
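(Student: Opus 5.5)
The proof is an induction on $i$. The invariant, for each $i\in\{0,\ldots,n\}$, is: $k_1,\ldots,k_i$ are pairwise distinct, and $r_i=\mathsf{dig}(\mathsf{Tree}(M_i))$. As in the rest of Section~\ref{sec:consistency-formal}, a collision extracted along the way is an allowed alternative outcome. The base case $i=0$ is immediate: $M_0=\emptyset$, $\mathsf{Tree}(\emptyset)=\varepsilon$, and $\mathsf{dig}(\varepsilon)=\varnothing=r_0$ by Definition~\ref{def:run}. Distinctness is vacuous.

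For the step, assume the invariant for $i-1$ and put $T=\mathsf{Tree}(M_{i-1})$, so $r_{i-1}=\mathsf{dig}(T)$. By Definition~\ref{def:run}, $C_i$ verifies for $k_i$ against $r_{i-1}$, and it is either empty or has terminal key different from $k_i$. These are exactly the hypotheses of Theorem~\ref{thm:update} with $M=M_{i-1}$ and $k=k_i$. That theorem gives two things:
\begin{itemize}
\item $k_i\notin\mathrm{dom}(M_{i-1})=\{k_1,\ldots,k_{i-1}\}$, which extends pairwise distinctness to $k_1,\ldots,k_i$;
\item $\mathsf{next}(C_i;k_i,v_i)=\mathsf{dig}(\mathsf{Tree}(M_{i-1}\uplus\{k_i\mapsto v_i\}))$.
\end{itemize}
Because $k_i$ is fresh, the disjoint union $M_{i-1}\uplus\{k_i\mapsto v_i\}$ is well defined and equals $M_i$. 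Since $r_i=\mathsf{next}(C_i;k_i,v_i)$ by definition of the run, we get $r_i=\mathsf{dig}(\mathsf{Tree}(M_i))$. Distinctness also makes $M_i$ a genuine partial function, so $\mathsf{Tree}(M_i)$ is defined by Proposition~\ref{prop:unique}.

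The only delicate point is bookkeeping for the collision disjunction. Theorem~\ref{thm:update} is itself a reduction: via Lemma~\ref{lem:match}, it either concludes or yields a collision between the opening tree of $C_i$ and $T$. The inductive proof therefore has to produce, from the whole run, either the statement or a collision. This works because the reduction can construct each $\mathsf{Tree}(M_{i-1})$ explicitly from the known bindings $k_1\mapsto v_1,\ldots,k_{i-1}\mapsto v_{i-1}$. It then checks each step in turn and outputs the first collision encountered. Each step costs time linear in the size of its certificate plus the tree path, so the total reduction stays polynomial (essentially linear) in the run size. I expect no other obstacle.
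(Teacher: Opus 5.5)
Your proof is correct and matches the paper's: induction on $i$ from $r_0=\varnothing=\mathsf{dig}(\mathsf{Tree}(\emptyset))$, with Theorem~\ref{thm:update} supplying at each step both the freshness of $k_i$ and the identity $r_i=\mathsf{dig}(\mathsf{Tree}(M_i))$. Your paragraph on collision bookkeeping (rebuild $\mathsf{Tree}(M_{i-1})$ from the known bindings and report the first failing step) spells out what the paper leaves implicit in its standing convention that every statement is a collision-extracting reduction.
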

\begin{proof}
Induction on $i$; the base is $r_0 = \varnothing = \mathsf{dig}(\mathsf{Tree}(\emptyset))$. If $r_{i-1} = \mathsf{dig}(\mathsf{Tree}(M_{i-1}))$, then Theorem~\ref{thm:update} applied to $C_i$ gives $k_i \notin \mathrm{dom}(M_{i-1})$---so the keys remain distinct and $M_i$ is a partial map---and $r_i = \mathsf{dig}(\mathsf{Tree}(M_i))$.
\end{proof}

The witness maps $M_i$ are computed from the accepted transcript itself, which discharges clause~(i) of Definition~\ref{def:append-only-accumulator} in its literal form; nothing is assumed about what the operator knows. The protocol is complete as well as sound: for $k \notin \mathrm{dom}(M_{i-1})$ the operator produces a verifying certificate by the descent construction of Theorem~\ref{thm:inclusion-cert}(ii), and Theorem~\ref{thm:update} guarantees that the root it leads to is the canonical one.

\begin{corollary}[The run is append-only]
\label{cor:run-append}
In a certified run, no key is recorded twice and no recorded value changes: for $j \le i$, the tree committed by $r_i$ assigns $v_j$ to $k_j$, and the overwrite clause of Definition~\ref{def:insert} is never invoked. Moreover, a certificate for a recorded key $k_j$ whose terminal key differs from $k_j$ and which verifies against $r_i$ yields a collision.
\end{corollary}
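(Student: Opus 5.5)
The corollary has three parts. The first two, that no key is recorded twice and no recorded value changes, should follow almost immediately from Theorem~\ref{thm:run}. That theorem says the keys $k_1,\ldots,k_n$ are pairwise distinct, so $M_i=\{k_1\mapsto v_1,\ldots,k_i\mapsto v_i\}$ is a well-defined partial map, and $r_i=\mathsf{dig}(\mathsf{Tree}(M_i))$. For $j\le i$ we have $M_i(k_j)=v_j$. By Lemma~\ref{lem:shape}(i) the tree $\mathsf{Tree}(M_i)$ represents exactly $M_i$, so it contains $\mathsf{Leaf}(k_j,v_j)$. This is the sense in which ``the tree committed by $r_i$ assigns $v_j$ to $k_j$.'' If another valid tree $T'$ has $\mathsf{dig}(T')=r_i$, then either it is $\mathsf{Tree}(M_i)$ or a collision results; this follows from Lemma~\ref{lem:match} applied with $F$ the full opening of $T'$.

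For the claim that the overwrite clause is never invoked, I would observe that a certified run never executes $\mathsf{ins}$ literally. The relevant object is the implicit insertion $\mathsf{ins}(\mathsf{Tree}(M_{i-1});k_i,v_i)$ that appears in the proof of Theorem~\ref{thm:update}. Theorem~\ref{thm:update} gives $k_i\notin\mathrm{dom}(M_{i-1})$. By Lemma~\ref{lem:shape}(i), $\mathsf{Tree}(M_{i-1})$ therefore contains no $\mathsf{Leaf}(k_i,\cdot)$. Clause~2 fires only upon reaching such a leaf, as noted in the proof of Lemma~\ref{lem:insert-canon}, so it never fires.

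The substantive part is the final sentence. Let $C$ verify for the recorded key $k_j$ against $r_i$, with a terminal $(k',v')$ where $k'\ne k_j$. I would apply Theorem~\ref{thm:inclusion-cert}(i) with $M=M_i$ and $T=\mathsf{Tree}(M_i)$, whose digest is $r_i$ by Theorem~\ref{thm:run}. Modulo a collision, that theorem concludes $k_j\notin\mathrm{dom}(M_i)$, while $j\le i$ gives $k_j\in\mathrm{dom}(M_i)$. So the collision disjunct must hold. To make the reduction explicit, I would trace the proof. Lemma~\ref{lem:match} applied to $F_C$ against $T$ either outputs a collision or places $C$'s openings on the ancestors of $\mathsf{Leaf}(k',v')$. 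The per-level bit requirement then forces key-directed descent for $k_j$ to end at $\mathsf{Leaf}(k',v')$. Descent for $k_j$ must instead end at $\mathsf{Leaf}(k_j,v_j)$, which is the contradiction. Since the matching runs in time linear in the certificate and path size, the collision is computed efficiently. The empty certificate is handled separately: it verifies only against $\varnothing$, but $r_i\ne\varnothing$ for $i\ge1$ because $r_i$ is an $H$-image.

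I expect the main obstacle to be the bookkeeping of the reduction rather than any new mathematics. The extractor must hold $\mathsf{Tree}(M_i)$, which it can rebuild from the run transcript, to run Lemma~\ref{lem:match}. Theorem~\ref{thm:run} itself is an unconditional reduction, so the collision could arise at an earlier round $i'\le i$ rather than in the final matching, and the extractor must account for that. I would handle this by running the rounds in order and outputting the first collision found.
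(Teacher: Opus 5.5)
Your proposal is correct and takes essentially the paper's route. It uses Theorem~\ref{thm:run} for $M_i(k_j)=v_j$, the freshness $k_i\notin\mathrm{dom}(M_{i-1})$ from Theorem~\ref{thm:update} to rule out the overwrite clause, and Theorem~\ref{thm:inclusion-cert}(i) applied to $\mathsf{Tree}(M_i)$ to turn a certificate for a recorded key with a different terminal into a collision. Your extra bookkeeping only makes explicit what the paper covers by its standing convention that every statement is an unconditional reduction: unfolding the Matching reduction, noting that the empty certificate cannot verify against $r_i\neq\varnothing$, and collecting a collision that may arise in an earlier round of Theorem~\ref{thm:run}.
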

\begin{proof}
By Theorem~\ref{thm:run}, $r_i$ commits $\mathsf{Tree}(M_i)$ with $M_i \supseteq M_j$ and distinct keys, so $M_i(k_j) = v_j$. The overwrite clause fires only for a key already in the tree, which Theorem~\ref{thm:update} excludes in every round. Finally, Theorem~\ref{thm:inclusion-cert}(i) applied to such a certificate concludes $k_j \notin \mathrm{dom}(M_i)$, contradicting $k_j \in \mathrm{dom}(M_j) \subseteq \mathrm{dom}(M_i)$; so the reduction behind it produces a collision.
\end{proof}

\begin{remark}[Derived, not checked]
\label{rem:derived-vs-checked}
In this protocol the post-state is a function of the pre-state: it is $\mathsf{ins}(T;k,v)$, and the certificate contributes no node to it. Nothing is reconstructed from data chosen by the untrusted operator, so the development needs no validity predicate over anything the operator sends: validity of every state tree is a theorem (Theorem~\ref{thm:run} exhibits the states as $\mathsf{Tree}(M_i)$), and the junction regions used in verification are derived from the terminal key, their correctness being a conclusion of Lemma~\ref{lem:match} rather than a check. The same holds for the user-facing certificates themselves: inclusion and non-inclusion verification (Definition~\ref{def:inclusion-cert}) involves no region locking. Indeed, no statement in this subsection or in Theorem~\ref{thm:inclusion-cert} uses the region operand of the junction hash---all of them would remain valid over a junction hash committing the depth alone, with Lemma~\ref{lem:match} then pinning depths and child digests only and the regions recovered from the matched tree through Lemma~\ref{lem:shape}(ii). The region commitment is needed exactly where tree shape is imported from the untrusted aggregator: the batch transcript attaches opaque preserved subtrees at prover-claimed positions, and there coherent placement must be \emph{checked}, which is what the region commitment makes locally checkable (Remark~\ref{rem:placement}).
\end{remark}

\begin{remark}[From one key to a batch]
\label{rem:single-key-border}
A round inserting $j$ bindings can be certified by $j$ applications of Theorem~\ref{thm:update} against $j$ intermediate roots, so the single-insertion theory is already a complete security argument for the accumulator. What it does not provide is a \emph{compressed} round: its cost is $j$ independent root-to-leaf certificates, whereas a transcript of the touched part of the tree shares the traversed structure across the batch and is verified in one pass. That compression is what forces the proof format of Section~\ref{sec:stack-verifier}, in which the post-state is assembled from prover-supplied opcodes rather than obtained as $\mathsf{ins}$ of the pre-state---and with it, the placement conclusions of Remark~\ref{rem:derived-vs-checked} become the checks of Algorithm~\ref{alg:stackverify}: edge coherence, confinement, and the region commitment that supports them. The next subsection proves that these checks suffice.
\end{remark}

\subsection{Certified Update Security}
\label{sec:consistency-theorem}

We now turn from certificates against one root---and runs of single insertions---to the batch verifier of Section~\ref{sec:stack-verifier} and the sequence of roots it accepts.

\begin{definition}[Certified history; append-only consistency]
\label{def:aoc}
A \emph{certified history} is a sequence $(B_1, \pi_1, r_1), \allowbreak \ldots, \allowbreak (B_n, \pi_n, r_n)$ with $r_0 = \varnothing$ and Algorithm~\ref{alg:stackverify} accepting $(\pi_i, r_{i-1}, r_i, B_i)$ for every $i$. Define cumulative maps by
\[
  M_0 = \emptyset, \qquad M_i = M_{i-1} \uplus B_i \quad (i=1,\ldots,n),
\]
identifying a batch with its induced partial map. The history is \emph{append-only consistent} if every disjoint union above is defined and
\[
  r_i = \mathsf{dig}(\mathsf{Tree}(M_i)) \qquad \text{for } i=0,\ldots,n.
\]
Thus definedness requires every batch to have pairwise distinct keys absent from all earlier batches. By Proposition~\ref{prop:unique}, this definition is equivalent to the existence of valid state trees with the previous round-by-round map-extension property.
\end{definition}

Definition~\ref{def:aoc} is the formal counterpart of clause~(i) in Definition~\ref{def:append-only-accumulator}; Theorem~\ref{thm:inclusion-cert} gives clause~(ii) for every canonical state tree. Theorem~\ref{thm:history} links those trees to the roots of a certified history. Definition~\ref{def:aoc} also subsumes prior-state preservation~\eqref{eq:psp}, with the committed maps of Section~\ref{sec:scope} realized as the cumulative maps $M_i$.

The verifier's checks are node-local. We make this explicit by giving the proof stream a typed syntax tree with recursively defined attributes; the checks become a local predicate on that tree. All reasoning below is structural induction. The algorithm itself appears only in Lemma~\ref{lem:parse}.

\begin{definition}[Proof terms]
\label{def:pterm}
Proof terms are generated by
\begin{align*}
P ::={} & S(c) \mid O(d, p, c_l, c_r) \mid O_L(k, v) \mid{}\\
        & L(k, v) \mid N(d, P_l, P_r)
\end{align*}
with operands in their domains: $c, c_l, c_r \in \{0,1\}^{\lambda}$, $k \in \mathcal{K}$, $v \in \mathcal{V}$, $0 \le d < \kappa$, $p \in \{0,1\}^{d}$. In $N(d, P_l, P_r)$, the subterm $P_l$ has side $\beta = 0$ and $P_r$ has side $\beta = 1$.

The \emph{advice} attribute is
\begin{align*}
\mathsf{adv}(S(c)) &= \bot, & \mathsf{adv}(O(d,p,\cdot,\cdot)) &= (d, p),\\
\mathsf{adv}(O_L(k,v)) &= (\kappa, k), & \mathsf{adv}(L(k,v)) &= (\kappa, k),
\end{align*}
and $\mathsf{adv}(N(d, P_l, P_r)) = (d, p)$ with the \emph{derived region} $p = \varrho_x[0..d)$, computed from any child with $\mathsf{adv}(P_x) = (\delta_x, \varrho_x) \neq \bot$. (The predicate $\mathsf{ok}$ below makes this well-defined.)

The \emph{old digest} attribute is
\begin{align*}
\mathsf{old}(S(c)) &= c, &
\mathsf{old}(O(d,p,c_l,c_r)) &= \mathsf{NodeHash}(d,p,c_l,c_r),\\
\mathsf{old}(O_L(k,v)) &= \mathsf{LeafHash}(k,v), &
\mathsf{old}(L(k,v)) &= \varnothing,
\end{align*}
and, for a junction $N = N(d, P_l, P_r)$ with derived region $p$, writing $o_x = \mathsf{old}(P_x)$,
\[
\mathsf{old}(N) = \mathsf{HashPreState}(d,p,o_l,o_r),
\]
the \emph{four-way rule} of Equation~\eqref{eq:hash-pre-state}. The \emph{new digest} attribute equals $\mathsf{old}$ on $S$, $O$, and $O_L$; further, writing $n_x = \mathsf{new}(P_x)$,
\begin{align*}
\mathsf{new}(L(k,v)) &= \mathsf{LeafHash}(k,v),\\
\mathsf{new}(N(d,P_l,P_r)) &= \mathsf{NodeHash}(d,p,n_l,n_r).
\end{align*}

The predicate $\mathsf{ok}(P)$ holds if every junction $N(d, P_l, P_r)$ in $P$ satisfies:
\begin{enumerate}[nosep,label=(\alph*)]
  \item \emph{coherence}: every advised child, $\mathsf{adv}(P_x) = (\delta_x, \varrho_x)$, has $\delta_x > d$ and $\varrho_x[d] = \beta_x$;
  \item \emph{agreement}: at least one child is advised, and all advised children yield the same $\varrho_x[0..d)$;
  \item \emph{confinement}: if $\mathsf{old}(P_l) = \varnothing$ or $\mathsf{old}(P_r) = \varnothing$, then both children are advised.
\end{enumerate}
$P$ \emph{accepts} $(r_{\mathsf o}, r_{\mathsf n}, B)$ if $\mathsf{ok}(P)$, $\mathsf{old}(P) = r_{\mathsf o}$, $\mathsf{new}(P) = r_{\mathsf n}$, and the labels of the $L$ leaves of $P$, read left to right, are exactly the elements of $B$ sorted, with strictly increasing keys.
\end{definition}

\begin{lemma}[Stream--term correspondence]
\label{lem:parse}
Let $B \neq [\,]$. Algorithm~\ref{alg:stackverify} accepts $(\pi, r_{i-1}, r_i, B)$ iff $\pi$ is the post-order serialization of a proof term that accepts $(r_{i-1}, r_i, B)$. The term is unique and the translation is linear-time.
\end{lemma}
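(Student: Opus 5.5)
The plan is to prove the correspondence by a simulation invariant between the stack machine and a forest of proof terms. First, I will extend the translation to forests: a stream $\pi$ is the post-order serialization of a sequence of terms $P_1,\ldots,P_m$ when $\pi$ is the concatenation of their serializations. By induction on the length of a prefix $\pi'$ of $\pi$, I will show the following. If Algorithm~\ref{alg:stackverify} processes $\pi'$ without rejection, with no stack underflow and no batch overrun, then $\pi'$ is the serialization of a unique forest $P_1,\ldots,P_m$ with $m=|\mathit{st}|$. Moreover, each stack entry $\mathit{st}[j]$ equals the triple $(\mathsf{old}(P_j),\mathsf{new}(P_j),\mathsf{adv}(P_j))$, every $P_j$ satisfies $\mathsf{ok}$, and the $L$ leaves of the forest, read left to right, are labelled $B[0],\ldots,B[b-1]$ of the sorted batch. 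Here each $L$ opcode is read as the term $L(k,v)$ with $(k,v)=B[b]$ at the moment it is consumed. Conversely, any forest of $\mathsf{ok}$ terms with those $L$ labels is processed without rejection.

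The inductive step is a case split on the next opcode, one case per clause of Definition~\ref{def:pterm}. The leaf opcodes $S$, $O$, $O_L$ and $L$ push a one-node term. Their operand-domain assertions are exactly the domain restrictions in the grammar, and the pushed triple is visibly $(\mathsf{old},\mathsf{new},\mathsf{adv})$ of that term. For $N(d)$, the two pops give the last two forest members $P_l,P_r$, in that order, and underflow occurs exactly when fewer than two exist. The coherence loop checks clauses (a) and (b) of $\mathsf{ok}$ at the new root. The assertion $p\neq\bot$ together with the agreement check makes the derived region well defined and equal to the algorithm's $p$. The new-junction branch is clause (c). The two computed digests are the $\mathsf{HashPreState}$ and $\mathsf{NodeHash}$ definitions of $\mathsf{old}$ and $\mathsf{new}$. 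Since $\mathsf{ok}$ of a term is $\mathsf{ok}$ of its subterms plus the root check, $\mathsf{ok}(N(d,P_l,P_r))$ holds iff the step passes. An unknown opcode rejects, and it corresponds to no term.

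At termination, the final assertions $b=|B|$ and $|\mathit{st}|=1$ say that the forest is a single term $P$ whose $L$ labels are the whole sorted batch. The sortedness and strict-increase assertion is the remaining clause of acceptance, and the root comparison is $\mathsf{old}(P)=r_{i-1}$, $\mathsf{new}(P)=r_i$. The empty-batch branch is excluded by $B\neq[\,]$. This gives both directions.

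For uniqueness, I will use the standard fact that post-order over a ranked alphabet (arity $0$ for $S$, $O$, $O_L$, $L$ and arity $2$ for $N$) is uniquely decodable, which is exactly what the stack invariant encodes. The $L$ labels are forced by the sorted batch, so the term is unique. The translation is linear because each opcode does constant work, up to hashing and operand-size comparisons.

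The main obstacle is bookkeeping rather than mathematics. The term grammar carries $(k,v)$ in $L$, while the stream does not, and acceptance is phrased with "labels read left to right". So I must check that post-order consumption of $L$ opcodes coincides with the left-to-right leaf order; this holds because post-order visits leaves left to right. I must also check that rejection conditions such as underflow and overrun map exactly to non-existence of a term.
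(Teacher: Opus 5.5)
Your proposal is correct and takes the same route as the paper's proof. The paper argues that each of $S$, $O$, $O_L$, $L$ pushes one entry and $N$ pops two and pushes one, which gives unique post-order parsing. An induction over the run then shows that the pushed triple is $(\mathsf{old},\mathsf{new},\mathsf{adv})$ of the subterm and that the per-opcode checks are exactly the operand-domain conditions plus clauses (a)--(c) of $\mathsf{ok}$; the final comparisons and the batch consumption match the roots and the sorted $L$ labels. Your forest-over-prefixes invariant is a more explicit rendering of that same induction.
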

\begin{proof}
Each of $S$, $O$, $O_L$, $L$ pushes one stack entry; $N$ pops two and pushes one. A run without underflow that ends with one entry parses the stream uniquely as the post-order serialization of a term; conversely, the serialization of any term replays without underflow and ends with one entry. By induction over the run, the triple pushed for a subterm is exactly $(\mathsf{old}, \mathsf{new}, \mathsf{adv})$ of that subterm, and the checks executed per opcode are exactly the operand-domain conditions of Definition~\ref{def:pterm} plus, at $N$, clauses (a)--(c) of $\mathsf{ok}$. The final comparisons equate the root digest pair with $(r_{i-1}, r_i)$; the batch consumption at $L$ and the final $b = |B|$ check equate the left-to-right $L$ labels with the sorted batch.
\end{proof}

\begin{lemma}[Old projection]
\label{lem:proj}
For a proof term $P$, define the \emph{old projection} $P_0$ by structural recursion: $S(c) \mapsto \mathsf{Hole}(c)$; $O(d,p,c_l,c_r) \mapsto \mathsf{ONode}(d,p,\mathsf{Hole}(c_l),\mathsf{Hole}(c_r))$; $O_L(k,v) \mapsto \mathsf{OLeaf}(k,v)$; $L(k,v) \mapsto$ nothing; $N(d,P_l,P_r)$ with derived region $p$: nothing if both children project to nothing, the surviving child's projection if exactly one does, and $\mathsf{ONode}(d,p,\cdot,\cdot)$ over the two projections otherwise. Then $P_0$ is empty iff $\mathsf{old}(P) = \varnothing$, and otherwise $P_0$ is an opening tree with $\mathsf{eval}(P_0) = \mathsf{old}(P)$.
\end{lemma}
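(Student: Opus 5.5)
The plan is a structural induction on the proof term $P$, proving both claims at once: $P_0$ is empty iff $\mathsf{old}(P)=\varnothing$, and otherwise $P_0$ is an opening tree with $\mathsf{eval}(P_0)=\mathsf{old}(P)$. The projection and the $\mathsf{old}$ attribute are defined by parallel recursions, so each case should reduce to comparing the defining clauses. The one subtlety is that $\mathsf{old}$ and the projection use the \emph{derived} region $p$ at junctions. I will carry that $p$ as a parameter shared by both sides, so the lemma needs no $\mathsf{ok}$ hypothesis beyond what makes $p$ well defined.

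First, the leaf cases. For $S(c)$, the projection is $\mathsf{Hole}(c)$, which is nonempty, and its evaluation is $c=\mathsf{old}(S(c))$. Also $c\in\{0,1\}^\lambda$, while $\varnothing$ lies outside $\{0,1\}^\lambda$, so $c\neq\varnothing$, and the biconditional holds. For $O(d,p,c_l,c_r)$, the projection is $\mathsf{ONode}(d,p,\mathsf{Hole}(c_l),\mathsf{Hole}(c_r))$. It evaluates to $\mathsf{NodeHash}(d,p,c_l,c_r)=\mathsf{old}(O(\ldots))$, which is an $H$-image and hence not $\varnothing$. The case $O_L(k,v)$ is the same with $\mathsf{OLeaf}$ and $\mathsf{LeafHash}$. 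For $L(k,v)$, the projection is empty and $\mathsf{old}=\varnothing$, so both sides of the biconditional are true.

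Next, the junction case $N(d,P_l,P_r)$ with derived region $p$. By induction, each child's projection $P_{x,0}$ is empty exactly when $o_x=\varnothing$, and otherwise $\mathsf{eval}(P_{x,0})=o_x$. I split into the four cases of~\eqref{eq:hash-pre-state}.
\begin{itemize}
  \item Both $o_l$ and $o_r$ are $\varnothing$. Both projections are empty, so $P_0$ is empty, and $\mathsf{old}=\varnothing$.
  \item Exactly one $o_x$ is $\varnothing$. The projection is the surviving child's projection, which is nonempty with evaluation $o_{\bar x}$, and $\mathsf{HashPreState}$ returns exactly $o_{\bar x}\neq\varnothing$.
  \item Neither is $\varnothing$. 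Then $P_0=\mathsf{ONode}(d,p,P_{l,0},P_{r,0})$, which evaluates to $\mathsf{NodeHash}(d,p,o_l,o_r)=\mathsf{old}(N)$. This value is not $\varnothing$.
\end{itemize}
Well-formedness of the resulting opening tree is immediate: the grammar of Definition~\ref{def:opening} only requires operands in their domains, $d<\kappa$, and $p\in\{0,1\}^d$. The derived $p$ is a length-$d$ prefix of an advised child's region, and that region has length at least $d$, because coherence gives $\delta_x>d$ whenever $\mathsf{ok}$ holds.

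I expect the main obstacle to be this point about well-definedness and well-formedness of the derived region, not the digest algebra. The lemma as stated carries no $\mathsf{ok}$ hypothesis, so I would read it as implicitly restricted to terms for which $\mathsf{adv}$ is defined, which is the setting of the parenthetical in Definition~\ref{def:pterm}. I would state that convention explicitly. Under it, agreement makes $p$ independent of which child it is read from, and coherence makes $|p|=d$. The remaining step is the observation that $\varnothing$ is never an $H$-image, which the paper fixes in Section~\ref{sec:consistency-formal}. It is what makes the emptiness biconditional exact in the non-empty cases.
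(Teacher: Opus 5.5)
Your proposal is correct and follows the paper's proof. The paper also argues by structural induction on $P$: the terminal cases are immediate, the junction case is matched case by case against the four-way rule $\mathsf{HashPreState}$, and the fact that $\varnothing$ is never an $H$-image makes the emptiness biconditional exact. Your explicit restriction to terms where the derived region is well defined and has length $d$ is a correct sharpening of what the paper leaves to the parenthetical in Definition~\ref{def:pterm}. That restriction holds under $\mathsf{ok}$, which is exactly the setting in which the lemma is applied in Lemma~\ref{lem:graft}.
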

\begin{proof}
Induction on $P$. A subterm projects to nothing iff its old digest is $\varnothing$: immediate for the terminals, and for $N$ the three projection cases match the four-way rule ($\varnothing$, the two pass-throughs, hashed). In the surviving cases, $\mathsf{eval}$ of the projection recomputes exactly the $\mathsf{old}$ value; the pass-through case forwards both the projection and the digest of the surviving child.
\end{proof}

\begin{lemma}[Reconstruction]
\label{lem:graft}
Let $T$ be a valid tree and let the proof term $P$ accept $(\mathsf{dig}(T), r_{\mathsf n}, B)$. Then there is a valid tree $T'$ with $\mathsf{dig}(T') = r_{\mathsf n}$ and $\mathsf{map}(T') = \mathsf{map}(T) \uplus B$.
\end{lemma}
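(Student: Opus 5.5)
The plan is to build $T'$ explicitly from $P$ and $T$, by structural induction over $P$. The old projection of Lemma~\ref{lem:proj} gives the link to $T$. If $\mathsf{old}(P)=\mathsf{dig}(T)\neq\varnothing$, then $P_0$ is an opening tree with $\mathsf{eval}(P_0)=\mathsf{dig}(T)$, and Lemma~\ref{lem:match} assigns to every $\mathsf{Hole}(c)$ an actual subtree of $T$ with digest $c$. It also identifies every opened node of $P_0$ with the node of $T$ at the same position. If instead $\mathsf{old}(P)=\varnothing$, then $T=\varepsilon$ and $P$ contains no $S$, $O$ or $O_L$ terminals, since any such terminal has a non-$\varnothing$ old digest that would survive the four-way rule.

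The new projection $P_1$ is defined as follows:
\begin{itemize}
\item $S(c)$ maps to its assigned subtree $T_c$;
\item $O(d,p,c_l,c_r)$ maps to the matched junction of $T$, i.e.\ $\mathsf{Node}(d,p,\cdot,\cdot)$ with the assigned children;
\item $O_L(k,v)$ maps to $\mathsf{Leaf}(k,v)$;
\item $L(k,v)$ maps to $\mathsf{Leaf}(k,v)$;
\item $N(d,P_l,P_r)$ maps to $\mathsf{Node}(d,p,(P_l)_1,(P_r)_1)$, where $p$ is the derived region.
\end{itemize}
By induction, $\mathsf{dig}(P_1)=\mathsf{new}(P)$, so $\mathsf{dig}(T')=r_{\mathsf n}$.

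For the map identity, I would prove by induction on subterms that $\mathsf{map}((Q)_1)=\mathsf{map}(Q_0\text{ filled})\uplus\{L\text{-labels in }Q\}$. Here ``$Q_0$ filled'' is the subtree of $T$ obtained by replacing holes with their assigned subtrees. The four-way rule matches exactly how a new $N$ just adds leaves to whatever its children represented. At the root, Lemma~\ref{lem:match}(d) identifies $P_0$ filled with $T$. Disjointness of the union would come at the end: once $T'$ is shown valid, Lemma~\ref{lem:shape}(i) makes its leaf keys distinct.

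The main obstacle is \textbf{validity of $T'$}. Validity inside assigned subtrees and at matched $O$ junctions is inherited from $T$. The remaining check is every edge created at an $N$ node. I would strengthen the induction so that, for each subterm $Q$ whose projection is nonempty, $\mathsf{adv}(Q)$ (when not $\bot$) equals $(\mathsf{dep},\mathsf{reg})$ of $(Q)_1$. This holds directly for $O$, $O_L$ and $L$, and for $N$ because its advice is $(d,p)$ by construction.

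With that invariant, coherence and agreement make the edges to advised children valid. An unadvised child comes from an $S$ leaf, so there are two cases:
\begin{itemize}
\item \textbf{New junction.} If the junction is new, confinement forbids the unadvised child.
\item \textbf{Pre-existing junction.} If both old digests are present, the projection puts $\mathsf{ONode}(d,p,\cdot,\cdot)$ over the two old projections. Lemma~\ref{lem:match} then places a junction $\mathsf{Node}(d,p,\cdot,\cdot)$ of the valid $T$ there, whose child on that side is the subtree filled from the $S$ hole. Its edge is valid in $T$, so it is valid in $T'$ with the same $(d,p)$.
\end{itemize}
The subtle point is that the $N$ node's derived $p$ must coincide with the matched node's region; it does, since the $\mathsf{ONode}$ is built with that same $p$ and matching pins it. One also has to confirm that a pass-through child's filled tree is the child subtree of $T$ at the right position.
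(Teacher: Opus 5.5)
Your proposal is correct and follows the paper's proof essentially step for step. It uses the same graft of $P$ over the holes assigned by Lemma~\ref{lem:match}, the same digest and advice invariants, the same edge classification in which confinement forces every $S$ child under a pre-existing junction matched into $T$, and the same deferral of disjointness to Lemma~\ref{lem:shape}(i). The only difference is cosmetic: you build the leaf set of $T'$ by induction over subterms, whereas the paper reads it off globally from clause (d) of Lemma~\ref{lem:match}.
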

\begin{proof}
First obtain a subtree of $T$ for every digest operand of $P$. If $\mathsf{dig}(T) = \varnothing$ then $T = \varepsilon$, and by Lemma~\ref{lem:proj} $P_0$ is empty; since a non-$\varnothing$ old digest survives every case of the four-way rule, $P$ then contains no $S$, $O$, or $O_L$, and no assignment is needed. Otherwise $\mathsf{eval}(P_0) = \mathsf{dig}(T)$, and Lemma~\ref{lem:match} matches $P_0$ into $T$: every $O$ and $O_L$ of $P$ coincides with the node of $T$ at its position, and every hole---an $S$ operand or a child digest of an $O$---is assigned the subtree of $T$ at its position.

Define $T' = \mathsf{graft}(P)$ by structural recursion: $S(c) \mapsto$ its assigned subtree; $O(d,p,\cdot,\cdot) \mapsto \mathsf{Node}(d,p,\cdot,\cdot)$ over its two assigned subtrees; $O_L(k,v) \mapsto \mathsf{Leaf}(k,v)$; $L(k,v) \mapsto \mathsf{Leaf}(k,v)$; $N(d,P_l,P_r) \mapsto \mathsf{Node}(d,p,\cdot,\cdot)$ with its derived region $p$, over the grafts of its children.

\emph{Digests.} By induction, $\mathsf{dig}(\mathsf{graft}(Q)) = \mathsf{new}(Q)$ for every subterm $Q$: $\mathsf{new}$ agrees with $\mathsf{dig}$ on the terminals and recomputes every junction. Hence $\mathsf{dig}(T') = r_{\mathsf n}$.

\emph{Advice.} Wherever $\mathsf{adv}(Q) = (\delta, \varrho) \neq \bot$, the top node of $\mathsf{graft}(Q)$ has depth $\delta$ and region $\varrho$. This holds by construction for $O$, $O_L$, $L$, and $N$.

\emph{Validity.} Every subterm grafts to a nonempty tree, so every junction of $T'$ has two nonempty children. Classify the edges of $T'$.
(1)~Edges inside assigned subtrees are edges of $T$.
(2)~The two edges below an $O$ graft: by the matching, the $O$ coincides with a junction of $T$ and its assigned children are the children of that junction in $T$; these are edges of $T$.
(3)~An edge whose child is an $S$ graft: the parent is an $N$; the $S$ child is unadvised, so by confinement both old digests at the parent are non-$\varnothing$, and the parent appears in $P_0$ as an $\mathsf{ONode}$ with its derived region. The matching places this junction, and the $T$-subtree assigned to the $S$ hole below it, in $T$; this is an edge of $T$.
(4)~Every remaining edge has an advised child. Coherence gives $\delta > d$ and, with agreement, $p\|\beta \preceq \varrho$; by the advice claim, $(\delta, \varrho)$ is the true depth and region of the child's top node. Edges of $T$ satisfy the validity conditions because $T$ is valid; edges of kind (4) satisfy them directly. Hence $T'$ is valid.

\emph{Map.} The leaves of $T'$ are the leaves of assigned subtrees, the $O_L$ leaves, and the $L$ leaves. By clause (d) of the matching, the first two groups are exactly the leaves of $T$. The $L$ leaves carry exactly the elements of $B$. By Lemma~\ref{lem:shape}(i), all leaf keys of the valid $T'$ are distinct; hence the keys of $B$ are absent from $\mathsf{map}(T)$, and $\mathsf{map}(T') = \mathsf{map}(T) \uplus B$.
\end{proof}

\begin{theorem}[Round soundness]
\label{thm:round}
Let $T$ be a valid tree with $\mathsf{dig}(T) = r_{i-1}$, and let Algorithm~\ref{alg:stackverify} accept $(\pi, r_{i-1}, r_i, B)$. Then there is a valid tree $T'$ with $\mathsf{dig}(T') = r_i$ and $\mathsf{map}(T') = \mathsf{map}(T) \uplus B$. In particular, the keys of $B$ are fresh: no accepting run exists for a batch that re-records a present key, short of a collision.
\end{theorem}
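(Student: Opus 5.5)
The plan is to obtain Theorem~\ref{thm:round} as a direct composition of the stream--term correspondence with the reconstruction lemma. The empty batch is handled separately first; then the heavy lifting is delegated to Lemma~\ref{lem:graft}.

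The empty batch is separated because Lemma~\ref{lem:parse} assumes $B \neq [\,]$. For $B = [\,]$, Algorithm~\ref{alg:stackverify} accepts only if $r_{i-1} = r_i$ and $\pi = [\,]$. We then take $T' = T$, which is valid, has digest $r_i$, and satisfies $\mathsf{map}(T') = \mathsf{map}(T) \uplus \emptyset$.

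For nonempty $B$, the argument has two steps.
\begin{enumerate}[nosep]
  \item Lemma~\ref{lem:parse} converts the accepting run into a unique proof term $P$ whose post-order serialization is $\pi$ and which accepts $(r_{i-1}, r_i, B)$. Concretely, $\mathsf{ok}(P)$ holds, $\mathsf{old}(P) = r_{i-1} = \mathsf{dig}(T)$, $\mathsf{new}(P) = r_i$, and the $L$ labels of $P$ are the sorted batch with strictly increasing keys.
  \item These are exactly the hypotheses of Lemma~\ref{lem:graft}. It yields a valid $T'$ with $\mathsf{dig}(T') = r_i$ and $\mathsf{map}(T') = \mathsf{map}(T) \uplus B$, up to the collision disjunction inherited from Lemma~\ref{lem:match} inside that lemma.
\end{enumerate}

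The freshness clause is then a reading of what ``$\uplus$ is defined'' means. In the map step of Lemma~\ref{lem:graft}, the leaves of $T'$ are the old leaves of $T$ together with the $L$ leaves. Lemma~\ref{lem:shape}(i) makes all leaf keys of the valid $T'$ distinct.

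So suppose $B$ contains a key $k$ already in $\mathsf{map}(T)$. Then $T'$ would carry two leaves with key $k$, which is impossible. Hence the only way such an accepting run can exist is through the collision branch of the matching. That branch is computable in linear time from $P_0$ and $T$, since the translation in Lemma~\ref{lem:parse} and the matching are both linear.

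The main obstacle is not in this composition but is already discharged inside Lemma~\ref{lem:graft}. It is showing that edges below opaque $S$ subtrees are edges of $T$ (via confinement and the matching of the parent $\mathsf{ONode}$), so that validity of $T'$ does not depend on unchecked prover data.

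The one point I would verify explicitly is the reduction to Lemma~\ref{lem:graft} when $\mathsf{dig}(T) = \varnothing$, i.e.\ $T = \varepsilon$. There we need $\mathsf{old}(P) = \varnothing$ to force $P$ to contain no $S$, $O$, or $O_L$. Lemma~\ref{lem:graft} handles this case via Lemma~\ref{lem:proj}, so no extra work is needed beyond citing it.
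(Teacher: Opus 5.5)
Your proposal is correct and follows the paper's proof: you dispatch $B=[\,]$ directly with $T'=T$, and otherwise compose Lemma~\ref{lem:parse} with Lemma~\ref{lem:graft}. Your freshness argument, which uses distinct leaf keys in the valid $T'$ via Lemma~\ref{lem:shape}(i), is the one already carried out in the map step of Lemma~\ref{lem:graft}.
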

\begin{proof}
If $B = [\,]$, acceptance forces $\pi = [\,]$ and $r_i = r_{i-1}$; take $T' = T$. Otherwise Lemma~\ref{lem:parse} yields a proof term accepting $(r_{i-1}, r_i, B)$, and Lemma~\ref{lem:graft} yields $T'$.
\end{proof}

\begin{definition}[Difference generator]
\label{def:generator}
Let $T$ be valid, let $B \neq [\,]$ have pairwise distinct keys absent from $M=\mathsf{map}(T)$, and put $T'=T\oplus B$. For a subtree $U$ of $T'$, let
\[
  M[U] = \{\,k \mapsto v \in M \mid \mathsf{Leaf}(k,v)\text{ occurs below }U\,\}.
\]
The difference generator traverses $T'$ from the root. On reaching a maximal subtree $U$ containing no key of $B$, it stops: Lemma~\ref{lem:persist}(iii) identifies $U$ as a preserved subtree of $T$. It emits $S(\mathsf{dig}(U))$ when the parent junction also occurs in $T$; when the parent is new, it emits the opened root of $U$, namely $O$ with the true node labels and child digests or $O_L$ with the true leaf. A leaf from $B$ emits $L(k,v)$. Every other node $U=\mathsf{Node}(d,p,U_l,U_r)$ emits $N(d,Q_l,Q_r)$ after recursively generating $Q_l$ and $Q_r$. The resulting proof term is $P_B$.

Operationally, $T'$ and $P_B$ are produced together by merging the sorted batch into $T$. Preserved subtrees are emitted at their roots without traversal, so the merge performs constant work per emitted constructor and per batch element.
\end{definition}

\begin{lemma}[Generator invariant]
\label{lem:generator}
For every subterm $Q$ generated for a subtree $U$ of $T'$:
\begin{enumerate}[nosep,label=(\roman*)]
  \item $\mathsf{ok}(Q)$;
  \item $\mathsf{old}(Q)=\mathsf{dig}(\mathsf{Tree}(M[U]))$;
  \item $\mathsf{new}(Q)=\mathsf{dig}(U)$;
  \item either $Q=S(\mathsf{dig}(U))$ and $\mathsf{adv}(Q)=\bot$, or $\mathsf{adv}(Q)=(\mathsf{dep}(U),\mathsf{reg}(U))$;
  \item the $L$ labels of $Q$, from left to right, are exactly the bindings of $B$ below $U$, in strictly increasing key order.
\end{enumerate}
Consequently, $P_B$ accepts $(\mathsf{dig}(T),\mathsf{dig}(T'),B)$ and is computable, together with $T'$, in time $O(|P_B|+|B|)$ after sorting $B$.
\end{lemma}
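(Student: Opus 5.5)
The proof goes by structural induction on the generated subterm $Q$ for a subtree $U$ of $T'=T\oplus B$, following the three cases of Definition~\ref{def:generator}. Throughout, I use Lemma~\ref{lem:persist} and the cone identity (Lemma~\ref{lem:shape}(iv)) to relate the nodes of $T'$ to those of $T$. The basic identification is that a node $U$ of $T'$ with region $\varrho$ has $M[U]=M_\varrho$. This holds because $\mathsf{map}(U)=(M\uplus B)_\varrho$, and deleting the $B$-keys leaves $M_\varrho$. Consequently, by Proposition~\ref{prop:unique}, $\mathsf{Tree}(M[U])$ is the canonical tree of the cone, and that is what clause (ii) compares against.

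\emph{Terminal cases.} There are three.
\begin{itemize}
\item For a maximal $B$-free subtree $U$, Lemma~\ref{lem:persist}(iii) says $U$ is literally a subtree of $T$ with $B_\varrho=\emptyset$. So $M[U]=\mathsf{map}(U)$ and $\mathsf{Tree}(M[U])=U$, and the digests $\mathsf{old}=\mathsf{new}=\mathsf{dig}(U)$ follow for $S$, $O$ and $O_L$ alike. For $O$, the true node labels and child digests make $\mathsf{NodeHash}$ recompute $\mathsf{dig}(U)$.
\item The advice of $O$ and $O_L$ is $(\mathsf{dep}(U),\mathsf{reg}(U))$ by construction, and $S$ has advice $\bot$.
\item For an $L$ leaf from $B$: $M[U]=\emptyset$, so $\mathsf{old}=\varnothing=\mathsf{dig}(\varepsilon)$. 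Also $\mathsf{new}=\mathsf{LeafHash}$ and $\mathsf{adv}=(\kappa,k)$.
\end{itemize}
Clause (v) is trivial in all three cases, and $\mathsf{ok}$ holds vacuously.

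\emph{Junction case.} Let $U=\mathsf{Node}(d,p,U_l,U_r)$, so $Q=N(d,Q_l,Q_r)$. By induction the children satisfy (i)--(v). Validity of $T'$ gives $\mathsf{dep}(U_x)>d$ and $p\|\beta\preceq\mathsf{reg}(U_x)$. Hence every advised child satisfies coherence and yields derived region $p$, so the advice of $Q$ is $(d,p)$.

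For agreement we need at least one advised child. Suppose both children were $S$. Then both are $B$-free, and the only way the generator emits $S$ is when the parent occurs in $T$. But then $U$ itself would occur in $T$ with $B_p=\emptyset$, and the maximal-$B$-free rule would have stopped at $U$, a contradiction. In fact, since $U$ is not $B$-free, some child contains a $B$-key and is therefore an $L$ or $N$, hence advised.

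\emph{Confinement and the old digest.} This step is the main obstacle, and it turns on matching the four-way rule to the cone structure. By Lemma~\ref{lem:persist}(i), $U$ occurs in $T$ iff $M_{p\|0}\ne\emptyset\ne M_{p\|1}$. By the inductive (ii) for the children, this holds iff both $\mathsf{old}(Q_l)$ and $\mathsf{old}(Q_r)$ are non-$\varnothing$. Consider the cases:
\begin{itemize}
\item \emph{Both children nonempty in $M$.} The junction exists in $T$ with region $p$, and its subtree is $\mathsf{Tree}(M_p)$ with children $\mathsf{Tree}(M_{p\|\beta})$. The hashed case of $\mathsf{HashPreState}$ therefore gives $\mathsf{dig}(\mathsf{Tree}(M_p))$.
\item \emph{Exactly one child nonempty.} $M_p=M_{p\|\beta}$, so the pass-through gives the correct digest.
\item \emph{Neither child nonempty.} Both old digests are $\varnothing$ and $M_p=\emptyset$.
\end{itemize}
Confinement concerns the case where some old digest is $\varnothing$, i.e.\ the parent is new. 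There the generator opened every $B$-free child, and the $B$-containing children are $L$ or $N$. Either way both children are advised, so confinement holds. Note that $\mathsf{HashPreState}$ uses the derived region $p$, which is the true region.

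\emph{Remaining clauses and consequence.} For $\mathsf{new}$, $\mathsf{NodeHash}(d,p,\mathsf{dig}(U_l),\mathsf{dig}(U_r))=\mathsf{dig}(U)$. Clause (v) follows from concatenating the children's $L$ labels, using Lemma~\ref{lem:shape}(iii).

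Applying the lemma at the root $U=T'$ gives $M[T']=M$ and $\mathsf{Tree}(M)=T$, so $P_B$ accepts $(\mathsf{dig}(T),\mathsf{dig}(T'),B)$. The cost bound follows from the merge description, which does constant work per emitted constructor and per batch element.
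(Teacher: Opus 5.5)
Your proposal is correct and follows the paper's proof essentially step for step: the same induction over the generator, Lemma~\ref{lem:persist}(iii) for the preserved terminals, the same four-way case split for (ii), and Lemma~\ref{lem:persist}(i) to tie an $\varnothing$ child digest to a new junction for confinement. Two wording fixes only: ``$U$ occurs in $T$'' should read ``a junction with region $p$ occurs in $T$'', and the identification $M[U_x]=M_{p\|\beta}$ that your case split relies on needs the one-line key-placement justification the paper gives.
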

\begin{proof}
Induction over the generator.

If $U$ is preserved, then $M[U]=\mathsf{map}(U)$ and Proposition~\ref{prop:unique} gives $U=\mathsf{Tree}(M[U])$. Both $S$ and the opened-root forms have equal old and new digest $\mathsf{dig}(U)$; the opened forms carry the true top-node advice, while $S$ is unadvised. There are no $L$ labels. If $U=\mathsf{Leaf}(k,v)$ comes from $B$, freshness gives $M[U]=\emptyset$; the emitted $L(k,v)$ has old digest $\varnothing=\mathsf{dig}(\mathsf{Tree}(\emptyset))$, new digest $\mathsf{dig}(U)$, true leaf advice, and the required singleton label list. Thus all five claims hold in the terminal cases.

Let $U=\mathsf{Node}(d,p,U_l,U_r)$ emit $Q=N(d,Q_l,Q_r)$ and apply the induction hypothesis to its children. At least one child term is not $S$: otherwise both child subtrees would contain no batch key and the generator would have stopped at $U$. Every advised child carries its true top-node depth and region, so validity of $U$ gives coherence; all advised children yield the true prefix $p$, so agreement holds and $\mathsf{adv}(Q)=(d,p)$.

The cone identity (Lemma~\ref{lem:shape}(iv)), applied to $U$ in $T'=\mathsf{Tree}(M\uplus B)$, says that $U$ contains exactly the bindings below $p$. Key placement at $U$ then gives $M[U_l]=M_{p\|0}$ and $M[U_r]=M_{p\|1}$; these maps partition $M[U]$. If both parts are empty, so is $M[U]$ and the four-way rule returns $\varnothing$. If exactly one part is nonempty, its unique valid tree is also $\mathsf{Tree}(M[U])$, and the rule passes through its digest. If both are nonempty, the two parts contain keys on opposite sides of $p$; the construction in Proposition~\ref{prop:unique} gives
\[
  \mathsf{Tree}(M[U])
  = \mathsf{Node}(d,p,\mathsf{Tree}(M[U_l]),\mathsf{Tree}(M[U_r])),
\]
so the hashed case gives its digest. This proves (ii).

If either child old digest is $\varnothing$, its old map is empty because no nonempty tree has digest $\varnothing$. Lemma~\ref{lem:persist}(i) then shows that the junction at $p$ is new. A preserved child below it was emitted opened, and every non-preserved child is advised by the induction hypothesis; hence both children are advised and confinement holds. Together with coherence and agreement this proves (i). The definition of $\mathsf{new}$ and the induction hypothesis give (iii). The preceding advice argument gives (iv). Finally, the $L$ list of $Q$ is the concatenation of its two child lists; Lemma~\ref{lem:shape}(iii) places every left key before every right key, proving (v).

At the root, $M[T']=M$, so Proposition~\ref{prop:unique} gives $\mathsf{Tree}(M)=T$. Claims (i)--(v) therefore say exactly that $P_B$ accepts $(\mathsf{dig}(T),\mathsf{dig}(T'),B)$. The operational merge of Definition~\ref{def:generator} visits each emitted constructor and batch element once, giving the stated time bound.
\end{proof}

\begin{theorem}[Completeness]
\label{thm:complete}
Let $T$ be a valid tree and $B$ a batch whose keys are pairwise distinct and absent from $\mathsf{map}(T)$. Then a stream $\pi_B$ such that Algorithm~\ref{alg:stackverify} accepts $(\pi_B, \mathsf{dig}(T), \mathsf{dig}(T \oplus B), B)$ is computable from $T$ and $B$ in linear time after sorting $B$.
\end{theorem}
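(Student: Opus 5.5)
The plan is to assemble the theorem from the difference generator and the stream--term correspondence, with a separate treatment of the empty batch. If $B=[\,]$, then $T\oplus B=T$, so $\mathsf{dig}(T\oplus B)=\mathsf{dig}(T)$. The empty stream $\pi_B=[\,]$ is then accepted by the first branch of Algorithm~\ref{alg:stackverify}, which checks only $r_{i-1}=r_i$ and $\pi=[\,]$. This case costs constant time.

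For $B\neq[\,]$, I will first sort $B$ into strictly increasing key order. Distinctness of keys makes this order strict, as the algorithm's assertion requires. I then run the difference generator of Definition~\ref{def:generator} on $T$ and $B$, which yields the proof term $P_B$ together with $T'=T\oplus B$. Lemma~\ref{lem:generator} states that $P_B$ accepts $(\mathsf{dig}(T),\mathsf{dig}(T'),B)$ in the sense of Definition~\ref{def:pterm}, and that it is produced in time $O(|P_B|+|B|)$ after sorting. Let $\pi_B$ be the post-order serialization of $P_B$, which takes one pass and so time linear in $|P_B|$. The ``if'' direction of Lemma~\ref{lem:parse} then shows that Algorithm~\ref{alg:stackverify} accepts $(\pi_B,\mathsf{dig}(T),\mathsf{dig}(T\oplus B),B)$. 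Two points need checking when this lemma is applied. The verifier's own \textsc{SortTraversalOrder} must agree with the left-to-right $L$ order, which it does because Lemma~\ref{lem:shape}(iii) makes traversal order equal to key order. Also, $L$ opcodes carry no operands on the wire, so serializing $L(k,v)$ simply emits $L$.

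The remaining step, and the main obstacle I expect, is showing that the total cost is linear in the input. The generator's bound $O(|P_B|+|B|)$ is linear in its own output, so I still need $|P_B|=O(|B|\cdot\kappa)$, and in fact $O(|B|)$ up to the depth of the touched paths. I would argue as follows. Every $N$ node of $P_B$ lies on the path from the root to some $L$ leaf. Because $P_B$ is a binary term, the number of terminals it has, counting $S$, $O$, $O_L$ and $L$, exceeds the number of its $N$ nodes by one. Each $N$ node is an ancestor in $T'$ of a batch leaf, and each such root-to-leaf path has at most $\kappa$ nodes, so $|P_B|\le 2\kappa|B|+1$. Combined with the time to read $T$ along these paths, this gives linearity in the size of $T$ and $B$ after sorting, which is the bound the theorem claims.
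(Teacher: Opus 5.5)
Your proposal is correct and follows the paper's proof: the empty stream for $B=[\,]$, and otherwise the generated term $P_B$ of Lemma~\ref{lem:generator}, serialized in post-order and transferred to Algorithm~\ref{alg:stackverify} via Lemma~\ref{lem:parse}. Your explicit bound $|P_B|\le 2\kappa|B|+1$ is a valid extra the paper does not state; it derives the time bound only by combining the generator's $O(|P_B|+|B|)$ bound with the linear-time translation of Lemma~\ref{lem:parse}.
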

\begin{proof}
For $B=[\,]$, take $\pi_B=[\,]$. Otherwise Lemma~\ref{lem:generator} shows that the generated term $P_B$ accepts the required roots and batch. Serialize $P_B$ in post-order and apply Lemma~\ref{lem:parse}; its linear-time bound and the generator bound give the claim.
\end{proof}

\begin{theorem}[History soundness]
\label{thm:history}
Every certified history (Definition~\ref{def:aoc}) is append-only consistent, or a collision is computable from the transcript.
\end{theorem}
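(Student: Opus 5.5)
The proof will be an induction on the round index $i$, carrying the invariant that there is a valid tree $T_i$ with $\mathsf{dig}(T_i)=r_i$ and $\mathsf{map}(T_i)=M_i$, where $M_i$ is the cumulative map of Definition~\ref{def:aoc} and every disjoint union up to round $i$ is defined. For the base case $r_0=\varnothing$, we take $T_0=\varepsilon$, which is valid and satisfies $\mathsf{dig}(\varepsilon)=\varnothing$ and $\mathsf{map}(\varepsilon)=\emptyset=M_0$.

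For the inductive step, assume the invariant holds for $i-1$. Since Algorithm~\ref{alg:stackverify} accepts $(\pi_i,r_{i-1},r_i,B_i)$ and $\mathsf{dig}(T_{i-1})=r_{i-1}$, Theorem~\ref{thm:round} applies directly. It yields a valid $T_i$ with $\mathsf{dig}(T_i)=r_i$ and $\mathsf{map}(T_i)=\mathsf{map}(T_{i-1})\uplus B_i=M_{i-1}\uplus B_i$. This conclusion includes the statement that the union is defined, that is, the keys of $B_i$ are pairwise distinct and absent from $M_{i-1}$. That gives the definedness half of append-only consistency at round $i$. Proposition~\ref{prop:unique} then identifies $T_i$ with $\mathsf{Tree}(M_i)$, so $r_i=\mathsf{dig}(\mathsf{Tree}(M_i))$.

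The one point needing care is the collision disjunction and its efficiency. Theorem~\ref{thm:round} is itself a reduction: it holds unless Lemma~\ref{lem:match}, invoked inside Lemma~\ref{lem:graft}, exposes a collision. That matching runs against the \emph{known} tree $T_{i-1}$, so the extractor must build the trees explicitly. It will maintain $T_{i-1}$ round by round, computing $T_i=\mathsf{graft}(P_i)$ from the parsed term of Lemma~\ref{lem:parse} and the current tree. At the first round where matching finds two distinct $H$-preimages with equal image, it outputs them. If no round fails, the invariant holds through round $n$.

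The cost is one matching plus one graft per round, each linear in the tree and transcript sizes. The total time is therefore polynomial in the size of the transcript, since $|T_i|$ is bounded by the total number of $L$ opcodes seen so far plus a constant factor. A further subtlety is that opaque $S$ holes are resolved against the extractor's explicit $T_{i-1}$ rather than against prover data. This is exactly what Lemma~\ref{lem:graft} assumes, so no additional argument is needed.

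Prior-state preservation~\eqref{eq:psp} then follows as a corollary. From $M_{i-1}\subseteq M_i$ and Theorem~\ref{thm:inclusion-cert}, any certificate against $r_i$ contradicting an earlier binding yields a collision.
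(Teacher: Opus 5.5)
Your proof is correct and follows essentially the paper's own argument. Both run an induction on rounds: apply Theorem~\ref{thm:round} to the valid tree from the previous round, read off that $M_{i-1}\uplus B_i$ is defined, and use Proposition~\ref{prop:unique} to identify the result with $\mathsf{Tree}(M_i)$. The paper applies Theorem~\ref{thm:round} to $\mathsf{Tree}(M_{i-1})$ directly and leaves the collision extractor implicit, so carrying the graft forward and maintaining the tree explicitly is a harmless elaboration of the same argument.
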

\begin{proof}
Induction on rounds. The base map is $M_0=\emptyset$ and $r_0=\varnothing=\mathsf{dig}(\mathsf{Tree}(M_0))$. Assume $M_{i-1}$ is defined and $r_{i-1}=\mathsf{dig}(\mathsf{Tree}(M_{i-1}))$. Apply Theorem~\ref{thm:round} to the valid tree $\mathsf{Tree}(M_{i-1})$ and the accepted round $i$. It yields a valid $T'$ with
\[
  \mathsf{dig}(T')=r_i,
  \qquad
  \mathsf{map}(T')=M_{i-1}\uplus B_i.
\]
Thus the disjoint union defining $M_i$ exists, and Proposition~\ref{prop:unique} gives $T'=\mathsf{Tree}(M_i)$. Hence $r_i=\mathsf{dig}(\mathsf{Tree}(M_i))$, completing the induction.
\end{proof}

\begin{corollary}[Unicity]
\label{cor:unicity}
In a certified history, verifying inclusion certificates for $(k, v)$ against $r_i$ and for $(k, v')$ against $r_j$ imply $v = v'$.
\end{corollary}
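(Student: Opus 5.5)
The plan is to reduce the corollary to Theorem~\ref{thm:history} followed by Theorem~\ref{thm:inclusion-cert}(i); as everywhere in this section, the conclusion is read as ``$v=v'$ or a collision is computable''.

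First, apply Theorem~\ref{thm:history} to the certified history. Either it yields a collision from the transcript, and we are done, or the history is append-only consistent. In the latter case the cumulative maps $M_0\subseteq M_1\subseteq\cdots\subseteq M_n$ are defined as iterated disjoint unions, and each certified root satisfies $r_i=\mathsf{dig}(\mathsf{Tree}(M_i))$.

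Second, interpret the two certificates. The certificate for $(k,v)$ verifies against $r_i=\mathsf{dig}(\mathsf{Tree}(M_i))$, and its terminal is $(k,v)$. Theorem~\ref{thm:inclusion-cert}(i), applied to the map $M_i$, gives $M_i(k)=v$, or else its matching reduction (Lemma~\ref{lem:match}) produces a collision. In the same way the certificate for $(k,v')$ gives $M_j(k)=v'$.

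Third, compare across rounds. Without loss of generality take $i\le j$. Every union $M_j=M_i\uplus B_{i+1}\uplus\cdots\uplus B_j$ is disjoint, so $M_i\subseteq M_j$, which is prior-state preservation~\eqref{eq:psp}. Hence $M_j(k)=M_i(k)=v$, while also $M_j(k)=v'$. Because $M_j$ is a partial function, $v=v'$. The case $i=j$ is the same argument with one map, and it also recovers clause~(ii) of Definition~\ref{def:append-only-accumulator}.

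The only point that needs care is collecting the collision disjunctions. There are three sources: the history theorem, and one certificate reduction for each of the two certificates. All three reductions run in time linear in the transcript and the certificate sizes, so the combined reduction is still efficient. No step seems hard. The substantive work, ruling out placement attacks like the one in Remark~\ref{rem:placement}, is already contained in Theorem~\ref{thm:history} through Theorem~\ref{thm:round}.
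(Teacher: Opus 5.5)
Your proposal is correct and is the paper's proof: Theorem~\ref{thm:history} (or a collision) gives $r_i=\mathsf{dig}(\mathsf{Tree}(M_i))$ and $r_j=\mathsf{dig}(\mathsf{Tree}(M_j))$, Theorem~\ref{thm:inclusion-cert}(i) gives $M_i(k)=v$ and $M_j(k)=v'$, and $M_i\subseteq M_j$ for $i\le j$ forces $v=v'$. Your aside that the case $i=j$ recovers clause~(ii) of Definition~\ref{def:append-only-accumulator} is slightly overstated, because that clause also covers an inclusion certificate against a non-inclusion certificate, but the proof of the corollary itself does not depend on it.
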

\begin{proof}
Apply Theorem~\ref{thm:history}; if it produces a collision, we are done. Otherwise, without loss of generality let $i\leq j$. Then $r_i=\mathsf{dig}(\mathsf{Tree}(M_i))$ and $r_j=\mathsf{dig}(\mathsf{Tree}(M_j))$. Theorem~\ref{thm:inclusion-cert}(i) gives $M_i(k)=v$ and $M_j(k)=v'$. Since $M_i\subseteq M_j$, the two values are equal.
\end{proof}

\begin{corollary}[No out-of-batch bindings]
\label{cor:batchbound}
In a certified history, if an inclusion certificate for $(k,v)$ verifies against $r_i$, then
\[
  (k,v)\in B_1\uplus\cdots\uplus B_i.
\]
Here the $B_j$ are the consistency proof's witness batches. The claim does not assert that every witness binding came from an authenticated user request.
\end{corollary}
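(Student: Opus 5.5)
The plan follows the proof of Corollary~\ref{cor:unicity}: reduce to the committed canonical trees via Theorem~\ref{thm:history}, then read the binding off the certificate with Theorem~\ref{thm:inclusion-cert}(i).

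\emph{Step 1.} Apply Theorem~\ref{thm:history} to the certified history. If it yields a collision, the reduction is complete. Otherwise the history is append-only consistent, and the following hold:
\begin{itemize}[nosep]
  \item every disjoint union $M_j = M_{j-1} \uplus B_j$ is defined;
  \item $r_i = \mathsf{dig}(\mathsf{Tree}(M_i))$.
\end{itemize}

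\emph{Step 2.} Unfold the recursion of Definition~\ref{def:aoc} from $M_0=\emptyset$. This gives $M_i = B_1 \uplus \cdots \uplus B_i$ as partial maps, with each batch identified with its induced map.

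\emph{Step 3.} Invoke Theorem~\ref{thm:inclusion-cert}(i) with $M = M_i$ and $T = \mathsf{Tree}(M_i)$, whose digest is exactly $r_i$. A verifying certificate with terminal $(k',v')$ gives $M_i(k') = v'$. For an inclusion certificate for $(k,v)$ the terminal is $(k,v)$, so $M_i(k) = v$. By Step~2, the binding $k \mapsto v$ therefore belongs to exactly one batch $B_j$ with $j \le i$, which is the claim $(k,v) \in B_1 \uplus \cdots \uplus B_i$.

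The only delicate point is bookkeeping of the collision disjunction. Theorem~\ref{thm:inclusion-cert}(i) itself rests on Lemma~\ref{lem:match}, so it also either concludes or outputs a collision. That collision is computed from the certificate together with $\mathsf{Tree}(M_i)$. The tree is explicitly reconstructible from the transcript, since the batches are the witnesses and Proposition~\ref{prop:unique} gives the tree. Hence both reductions produce a collision from public data in linear time, matching the paper's standing convention.

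The final sentence of the corollary needs no proof. It only notes that the $B_j$ are the witness batches of Algorithm~\ref{alg:stackverify} and not authenticated requests, in line with the scoping of Section~\ref{sec:scope}.
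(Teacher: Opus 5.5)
Your proof is correct and is the paper's own argument: Theorem~\ref{thm:history} gives $M_i=B_1\uplus\cdots\uplus B_i$ and $r_i=\mathsf{dig}(\mathsf{Tree}(M_i))$ unless a collision is found, and Theorem~\ref{thm:inclusion-cert}(i) then reads $M_i(k)=v$ off the certificate. Your collision bookkeeping is a faithful expansion of the paper's standing convention, with one wording fix: say ``computable from the transcript'' rather than ``from public data'', since the batches $B_j$ are non-public witnesses in deployment and are available only as part of the certified history.
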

\begin{proof}
Apply Theorem~\ref{thm:history}. Unless it produces a collision, $M_i=B_1\uplus\cdots\uplus B_i$ and $r_i=\mathsf{dig}(\mathsf{Tree}(M_i))$; Theorem~\ref{thm:inclusion-cert}(i) gives the result.
\end{proof}

\begin{corollary}[No false non-inclusion]
\label{cor:noninc}
In a certified history, let two certificates for the same target key $k$ verify against $r_i$ and $r_j$ with $j\geq i$, the first with terminal key $k$ and the second with terminal key $k'\neq k$. Then a collision is computable from the two certificates and the transcript.
\end{corollary}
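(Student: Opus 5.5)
The plan mirrors the proof of Corollary~\ref{cor:unicity}. First apply Theorem~\ref{thm:history} to the certified history. Either it yields a collision computable from the transcript, and we are done, or the history is append-only consistent. In the latter case the cumulative maps satisfy $M_0\subseteq M_1\subseteq\cdots\subseteq M_n$ and $r_t=\mathsf{dig}(\mathsf{Tree}(M_t))$ for every $t$. From here on we work in that branch, with the canonical trees $\mathsf{Tree}(M_i)$ and $\mathsf{Tree}(M_j)$ as the trees against which the certificates verify.

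Next apply Theorem~\ref{thm:inclusion-cert}(i) to each certificate separately.

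\begin{itemize}
\item The first certificate has terminal key $k$ and verifies against $r_i=\mathsf{dig}(\mathsf{Tree}(M_i))$. Unless a collision results, the theorem gives $M_i(k)=v$ for its terminal value $v$, so $k\in\mathrm{dom}(M_i)$.
\item The second certificate is nonempty with terminal key $k'\neq k$ and verifies for target $k$ against $r_j$. Unless a collision results, the theorem gives $k\notin\mathrm{dom}(M_j)$.
\end{itemize}

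The empty certificate has no terminal key, so it is excluded by hypothesis. Even if it were allowed, it would force $M_j=\emptyset$, which would also contradict the first bullet.

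Since $j\ge i$, we have $M_i\subseteq M_j$, so $k\in\mathrm{dom}(M_j)$. This contradicts the second bullet. Hence the non-collision branch is impossible, and in every case a collision is produced.

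The one step needing care is the claim that the collision is actually \emph{computable} from the two certificates and the transcript, not merely that it exists. I would make this explicit as follows:

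\begin{itemize}
\item Theorem~\ref{thm:history} is a constructive reduction. In its non-collision branch the maps $M_t$ are computed from the batches in the transcript, so $\mathsf{Tree}(M_i)$ and $\mathsf{Tree}(M_j)$ are explicitly available.
\item The reduction inside Theorem~\ref{thm:inclusion-cert}(i) is the matching of Lemma~\ref{lem:match}. It runs $F_{C}$ against the explicit tree, in time linear in the path, and outputs a colliding pair whenever the conclusion fails.
\item So I would run the matching of the second certificate against $\mathsf{Tree}(M_j)$. Because $k\in\mathrm{dom}(M_j)$, the key-directed descent argument in that proof shows the matching cannot succeed without ending at $\mathsf{Leaf}(k,\cdot)$, contradicting $k'\neq k$. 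Therefore some $H$-preimage comparison differs, and that comparison is the collision.
\end{itemize}

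Strictly, the first certificate is not even needed once the history is known to be consistent. Its role is to witness that $k$ was recorded by round $i$. Alternatively, membership can be read off the transcript directly, provided $k$ occurs in some $B_t$ with $t\le i$.
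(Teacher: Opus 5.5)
Your proposal is correct and is essentially the paper's proof: you invoke Theorem~\ref{thm:history}, apply Theorem~\ref{thm:inclusion-cert}(i) to both certificates, and reach a contradiction from $\mathrm{dom}(M_i)\subseteq\mathrm{dom}(M_j)$, so one of the reductions must output a collision. One caution on your computability sketch: you must also run the matching of the first certificate against $\mathsf{Tree}(M_i)$, because when $k\notin\mathrm{dom}(M_i)$ the collision comes from that certificate rather than the second, so your closing claim that the first certificate is ``not even needed'' holds only under your proviso that $k$ appears in some $B_t$ with $t\le i$.
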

\begin{proof}
Apply Theorem~\ref{thm:history}; if it produces a collision, we are done. Otherwise the certified roots are the digests of $\mathsf{Tree}(M_i)$ and $\mathsf{Tree}(M_j)$. Theorem~\ref{thm:inclusion-cert}(i) gives
\[
  k\in\mathrm{dom}(M_i)\subseteq\mathrm{dom}(M_j)
  \quad\text{and}\quad
  k\notin\mathrm{dom}(M_j),
\]
a contradiction. Thus one of the reductions produces a collision.
\end{proof}

Because the two relations use one certificate object, this says that the terminal key of an accepted certificate for $k$ is monotone along a certified history: once some accepted certificate for $k$ terminates at $k$, no later root admits one that terminates elsewhere, short of a collision.

\begin{corollary}[Certified inclusion/non-inclusion service]
\label{cor:service}
Let $r_i$ be a root of a certified history.
\begin{enumerate}[nosep,label=(\roman*)]
  \item Given $\mathsf{Tree}(M_i)$, one certificate for $k$ verifying against $r_i$ is constructed in time linear in its path length. Its terminal key is $k$ when $k\in\mathrm{dom}(M_i)$ and differs from $k$ otherwise, so a single walk of the tree certifies either relation.
  \item From any certificate that verifies against $r_i$, the statement it certifies about $M_i$ follows.
\end{enumerate}
\end{corollary}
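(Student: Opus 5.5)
Both clauses follow from results already in hand once the root $r_i$ is tied to a concrete tree, so the plan is to fix that tree first and then invoke Theorem~\ref{thm:inclusion-cert} for each clause.

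\emph{Anchoring.} Apply Theorem~\ref{thm:history} to the certified history containing $r_i$. Unless a collision is produced from the transcript, the history is append-only consistent. Hence $M_i = B_1 \uplus \cdots \uplus B_i$ is a well-defined finite partial map and $r_i = \mathsf{dig}(\mathsf{Tree}(M_i))$. Proposition~\ref{prop:unique} makes $\mathsf{Tree}(M_i)$ the unique valid tree for $M_i$, so ``the tree committed by $r_i$'' is unambiguous. All subsequent statements carry the collision disjunction, as fixed in Section~\ref{sec:consistency-formal}.

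\emph{Clause (i).} Set $T=\mathsf{Tree}(M_i)$ and split on whether $k\in\mathrm{dom}(M_i)$.
\begin{itemize}[nosep]
  \item If $k\in\mathrm{dom}(M_i)$, Theorem~\ref{thm:inclusion-cert}(ii) gives the leaf-to-root inclusion certificate with terminal $(k,M_i(k))$.
  \item If $k\notin\mathrm{dom}(M_i)$ and $T=\varepsilon$, use the empty certificate.
  \item If $k\notin\mathrm{dom}(M_i)$ and $T\neq\varepsilon$, key-directed descent gives a certificate whose terminal key differs from $k$.
\end{itemize}
Both constructions verify against $\mathsf{dig}(T)=r_i$ and run in time linear in the path length.

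The claim that \emph{one} walk suffices, without knowing membership in advance, needs a short argument. Key-directed descent for $k$ is deterministic. By the argument inside Theorem~\ref{thm:inclusion-cert}(i), using Lemma~\ref{lem:shape}(i), the descent lands on $\mathsf{Leaf}(k,\cdot)$ whenever $k$ is recorded. Therefore the inclusion and non-inclusion constructions are the same procedure: record depth and sibling digest at each junction passed, and take the reached leaf as terminal. The terminal key equals $k$ iff $k\in\mathrm{dom}(M_i)$.

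\emph{Clause (ii).} Apply Theorem~\ref{thm:inclusion-cert}(i) directly to $M_i$, since $r_i=\mathsf{dig}(\mathsf{Tree}(M_i))$.
\begin{itemize}[nosep]
  \item An empty certificate yields $M_i=\emptyset$.
  \item A terminal $(k',v')$ yields $M_i(k')=v'$.
  \item If $k'=k$, this is $M_i(k)=v'$.
  \item If $k'\neq k$, this is $k\notin\mathrm{dom}(M_i)$.
\end{itemize}
Optionally, add Corollaries~\ref{cor:unicity} and~\ref{cor:noninc} to note that these statements are consistent across all roots of the history.

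\emph{Main obstacle.} The only point needing care is the anchoring step: the certificate theorem is stated for $\mathsf{Tree}(M)$ of a known map, while here we are given only a certified root. Theorem~\ref{thm:history} bridges exactly this gap. Clause (i) also assumes the operator holds $\mathsf{Tree}(M_i)$, which is a hypothesis of the statement, so it needs no proof.
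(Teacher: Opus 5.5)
Your proposal is correct and matches the paper's proof. First, Theorem~\ref{thm:history} gives $r_i=\mathsf{dig}(\mathsf{Tree}(M_i))$ unless a collision results; then clauses (i) and (ii) follow from Theorem~\ref{thm:inclusion-cert}(ii) and~(i), respectively. You also argue explicitly that the inclusion and non-inclusion constructions are the same key-directed descent, which spells out the ``single walk'' claim that the paper leaves implicit in the descent argument inside Theorem~\ref{thm:inclusion-cert}(i).
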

\begin{proof}
Apply Theorem~\ref{thm:history}; unless it produces a collision, $r_i=\mathsf{dig}(\mathsf{Tree}(M_i))$. The claims are then Theorem~\ref{thm:inclusion-cert}(ii) and~(i) respectively, including its construction and time bounds.
\end{proof}

Clause~(ii) is stated over a certificate that has been supplied, not over the set of strings that would verify. The stronger reading---that no verifying certificate for a recorded key \emph{exists}---does not follow from collision resistance: such a certificate is an encoded collision, and the assumption is that collisions cannot be found, not that they are absent.

\begin{remark}
Theorem~\ref{thm:inclusion-cert} establishes soundness and completeness of inclusion and non-inclusion certificates, and Theorems~\ref{thm:update} and~\ref{thm:run} lift them to a complete security argument for the accumulator updated one insertion at a time. Theorems~\ref{thm:round} and~\ref{thm:complete} do the same for each batched transition, while Theorem~\ref{thm:history} and Corollaries~\ref{cor:unicity}, \ref{cor:batchbound}, \ref{cor:noninc}, and~\ref{cor:service} lift the results to certified histories. Together they establish the accumulator interface of Definition~\ref{def:append-only-accumulator}. The core arguments use structural induction over trees, opening trees, or proof terms, with recursively defined attributes and explicit collision extraction; certificate correctness reduces directly to Matching and canonical path construction. The consistency algorithm appears only in Lemma~\ref{lem:parse}; a re-arithmetization of that verifier, such as the AIR of Section~\ref{sec:custom-air-circuit}, only needs to be proven equivalent to Algorithm~\ref{alg:stackverify}, and the rest of the argument carries over unchanged.
\end{remark}

\section{Custom AIR Circuit}
\label{sec:custom-air-circuit}

By proving correct execution of the consistency-proof verifier inside a succinct proof system, the proof shipped to the Consensus Layer becomes nearly independent of the batch size. The semantic public transition is the root pair $(r_{i-1}, r_i)$ and an old-root absence flag; a versioned proof envelope additionally carries the fixed protocol identifier and a small vector of scalar trace-shape counts. The batch, opcode stream, openings, and intermediate values are witness data and do not appear as public inputs. The implementation uses the STARK for computational integrity and succinctness, not for confidentiality: its committed traces are not zero-knowledge masked. Proving cost is dominated by the hash function: bit-oriented hashes such as SHA-256 are expensive to arithmetize, while arithmetization-friendly permutations such as Poseidon2 operate directly on field elements. A hand-built AIR (Algebraic Intermediate Representation) circuit over a small prime field, with a transparent FRI-based polynomial commitment scheme, exploits both effects and avoids the integer-to-field translation overhead of general-purpose zkVMs.

We implemented the consistency-proof verification logic as a Plonky3~\cite{plonky3} batch STARK over the BabyBear~\cite{babybear} 31-bit prime field, using Poseidon2~\cite{cryptoeprint:2023/323} (width 16, $\alpha=7$) for leaf and junction hashing. The arithmetization and its canonical proof protocol are open-source~\cite{rsmtair}.

\subsection{Underlying Tree Variant}

The implementation proves consistency for the RSMT of Section~\ref{sec:stack-verifier}: a path-compressed Patricia trie over 256-bit keys. A leaf commits to an exact 32-byte key and an exact 32-byte value through a three-step additive Poseidon2 sponge; applications that need variable-length values first hash them to this fixed-width value domain. An internal junction is the unique point where two non-empty subtrees diverge, and commits to both its absolute bifurcation depth and its canonical, left-aligned region. Empty old subtrees are represented by an optional digest; at the public boundary an explicit presence bit distinguishes absence from a present all-zero digest. Path compression eliminates single-child chains while the region commitment preserves key-determined placement. The circuit implements the full append-only-consistency statement of Definition~\ref{def:aoc}: prior-state preservation, canonical leaves and openings, edge coherence, and confinement of newly created junctions.

\subsection{From Stack Verifier to AIR Arithmetization}

The consistency proof is the flat post-order opcode stream of Section~\ref{sec:stack-verifier}. The arithmetization assigns one Table~A row to each $S$, $O$, $O_L$, $L$, or $N$ opcode, but it does not materialize the verifier's stack. Instead, every non-root opcode row is consumed exactly once as a child of a later $N$ row, and its digest pair and advice $(\delta,\varrho)$ are packed to one LogUp~\cite{cryptoeprint:2022/1530} tuple. Helper rows return the result of a leaf, opening, or join to the matching opcode row. Thus a junction receives exactly the two stack entries that the verifier algorithm would pop. The last real\footnote{Here, ``real'' means the row is not a padding row} Table~A row is constrained to the public roots and absence flag.

Canonical input digits and all three leaf permutations are fused into $L$ table, thus there is no need for a separate input batch table. Joins and openings are separated into $J$ and $O$ because they need different witnesses and canonicality rules. The last step is justified by the post-order/coherence ordering lemma: at every junction, coherence places every left-subtree key below every right-subtree key, while post-order emits the complete left subtree before the right. Consequently the subsequence of $L$ rows is strictly increasing in Table~A order and represents the existential batch. The result is the as-implemented $A/B/L/J/O/R/P$ architecture described next.

\subsection{AIR Tables}

The arithmetization is split across seven AIR tables that share one batch-STARK commitment. Each table is padded independently to a power-of-two height. All per-round witness values are in main traces; preprocessing consists only of verifier-reconstructible row indices, real-row and permutation-mode masks, and fixed lookup tables derived from the protocol version and the public scalar shape.

\begin{description}[nosep]
    \item[Table A (opcode execution)] has one row per opcode and one-hot selector columns for $S$\slash$O$\slash$O_L$\slash$L$\slash$N$. A row carries the old and new digests, the old-digest absence bit, advice $(\delta,\varrho)$, and the start index of its post-order subtree. Local rules implement the base-opcode semantics; the last real row binds the execution to $(r_{i-1},r_i)$ and the public old-root absence flag.
    \item[Table B (Poseidon2)] is the sole evaluator of every Poseidon2 permutation requested by the other tables, using eight vector lanes per row. It records occurrences rather than distinct inputs: two logically separate equal calls occupy multiplicity two. Feed-forward calls expose the full 16-element output, while terminal calls expose only the eight-element digest used by the RSMT.
    \item[Table L (canonical leaves)] has one row for every new leaf $L$ and opened leaf $O_L$. Twenty-six radix-$2^{10}$ digits each for the key and value reconstruct the nine field limbs of an exact 32-byte string; the digits are range-checked, and the reconstructed limbs feed all three steps of the leaf sponge. The row returns the digest together with the key limbs to its Table~A row. The different old-state semantics of $L$ and $O_L$ are imposed in A, so L needs no ``kind'' column.
    \item[Table J (junction joins)] has one row per $N$. It consumes the two child tuples, proves their contiguous post-order locations, derives their common parent region, checks increasing depth and the left/right divergence bit, and enforces confinement whenever the junction is new. The old digest follows the four cases $00/01/10/11$ (empty, right passthrough, left passthrough, or old-child hash) of Formula~\ref{eq:hash-pre-state} (the four-way-rule); the new side always hashes both new children. One node-prefix permutation is shared locally between the old and new child blocks when both are needed.
    \item[Table O (canonical junction openings)] has one row per $O$. It reconstructs the opened region from range-checked digits, proves that it is exactly a left-aligned $d$-bit prefix with a zero suffix, hashes the opened child digests, and returns the unchanged digest and advice to A. Separating O from J both removes join-only columns from opening rows and closes the canonical-region gap of the union layout.
    \item[Table R (range)] is the fixed relation $\{(b,x):0\leq b\leq10,\ 0\leq x<2^b\}$, padded to 2048 rows. One global range bus serves leaf digits, region digits, depths, depth gaps, and the bounded pieces of the coherence equations; its main multiplicities are fixed by global bus balance.
    \item[Table P (powers)] is the fixed table $(e,2^e)$ for $0\leq e\leq30$, padded to 32 rows. It anchors the boundary-bit powers used when J and O split a packed region limb at an arbitrary depth.
\end{description}

Seven global LogUp relations tie the tables together:
\begin{enumerate}[nosep]
    \item \emph{tree}: every non-last real A row sends $(\mathsf{row},\mathsf{subtree\_start},\mathsf{old},\mathsf{new},\mathsf{none},\mathsf{has},\delta,\varrho)$; J receives these tuples as left and right children;
    \item \emph{parent}: J and O return their complete parent tuple to the matching A row;
    \item \emph{leaf}: L returns $(\mathsf{row},\mathsf{digest},\mathsf{key})$ to an $L$ or $O_L$ row of A;
    \item \emph{p2ff} and \emph{p2term}: B supplies respectively full-output feed-forward permutations and digest-only terminal permutations to L, J, and O;
    \item \emph{range}: R supplies every bounded digit or helper value used by L, J, and O; and
    \item \emph{pow2}: P supplies the powers used by the region-boundary equations in J and O.
\end{enumerate}

The tree relation and the $\mathsf{subtree\_start}$ equations make the A rows one acyclic, spanning, contiguous post-order tree. The leaf and parent relations bind helper results back to the correct opcode row, while digest and advice always travel in the same tuple. The two Poseidon2 relations bind every sponge step to permutation occurrence; the range and power relations turn the packed-region field equalities into the intended integer and bit-prefix equalities. Composing these relations with the local constraints yields an execution of the complete reference stack verifier on an existential sequence of canonical 32-byte leaves. Its final state is the public transition and satisfies append-only consistency (Definition~\ref{def:aoc}).

\subsection{Design Optimizations and Implementation Gotchas}

The main optimizations in the presented implementation are also responses to failure modes found while building the earlier layouts:

\begin{itemize}[nosep]
    \item \emph{Do without the stack and the batch table.} Row-indexed multiset relations replace an explicit stack/RAM table. The new-leaf ordering lemma replaces a verifier-supplied sorted-batch table: topology and coherence already force the $L$ keys into strict order. Accordingly, the implementation proves that \emph{some} canonical batch realizes the transition; attesting a particular external queue would additionally require a public batch commitment and an in-AIR binding to it.
    \item \emph{Fuse leaves, split unlike row kinds.} Fusing canonical leaf input with its three sponge calls removes cross-row continuity and a batch bus. Conversely, separating J and O avoids paying join width on openings and permits O to enforce the stronger canonical-region relation it actually needs.
    \item \emph{Derive cheap selectors and helpers.} A derives ``has advice'' from the opcode, and J derives old-state cases, gaps, and common-prefix pieces where doing so does not raise the constraint degree unacceptably. Materialized helpers remain only where they keep lookup tuples linear.
    \item \emph{Prove post-order topology algebraically.} A free left-child pointer plus multiset balance can leave a disconnected functional-graph cycle; excluding it would silently borrow a Poseidon2 fixed-point assumption. The $\mathsf{subtree\_start}$ recurrence makes child indices strictly decrease and forces the root's subtree to start at row zero, ruling out cycles without a cryptographic assumption.
    \item \emph{Count hash occurrences, not distinct inputs.} Global Poseidon2 deduplication is unsound for completeness when one sender of multiplicity one must satisfy two equal logical receives. The arithmetization records every occurrence and shares only the node-prefix call that is explicitly local to one J row. A future deduplicated setup would need explicit sender multiplicities and no-wrap checks.
    \item \emph{Separate terminal from feed-forward hash buses.} A feed-forward output is another permutation's full state, whereas a terminal output contributes only an eight-limb digest. Masking the unused tail inside one lookup tuple raised the tuple degree beyond what the pinned batch-STARK path supported, so we use separate \emph{p2ff} and \emph{p2term} relations. This also ensures the propagated digest column itself is bound to the permutation output.
    \item \emph{Keep digest and placement advice inseparable.} Sending them on different buses would allow a prover to splice one row's digest to another row's region. The tree and parent tuples therefore carry the digest, absence bit, depth, region, and subtree start together, keyed by a verifier-fixed row index.
    \item \emph{Make setup verifier-owned.} We put all witness data in main traces; the verifier validates a bounded scalar shape and reconstructs every AIR, mask, lookup definition, and preprocessing commitment independently. The protocol tag fixes the table layouts, encodings, hash constants, transcript, and FRI configuration.
    \item \emph{Treat padding and empty rounds as protocol cases.} Padding rows are constrained to a syntactic zero form and have zero lookup multiplicity; otherwise independently padded tables can create spurious bus traffic. A truly empty batch is not represented by a zero-row AIR with an unconstrained boundary, but by a separate identity-transition object that checks equality of present old and new roots.
    \item \emph{Gate lookup-column optimizations end to end.} Combining two same-direction global LogUp receives appeared to save extension columns, but failed the out-of-domain consistency check in the pinned Plonky3 implementation. Thus, we retain one lookup entry per context.
\end{itemize}

The implementation validates these choices with per-table constraint tests, cross-table balance tests, differential tests against the reference verifier, and adversarial trace mutations~\cite{rsmtair}. The negative cases include corrupted subtree starts, digest/advice splicing, non-canonical leaf or region digits, missing advice at a new junction, duplicated or omitted permutation occurrences, forged range or power multiplicities, nonzero padding, malformed shapes, and altered public roots.

\subsection{Cryptographic Setup}

The proof system uses BabyBear and its degree-four extension for the AIR and LogUp challenges. Its polynomial commitment scheme is \texttt{TwoAdicFriPcs} backed by Merkle trees, and Fiat--Shamir uses a duplex Poseidon2 challenger. The production protocol fixes the field, Poseidon2 constants, AIR and bus definitions, transcript domain, public-shape limits, and FRI parameters under one versioned identifier; none is prover-selectable. The presented configuration targets a conjectured $> 100$-bit standalone STARK/FRI soundness level. No trusted setup is required. The benchmark configurations in Section~\ref{sec:measured} are performance experiments.

\subsection{Measured Performance}
\label{sec:measured}

Hardware: 10-core Apple M1 Pro; release build utilizing the 8 performance cores; one consistency proof per row.

\begin{table}[ht]
\centering
\caption{Default configuration ($\sim$116-bit conjectured soundness, Poseidon2 as FRI hash, fresh-batch workload).}
\label{tab:perf-default}
\small
\begin{tabular}{r|rrrr|r}
\toprule
\textbf{batch} & \textbf{wit} & \textbf{trace} & \textbf{prove} & \textbf{verify} & \textbf{proof} \\
\midrule
1024 & 11\,ms & 5\,ms & 149\,ms & 44\,ms & 1.69\,MB \\
4096 & 20\,ms & 4\,ms & 143\,ms & 45\,ms & 1.76\,MB \\
8192 & 37\,ms & 5\,ms & 228\,ms & 47\,ms & 1.82\,MB \\
\bottomrule
\end{tabular}
\end{table}

\begin{table*}[htb]
\centering
\caption{Effect of FRI parameters (batch 4096).}
\label{tab:perf-knobs}
\small
\begin{tabular}{l|rrr}
\toprule
\textbf{configuration} & \textbf{prove} & \textbf{verify} & \textbf{proof} \\
\midrule
default & 143\,ms & 45\,ms & 1.76\,MB \\
small proof (blowup\,$=2$, q\,$=50$) & 198\,ms & 24\,ms & 0.93\,MB \\
high grinding (bits\,$=24$, q\,$=92$) & 985\,ms & 42\,ms & 1.62\,MB \\
\bottomrule
\end{tabular}
\end{table*}

\begin{table}[ht]
\centering
\caption{Choice of FRI / commitment hash (batch 4096, internal tree hash is always Poseidon2).}
\label{tab:perf-frihash}
\small
\begin{tabular}{l|rrr}
\toprule
\textbf{FRI hash} & \textbf{prove} & \textbf{verify} & \textbf{proof} \\
\midrule
Poseidon2 & 138\,ms & 41\,ms & 1.76\,MB \\
SHA-256   & 149\,ms & 18\,ms & 1.69\,MB \\
Blake3    & 130\,ms & 12\,ms & 1.69\,MB \\
\bottomrule
\end{tabular}
\end{table}

A few observations on the measured data:

\begin{itemize}[nosep]
    \item Sustained proving throughput on a single CPU is $\sim$28\,000 tx/s at the default config and $\sim$36\,000 tx/s at a batch of 8192. This comfortably exceeds the original design target of $10\,000$ tx/s per shard.
    \item Increasing the FRI blowup from $2^1$ to $2^2$ halves the proof size and the verifier work, at the cost of $\sim$1.4$\times$ prover time. This is the natural ``small proof'' operating point.
    \item Increasing the Fiat--Shamir grinding bits from $16$ to $24$ slows the prover by $\sim$5--7$\times$ for an $\sim$8\% proof-size reduction; not a useful trade in this parameter range.
    \item The choice of FRI / Merkle hash is largely independent of in-circuit performance. For a \emph{native} verifier (e.g., the Consensus Layer's BFT core), Blake3 gives a $3\times$ faster verification at the same prove time. For \emph{recursive} verification inside another circuit, Poseidon2 is the natural choice because it is field-friendly.
    \item Table B (Poseidon2 evaluations) dominates the cell count, accounting for $\sim$76\% of the trace across all batch sizes. Further proving-side gains will come primarily from improvements to the Poseidon2 AIR.
    \item Verification time is essentially flat ($\sim$40\,ms with Poseidon2 FRI, $\sim$12\,ms with Blake3) across the measured batch range, dominated by FRI Merkle openings rather than by anything that scales with batch size.
\end{itemize}

Determinism is preserved across serial and parallel builds: \texttt{prove\_batch} output is byte-identical between the two, for the same challenger seed.

\subsection{Deployment}

Operationally, the AIR-based consistency-proof producer is a drop-in component of the Unicity aggregator implementation. The proof's public statement is unchanged ($r_{i-1}, r_i$); only the proof bytes shipped from the Aggregation Layer to the Consensus Layer change. Speculative execution of the next round overlaps with the unicity-certificate wait of the current round, so the AIR's prover latency does not lengthen the user-visible round time.

The entire proving stack runs on a single CPU. Unlike L1 ZK-rollup proving pipelines, there is no GPU farm, no proving market, no proving service in the critical path of a round. A single Aggregator that sustains $10\,000$ insertions per second fits within the power budget of a laptop charger. The one place where heavier proving does occur---the recursive aggregation of the next section---runs off the critical path, at checkpoint cadence, and may be performed by any independent operator.

\section{Proof Aggregation and Full-History Audit}
\label{sec:aggregation-audit}

The per-round consistency proofs of Section~\ref{sec:custom-air-circuit} are verified by the BFT Core before certification; a user who relies on a Unicity Certificate therefore relies on a quorum of the committee having performed that verification honestly. This section removes that remaining reliance. All per-round proofs, from all shards, over the entire operating history of the system, are folded into a single fixed-size STARK. Verifying this one proof establishes---under cryptographic assumptions alone---that every certified state transition since genesis was consistent, reducing the BFT committee's residual role from ``trusted to compute correctly'' to ``accountable for not equivocating''.

\subsection{The Aggregate Statement}
\label{sec:aggregate-statement}

Fix a network instance identifier $\alpha$ and its genesis configuration digest $g$, which commits to the initial sharding scheme $\mathcal{SH}_0$, its empty shard roots, and the genesis Trust Base entry. In round $i$, let $\mathcal{SH}_i$ be the active prefix scheme, let $c_i=H(\mathcal{SH}_i)$ be its canonical commitment, and let $R_i$ be the root of the prefix-shaped shard-root tree over $\{r_{i,\sigma}\}_{\sigma\in\mathcal{SH}_i}$ (Section~\ref{sec:sharding-architecture}). Let $D_i$ be an append-only commitment---a Merkle Mountain Range~\cite{mmr} (MMR)---to the certified sequence $\big((1,c_1,R_1),\ldots,(i,c_i,R_i)\big)$. The MMR supports appending an element (recomputable in-circuit from $D_{i-1}$ and a logarithmic-size witness) and proving membership of any $(j,c_j,R_j)$ against $D_i$ with a logarithmic-size path.

\begin{definition}[Round correctness]
\label{def:roundok}
$\mathsf{RoundOK}(i)$ holds iff
\begin{enumerate}[nosep]
  \item for every shard that persists from $\mathcal{SH}_{i-1}$ to $\mathcal{SH}_i$, either its root is unchanged or its consistency proof $\pi_{i,\sigma}$ verifies the transition $(r_{i-1,\sigma},r_{i,\sigma})$ per Algorithm~\ref{alg:stackverify} (in its STARK-compressed form);
  \item if $\mathcal{SH}_i\ne\mathcal{SH}_{i-1}$, every change replaces a prefix $\sigma$ by $\sigma\|0$ and $\sigma\|1$, and an authenticated split witness proves that the two child seed roots select exactly the corresponding subtrees of $r_{i-1,\sigma}$; any same-round child updates are then covered by ordinary consistency proofs from those seed roots;
  \item $c_i=H(\mathcal{SH}_i)$ and $R_i$ is the root of the shard-root tree determined by $\mathcal{SH}_i$ and its current shard roots; and
  \item $D_i$ extends $D_{i-1}$ by exactly the element $(i,c_i,R_i)$.
\end{enumerate}
\end{definition}

The aggregate statement $\mathcal{A}_n$ is then: \emph{starting from the genesis configuration $g$, there exists a sequence of rounds $1, \ldots, n$ such that $\mathsf{RoundOK}(i)$ holds for every $i$}. Its public values are
\[ \mathsf{pv}_n = (\alpha,\, g,\, n,\, c_n,\, R_n,\, D_n). \]
Everything else---the intermediate sharding schemes and shard roots, per-round consistency and split proofs, and MMR witnesses---is non-public witness data, so the proof and its statement have constant size regardless of $n$ and of the number of shards.

Note that validator signatures are absent from the statement. The aggregate proof does not verify the committee's quorum signatures, because the truth of $\mathcal{A}_n$ does not depend on \emph{who} certified the transitions; consistency is a property of the data itself. The link to the certified reality is made by the verifier, outside the proof, by checking that the roots it cares about coincide with Unicity Certificates validated against the Trust Base (Section~\ref{sec:maxi-validation}). Folding signature verification into the circuit is possible but adds cost without adding security: it still could not prevent a quorum from signing two divergent histories, which remains the only residual attack (Section~\ref{sec:residual-trust}).

\subsection{Recursive Aggregation}

The statement has the classic shape of incrementally verifiable computation~\cite{valiant2008ivc}: a step relation ($\mathsf{RoundOK}$) iterated from a known initial state ($g$), with a succinct proof carried forward. The aggregate proof $\Pi_i$ attests:
\begin{enumerate}[nosep]
  \item there exists a valid aggregate proof $\Pi_{i-1}$ for public values $\mathsf{pv}_{i-1}$ under the same program (or $i = 1$ and $\mathsf{pv}_0$ is the genesis state derived from $g$); and
  \item $\mathsf{RoundOK}(i)$ holds with respect to $\mathsf{pv}_{i-1} \to \mathsf{pv}_i$.
\end{enumerate}
In practice, one recursion step folds a contiguous span of rounds rather than a single one; the prover chooses the span to balance proving latency against per-step overhead. The resulting proof chain is checkpointed at a protocol-defined cadence (e.g., hourly), and only the latest checkpoint needs to be retained or distributed.

\subsection{Instantiation on the SP1 zkVM}
\label{sec:sp1-instantiation}

We instantiate the recursion on the SP1 zkVM~\cite{sp1}, which supports the required proof composition natively: a guest program (ordinary Rust, compiled to RISC-V) can verify SP1 \emph{compressed} proofs of other guest programs---including itself---as an in-VM operation whose cost is small and independent of the size of the computation folded so far. The aggregation guest program does the following:

\begin{enumerate}[nosep]
  \item reads $\mathsf{pv}_{i-1}$ and verifies the previous aggregate proof against the aggregation program's own verifying-key digest, which is itself bound into the public values; the top-level verifier checks this digest once, against the value it knows from the software distribution, closing the recursion;
  \item for each round in the span and each shard with a changed root, verifies the shard's Plonky3 consistency STARK by running the STARK verifier as guest code. The shard proofs destined for aggregation use the SHA-256 FRI configuration (Table~\ref{tab:perf-frihash}) precisely so that this step is cheap: the verifier's work is dominated by FRI Merkle-path hashing, which maps directly onto the zkVM's SHA-256 precompile;
  \item when the sharding-scheme commitment changes, verifies the split witnesses that bind each child seed root to the last certified parent root;
  \item recomputes each round's scheme commitment $c_i$ and global root $R_i$, appends $(i,c_i,R_i)$ to the MMR, and commits $\mathsf{pv}_i$ as the new public values.
\end{enumerate}

The native verification cost of one shard proof is $\sim$18\,ms (Table~\ref{tab:perf-frihash}); executed in the zkVM, with precompile acceleration, this translates to on the order of $10^7$--$10^8$ RISC-V cycles, i.e., seconds of proving time per shard-round on server-class hardware. This is orders of magnitude more expensive than the shard's own proving, which is exactly why aggregation is kept off the critical path: it runs behind the certified tip at checkpoint cadence and never delays a round.

The output of each step is a compressed STARK of constant size, transparent (no trusted setup), with hash-based assumptions only. Where an external system wants to check the checkpoint cheaply (e.g., a bridge contract), the compressed STARK can additionally be wrapped into a Groth16 or PLONK proof of a few hundred bytes; the audit path described below does not depend on such a wrapping and retains the transparent STARK end to end.

\subsection{The Aggregation Prover}
\label{sec:aggregation-prover}

Producing $\Pi_n$ is a pure computation over public data (Section~\ref{sec:data-availability}): the statement is deterministic, requires no private inputs and no authorization, and its output is universally verifiable. No particular operator has a privileged role in its production.

Any operator with a mirror of the round archive can extend the latest checkpoint and publish $(\Pi_{n'},\mathsf{pv}_{n'})$ for $n'>n$. Acceptance is mechanical: the proof must verify, its public values must extend the recorded checkpoint, and the claimed final root must match the folded shard-root transitions. BFT Core validators are natural operators because they already retain the round artifacts, but they hold no privileged proving capability; an independent archive mirror produces the same proof. How deployments arrange the operation of this optional service is outside the protocol statement analyzed here. If no prover is available, immediate certificate-based operation is unaffected and only audit latency increases.

\subsection{Data Availability}
\label{sec:data-availability}

Two distinct data planes must remain available.

\emph{Serving plane.} The contents of each shard's tree---the recorded keys and values---are needed to serve inclusion and non-inclusion proofs to users. This state is replicated within the shard's own validator cluster (Section~\ref{sec:practical}); its loss is a liveness failure of that shard, not a safety failure of the system, since the certified roots and the append-only discipline persist independently.

\emph{Audit plane.} The per-round artifacts consumed by the aggregation prover: for each round $i$, the tuples $(r_{i-1,\sigma}, r_{i,\sigma}, \pi_{i,\sigma})$ for every changed shard, the active sharding scheme and any split witnesses, the global root $R_i$, the issued certificates, the Trust Base entries at configuration changes, and the accepted checkpoint proofs. All of these are public by design. The insertion batches are not public inputs to the \emph{shard} proofs and are not needed for aggregation, although the unmasked shard STARK does not itself claim witness confidentiality. The artifacts are published to a content-addressed round archive replicated by the shards and the BFT Core validators, which anyone may mirror. At $\sim$1--2\,MB per changed shard per round, archive growth is proportional to the number of shard transitions rather than transaction volume.

The aggregate proof gives the archive a bounded retention requirement: once a span of rounds has been folded into an accepted checkpoint (plus a safety margin for independent re-verification), its artifacts may be pruned, because $\Pi_n$ subsumes them. Symmetrically, the archive gives the prover role its permissionless character: a new prover bootstraps from the latest accepted checkpoint and the archive tail, with no handover, registration, or historical sync beyond the unpruned window. New provers can join permissionlessly.

\subsection{Maximalist Validation Procedure}
\label{sec:maxi-validation}

A user applying the maximalist model of Section~\ref{sec:maximalist} validates as follows.

\emph{Setup, once:} obtain the network instance parameters $(\alpha, g)$ and the aggregation program's verifying-key digest from the software distribution. This is the entire root of trust of the audit path.

\emph{Per audit:}
\begin{enumerate}[nosep]
  \item Obtain the latest checkpoint $(\Pi_n, \mathsf{pv}_n)$ from any source; the source need not be trusted. Verify the STARK $\Pi_n$ against the known verifying-key digest, and check $\mathsf{pv}_n.\alpha = \alpha$ and $\mathsf{pv}_n.g = g$. This takes milliseconds to seconds on commodity hardware.
  \item For each certificate in the received token's history that references a round $j \leq n$ with sharding-scheme commitment $c_j$ and global root $R_j$: verify the MMR membership of $(j,c_j,R_j)$ in $\mathsf{pv}_n.D_n$ (a logarithmic-size path, obtainable from the round archive). This confirms both that the key was routed under the authenticated scheme and that the root anchoring its inclusion proof lies on the proven, non-forking history.
  \item For certificates younger than the checkpoint ($j > n$): validate them against the Trust Base as in the practical model (Section~\ref{sec:practical}), and re-audit when the next checkpoint arrives.
\end{enumerate}

If the user ever encounters two verifying checkpoints, or a verifying checkpoint and a quorum-signed certificate, that assign different global roots to the same round, this pair is publishable evidence of committee equivocation, and the user rejects the affected token history.

\subsection{Residual Trust Analysis}
\label{sec:residual-trust}

The procedure above makes the reduction of assumptions precise. Under collision resistance of the hash function and soundness of the two proof systems (the shard STARKs and the aggregation zkVM), the following hold \emph{unconditionally}, with no assumption about the validator set:

\begin{enumerate}[nosep]
  \item every round transition in the proven history satisfied append-only consistency (Definition~\ref{def:aoc}, Theorem~\ref{thm:history})---no recorded token state was ever deleted, modified, or re-recorded, hence no double-spend is representable within that history; and
  \item the proven history is linear: each round extends its predecessor, and $D_n$ commits to the unique sequence of global roots.
\end{enumerate}

The honest-quorum assumption on the BFT Core is thereby removed from the correctness argument for the proven history. Two consensus-level properties remain: \emph{liveness} (rounds continue to be certified and proofs remain available for aggregation) and \emph{uniqueness of the tip} (a quorum could sign two divergent continuations, each internally consistent and each separately provable). The latter cannot be excluded by any proof system, since it is a statement about signing behavior rather than computation; however, two conflicting artifacts make it detectable and attributable to specific signing keys. In summary, correctness of the recorded history rests on cryptography, whereas immediate progress and a unique live tip retain the standard BFT quorum assumptions.

\section{Summary}

Succinct argument systems offer a powerful method for proving the execution of a computation, in our case, checking consistency proofs of a distributed cryptographic data structure. For use cases with small changesets, a simple hash-based proof, whose size is linear in the batch size, is optimal. However, as batch sizes increase and bandwidth becomes a constraint, the constant or near-constant size proofs generated by STARK systems become more advantageous.

A second axis of optimization is the scope of the in-circuit statement itself. Drawing that scope correctly requires care: the seemingly sufficient structure-only statement admits a concrete cross-round equivocation attack (Remark~\ref{rem:placement}). The right kernel---\emph{append-only consistency}, i.e., prior-state preservation plus coherent placement of insertions, made locally checkable by the region-committing node hash---is proved sufficient in Theorem~\ref{thm:history}, while completeness and service properties remain self-policed by the protocol around the public root. This still turns a verification problem that naively would require reconstructing the globally unique tree shape into one with a tight, hand-built AIR. The result is the measured proving throughput reported in Section~\ref{sec:measured}: a single Aggregator on a single CPU sustainably exceeds $10\,000$ insertions per second with a 1.7\,MB transparent proof and tens of milliseconds verification time. This shows that operating a very large-scale Aggregation Layer with cryptographically verified updates is practical today, with no GPU farm or trusted setup, and on a power budget compatible with commodity hardware.

Above the per-round proofs sits the aggregation mechanism of Section~\ref{sec:aggregation-audit}: the consistency proofs of all shards and the Consensus Layer's own state transitions are folded, by recursion on a zkVM, into a single fixed-size transparent proof of the correctness of the system's entire operating history. The two validation paths complement each other. The pragmatic path, anchored in the Unicity Trust Base, delivers finality within seconds under the standard BFT quorum assumption; the audit path, available with a delay of minutes to hours, retrospectively removes that assumption from the correctness of the recorded history. The consistency-proof mechanism now carries a formal correctness proof (Theorem~\ref{thm:history} and its corollaries), built in two tiers: a single-insertion theory in which a non-inclusion certificate alone drives each update and no placement check is needed at all (Section~\ref{sec:single-key}), and the batched transcript whose placement checks are proved to enforce the same map-level guarantee. In both tiers, append-only consistency of the entire certified history reduces to the collision resistance of the underlying hash function. A machine-checked formalization of this proof is the natural next step.

\bibliographystyle{plain}
\bibliography{aggregation-layer}

@misc{wp,
  author = {{The Unicity Developers}},
  title = {Unicity Whitepaper},
  year = {2025},
  publisher = {{GitHub}},
  journal = {{GitHub} repository},
  howpublished = {\url{https://github.com/unicitynetwork/whitepaper/releases/tag/latest}}
}

@misc{sp1,
  author = {{Succinct Labs}},
  title = {{SP1}},
  year = {2025},
  publisher = {{GitHub}},
  journal = {{GitHub} repository},
  howpublished = {\url{https://github.com/succinctlabs/sp1}}
}

@misc{mmr,
  author = {Peter Todd},
  title = {Merkle Mountain Ranges},
  year = {2012},
  howpublished = {{OpenTimestamps} documentation},
  url = {https://github.com/opentimestamps/opentimestamps-server/blob/master/doc/merkle-mountain-range.md}
}

@inproceedings{valiant2008ivc,
  author = {Paul Valiant},
  title = {Incrementally Verifiable Computation or Proofs of Knowledge Imply Time/Space Efficiency},
  booktitle = {Theory of Cryptography ({TCC} 2008)},
  series = {LNCS},
  volume = {4948},
  publisher = {Springer},
  year = {2008}
}

@misc{bitcoin,
  author = {Satoshi Nakamoto},
  title = {Bitcoin: A Peer-to-Peer Electronic Cash System},
  howpublished = {White paper},
  url = {http://www.bitcoin.org/bitcoin.pdf},
  year = 2009
}

@misc{rsmtair,
  author = {Risto Laanoja},
  title = {{rsmt-air}: {Plonky3} {AIR} for {RSMT} Consistency Proofs},
  year = {2026},
  publisher = {{GitHub}},
  journal = {{GitHub} repository},
  howpublished = {\url{https://github.com/ristik/rsmt-air}}
}

@misc{plonky3,
  author = {{Polygon Zero Team}},
  title = {{Plonky3}: A Toolkit for {SNARK} and {STARK} Backends},
  year = {2025},
  publisher = {{GitHub}},
  journal = {{GitHub} repository},
  howpublished = {\url{https://github.com/Plonky3/Plonky3}}
}

@misc{cryptoeprint:2023/323,
  author = {Lorenzo Grassi and Dmitry Khovratovich and Markus Schofnegger},
  title = {{Poseidon2}: A Faster Version of the {Poseidon} Hash Function},
  howpublished = {Cryptology {ePrint} Archive, Paper 2023/323},
  year = {2023},
  url = {https://eprint.iacr.org/2023/323}
}

@misc{cryptoeprint:2022/1530,
  author = {Ulrich Hab\"ock},
  title = {Multivariate Lookups Based on Logarithmic Derivatives},
  howpublished = {Cryptology {ePrint} Archive, Paper 2022/1530},
  year = {2022},
  url = {https://eprint.iacr.org/2022/1530}
}

@inproceedings{dahlberg2016smt,
  author = {Rasmus Dahlberg and Tobias Pulls and Roel Peeters},
  title = {Efficient Sparse {Merkle} Trees: Caching Strategies and Secure (Non-)Membership Proofs},
  booktitle = {21st Nordic Conference on Secure {IT} Systems ({NordSec} 2016)},
  series = {LNCS},
  volume = {10014},
  publisher = {Springer},
  year = {2016},
  url = {https://eprint.iacr.org/2016/683}
}

@misc{rfc6962,
  author = {Ben Laurie and Adam Langley and Emilia K\"asper},
  title = {Certificate Transparency},
  howpublished = {{RFC} 6962, {IETF}},
  year = {2013},
  url = {https://www.rfc-editor.org/rfc/rfc6962}
}

@inproceedings{zerocash,
  author = {Eli Ben-Sasson and Alessandro Chiesa and Christina Garman and Matthew Green and Ian Miers and Eran Tromer and Madars Virza},
  title = {Zerocash: Decentralized Anonymous Payments from {Bitcoin}},
  booktitle = {2014 {IEEE} Symposium on Security and Privacy},
  year = {2014},
  url = {https://eprint.iacr.org/2014/349}
}

@misc{tornado,
  author = {Alexey Pertsev and Roman Semenov and Roman Storm},
  title = {{Tornado Cash} Privacy Solution, Version 1.4},
  year = {2019},
  howpublished = {Whitepaper},
  url = {https://berkeley-defi.github.io/assets/material/Tornado\%20Cash\%20Whitepaper.pdf}
}

@misc{exemodel,
  author = {Ahto Buldas and Dirk Draheim and Mike Gault and Risto Laanoja and Vladimir Rogojin and Ahto Truu},
  title = {The {Unicity} Execution Layer},
  year = {2026},
  howpublished = {{arXiv} preprint {arXiv}:2606.02181},
  url = {https://arxiv.org/abs/2606.02181}
}

@misc{predicates,
  author = {Ahto Buldas and Dirk Draheim and Mike Gault and Risto Laanoja and Vladimir Rogojin and Ahto Truu},
  title = {{Unicity}: Predicates and Atomic Swaps},
  year = {2026},
  howpublished = {{arXiv} preprint {arXiv}:2606.02192},
  url = {https://arxiv.org/abs/2606.02192}
}

@misc{polygonzkevm,
  author = {{Polygon zkEVM Team}},
  title = {{Polygon zkEVM}: A {ZK}-Rollup Compatible with {Ethereum}},
  year = {2023},
  howpublished = {Technical documentation},
  url = {https://docs.polygon.technology/zkEVM/}
}

@misc{babybear,
  author = {{Polygon Zero Team}},
  title = {{BabyBear}: A 31-bit Prime Field for {SNARK}s},
  year = {2024},
  publisher = {{GitHub}},
  journal = {{GitHub} repository},
  howpublished = {\url{https://github.com/Plonky3/Plonky3/tree/main/baby-bear}}
}

\end{document}